\newcommand{\quickset}[1]{\left\lbrace #1 \right\rbrace}
\newtheorem{notation}{Notation}
\newcommand{\segment}[2]{\llbracket #1, #2 \rrbracket}
\newcounter{algorithm}
\newtheorem{algo}[algorithm]{Algorithm}{\bfseries}{\itshape}
\title{Self-Stabilization and Byzantine Tolerance for Maximal Independent Set} 
\author{Johanne Cohen}{LISN-CNRS, Universit\'e Paris-Saclay, France}{Johanne.Cohen@lri.fr}{https://orcid.org/0000-0002-9548-5260}{}
\author{ Laurence Pilard\footnote{Optional footnote, e.g. to mark corresponding author}}{LI-PaRAD,  UVSQ, Universit\'e Paris-Saclay, France,}{Laurence.Pilard@uvsq.fr}{https://orcid.org/0000-0002-1104-8216}{}
\author{Jonas Sénizergues\footnote{corresponding author}}{LISN-CNRS, Universit\'e Paris-Saclay, France}{Johanne.Cohen@lri.fr}{https://orcid.org/0000-0002-9548-5260}{}
\authorrunning{J. Cohen, L. Pilard, and  J. Sénizergues } 
\keywords{Maximal Independent Set, Self-Stabilizing Algorithm} 
\newcommand{\motnouveau}[1]{\emph{#1}}
\newcommand{\Nat}{\mathbb{N}}
\newcommand{\ga}{\gamma_1}
\newcommand{\gb}{\gamma_2}
\newcommand{\E}{\mathcal{E}}
\newcommand{\R}{\mathcal{R}}
\newcommand{\last}{\operatorname{last}}
\begin{document}

\maketitle

\begin{abstract} We analyze the impact of transient and Byzantine faults on the construction of a maximal independent set  in a general network. We adapt the self-stabilizing algorithm presented by Turau~\cite{turau2007linear} for computing such a vertex set. Our algorithm  is   self-stabilizing and also works under the more difficult context of arbitrary Byzantine faults. 

Byzantine nodes can prevent nodes close to them from taking part in the  independent set  for an arbitrarily long time.
We give boundaries to their impact by focusing on the set of all nodes excluding nodes at distance 1 or less of Byzantine nodes, and excluding some of the nodes at distance 2.
As far as we know, we present the first algorithm tolerating both transient and Byzantine faults under the fair distributed daemon. 

We prove that this algorithm converges in $ \mathcal O(\Delta n)$ rounds w.h.p., where $n$ and $\Delta$ are the size and the maximum degree of the network, resp. Additionally, we present a modified version of this algorithm for anonymous systems under the adversarial  distributed daemon that converges  in 
 $ \mathcal O(n^{2})$  expected number of steps.

  \end{abstract}
\section*{Introduction}

Maximal independent set has received a lot of attention in different areas.  For instance,  in wireless networks, the maximum independent sets can be used as a black box to perform communication (to collect or to broadcast information) (see~\cite{liu2017cooperative,gao2017novel}  for example).  In self-stabilizing distributed algorithms, this problem is also a fundamental tool to transform an algorithm from one model to another~\cite{grati2007, turau2006randomized}. 

 An \motnouveau{independent set} $I$ in a graph is a set of vertices such that no two of them form an edge in the graph. It is called \motnouveau{maximal} when it is maximal inclusion-wise (in which case it is also a minimal dominating set).

The maximal independent set (MIS) problem has been extensively studied in parallel and distributed settings, following the seminal works of~\cite{Alon1986,linial1987, luby1986}.  Their idea is based on the fact that a node  joins the ``MIS under construction'' $S$ according to the neighbors: node $v$ joins the set $S$ if it has no neighbor in $S$, and it leaves the set $S$ if at least one of its neighbors is in $S$.  Most algorithms in the literature, including ours, are based on this approach.

The MIS problem has been extensively studied in the {\sc Local} model, \cite{ghaffari2016improved,rozhovn2020polylogarithmic,censor2020derandomizing} for instance  (a synchronous, message-passing model of distributed computing in which messages can be arbitrarily large) and in the {\sc Congest} model~\cite{peleg2000distributed} (synchronous model where messages are $O(\log n)$ bits long). In the {\sc Local} model, Barenboim \emph{et al.}~\cite{barenboim2018locally} focus on identified system and gave a self-stabilizing algorithm producing a MIS within $O(\Delta + \log^{*} n)$ rounds.
Balliu \emph{et al}~\cite{balliu2019lower} prove that the previous algorithm~\cite{barenboim2018locally} is optimal for a wide range of parameters in the {\sc Local} model. In the {\sc Congest} model, Ghaffari \emph{et al.}~\cite{ghaffari2021improved} prove that there exists a randomized distributed algorithm that computes a maximal independent set in $O(\log \Delta \cdot \log \log n + \log^{6} \log n)$ rounds with high probability.

Self-stabilizing algorithms for maximal independent set have been designed in various models (anonymous network~\cite{shukla1995observations,turau2006randomized,turau2019making}  or not~\cite{goddard2003self,ikeda2002space}).
Up to our knowledge, Shukla \emph{et al.}~\cite{shukla1995observations} present the first algorithm designed for finding a MIS in a graph using self-stabilization paradigm for anonymous networks.  Some other self-stabilizing works deal with this problem assuming identifiers: with a synchronous daemon~\cite{goddard2003self} or distributed one~\cite{ikeda2002space}. These two works require $O(n^2)$ moves to converge. Turau~\cite{turau2007linear} improves these results to $O(n)$ moves under the distributed daemon. Recently, some works improved the results 
in the synchronous model. For non-anonymous networks,  Hedetniemi \cite{hedetniemi2021self} designed  a self-stabilization algorithm for solving the problem related to dominating sets in graphs in particular for a maximal independent set which stabilizes in $O(n)$ synchronous rounds. Moreover,  for  anonymous networks, Turau~\cite{turau2019making} 
 design some Randomized self-stabilizing algorithms for maximal independent set  w.h.p. in $O(\log  n)$ rounds.
 See the survey~\cite{guellati2010survey} for more details on MIS self-stabilizing algorithms. 
 
Some variant of the maximal independent set problem have been investigated, as for example the 1-maximal independent set problem \cite{tanaka2021self,SHI200477} or Maximal Distance-$k$ Independent Set \cite{benreguia2021selfstabilizing,johnen:hal-03138979}.   Tanaka 
\emph{et al}~\cite{tanaka2021self}  designed a silent self-stabilizing 1-MIS algorithm under the weakly-fair distributed daemon for any identified  network in  $O(nD)$ rounds (where $D$ is a diameter of the graph).

In this paper, we focus on the construction of a MIS handling both transient and Byzantine faults. 
On one side, transient faults can appear in the whole system, possibly impacting all nodes. However, these faults are not permanent, thus they stop at some point of the execution. 
Self-stabilization~\cite{dijkstra74} is the classical paradigm to handle transient faults. Starting from any arbitrary configuration, a self-stabilizing algorithm eventually resumes a correct behavior without any external intervention. On the other side, (permanent) Byzantine faults~\cite{lamport82} are located on some faulty nodes and so the faults only occur from them. However, these faults can be permanent, \emph{i.e.}, they could never stop during the whole execution.


In a distributed system, multiple processes can be active at the same time, meaning they are in a state where they could make a computation. 
The definition of self-stabilizing algorithm is centered around the notion of \motnouveau{daemon}, which captures the ways the choice of which process to schedule for the next time step by the execution environment can be made.
Two major types of daemon are the \motnouveau{sequential} and the \motnouveau{distributed} ones. A \motnouveau{sequential daemon} only allows one process to be scheduled for a given time step, while a \motnouveau{distributed daemon} allows the execution of multiple processes at the same time. 
Daemons can also be \motnouveau{fair} when they have to eventually schedule every process that is always activable, or \motnouveau{adversarial} when they are not fair.
As being distributed instead of sequential (or adversarial instead of fair) only allows for more possibilities of execution, it is harder to make an algorithm with the assumption of a distributed (resp. adversarial) daemon than with the assumption of a sequential (resp. fair) daemon.

We introduce the possibility that some nodes, that we will call \motnouveau{Byzantine} nodes, are not following the rules of the algorithm, and may change the values of their attributes at any time step. Here, if we do not work under the assumption of a fair daemon, one can easily see that we cannot guarantee the convergence of any algorithm as the daemon could choose to always activate alone the same Byzantine node, again and again. 

Under the assumption of a fair daemon, there is a natural way to express complexity, not in the number of moves performed by the processes, but in the number of \motnouveau{rounds}, where a round captures the idea that every process that wanted to be activated at the beginning of the round has either been activated or changed its mind. We give a self-stabilizing randomized algorithm that, in an 
\motnouveau{anonymous} network (which means that processes do not have unique identifiers to identify themselves) 
with Byzantine nodes under the assumption of a distributed fair daemon, finds a maximal independent set of a superset of the nodes at distance $3$ or more from Byzantine nodes. We show that  the algorithm  stabilizes in $O(\Delta n)$ rounds w.h.p., where $n$ is the size and $\Delta$ is the diameter of the underlying graph. 

In this paper, we first present the model (Section~\ref{sec:model}). Then, we give the self-stabilizing randomized algorithm with Byzantine nodes under the fair daemon  (Section~\ref{sec:Byzantine}). Finally, we  shortly describe the  
the self-stabilizing randomized algorithm in anonymous system under the adversarial daemon (Section~ \ref{sec:anonymous}).

Then, in the last part, we give a self-stabilizing randomized algorithm that finds a maximal independent set in an anonymous network, under the assumption of a distributed adversarial daemon. We show that the expected number of moves for the algorithm to stabilize is $O(n^2)$.

\section{Model}\label{sec:model}
A system consists of a set of processes where two adjacent processes can communicate with each other. The communication relation is represented by a graph $G = (V,E)$ where $V$ is the set of the processes (we will call \motnouveau{node} any element of $V$ from now on) and $E$ represents the neighbourhood relation between them, \emph{i.e.}, $uv \in E$ when $u$ and $v$ are adjacent nodes.
By convention we write $|V|=n$ and $|E|=m$. If $u$ is a node, $N(u)=\quickset{v \in V | uv \in E}$ denotes the open neighbourhood, and $N[u]=N(u) \cup \quickset{u}$ denotes the closed neighbourhood. We note $deg(u) = |N(u)|$ and $\Delta = \max \quickset{deg(u)| u\in V}$.


We assume the system to be \motnouveau{anonymous} meaning that a node has no identifier.
We use the \motnouveau{state model}, which means that each node has a set of \motnouveau{local variables} which make up the \motnouveau{local state} of the node. 
A node can read its local variables and all the local variables of its neighbours, but can only rewrite its own local variables. A \motnouveau{configuration} is the value of the local states of all nodes in the system.
When $u$ is a node and $x$ a local variable, the \motnouveau{$x$-value} of $u$ in configuration $\gamma$ is the value $x_u^{\gamma}$.

An algorithm is a set of \motnouveau{rules}, where each rule is of the form $\langle guard \rangle \rightarrow \langle command \rangle$ and is parametrized by the node where it would be applied.
The \motnouveau{guard} is a predicate over the variables of the said node and its neighbours. The \motnouveau{command} is a sequence of actions that may change the values of the node's variables (but not those of its neighbours).
A rule is \motnouveau{enabled} on a node $u$ in a configuration $\gamma$ if the guard of the rule holds on $u$ in $\gamma$. A node is \motnouveau{activable} on a configuration $\gamma$ if at least one rule is enabled on $u$.  
We call \motnouveau{move} any ordered pair $(u,r)$ where $u$ is a node and $r$ is a rule.
A move is said \motnouveau{possible} in a given configuration $\gamma$ if $r$ is enabled on $u$ in $\gamma$.

The \motnouveau{activation} of a rule on a node may only change the value of variables of that specific node, but multiple moves may be performed at the same time, as long as they act on different nodes. To capture this, we say that a set of moves $t$ is \motnouveau{valid} in a configuration $\gamma$ when it is non-empty, contains only possible moves of $\gamma$, and does not contain two moves concerning the same node.
Then, a \motnouveau{transition} is a triplet $(\gamma,t,\gamma')$ such that:
(i) $t$ is a valid set of moves of $\gamma$ and (ii) $\gamma'$ is a possible configuration after every node $u$ appearing in $t$ performed simultaneously the code of the associated rule, beginning in configuration $\gamma$.
We will write such a triplet as $\gamma \xrightarrow{t} \gamma'$. We will also write $\gamma \rightarrow \gamma'$ when there exists a transition from $\gamma$ to $\gamma'$.  $V(t)$ denotes the set of nodes that appear as first member of a couple in $t$. 

We say that a rule $r$ is \motnouveau{executed} on a node $u$ in a transition $\gamma \xrightarrow{t} \gamma'$ (or equivalently that the move $(u,r)$ is \motnouveau{executed} in $\gamma \xrightarrow{t} \gamma'$) when the node $u$ has performed the rule $r$ in this transition, that is when $(u,r) \in t$. In this case, we say that $u$ has been \motnouveau{activated} in that transition. 
Then, an \motnouveau{execution} is an alternate sequence of configurations and move sets $\gamma_0,t_1,\gamma_1 \dotsm t_i,\gamma_i, \dotsm $ where (i) the sequence either is infinite or finishes by a configuration and (ii) for all $i \in \Nat$ such that it is defined, $(\gamma_i,t_{i+1},\gamma_{i+1})$ is a transition.
We will write such an execution as $\gamma_0\xrightarrow{t_1}\gamma_1 \dotsm \xrightarrow{t_i}\gamma_i \dotsm $
When the execution is finite, the last element of the sequence is the \motnouveau{last configuration} of the execution.
An execution is \motnouveau{maximal} if it is infinite, or it is finite and no node is activable in the last configuration. It is called \motnouveau{partial} otherwise.
We say that a configuration $\gamma'$ is \motnouveau{reachable} from a configuration $\gamma$ if there exists an execution starting in configuration $\gamma$ that leads to configuration $\gamma'$. We say that a configuration is \emph{stable} if no node is activable in that configuration.

The \motnouveau{daemon} is the adversary that chooses, from a given configuration, which nodes to activate in the next transition. Two types are used: the \motnouveau{adversarial distributed daemon} 
 that allows all possible executions and  the \motnouveau{fair distributed daemon} 
that only allows executions where nodes cannot be continuously activable without being eventually activated.


Given a specification and $\mathcal{L}$ the associated set of \motnouveau{legitimate configuration}, \emph{i.e.}, the set of the configurations that verify the specification, a probabilistic algorithm is \motnouveau{self-stabilizing} when these properties are true: (\motnouveau{correctness}) every configuration of an execution starting by a configuration of $\mathcal{L}$ is in $\mathcal{L}$ and (\motnouveau{convergence}) from any configuration, whatever the strategy of the daemon, the resulting execution eventually reaches a configuration in $\mathcal{L}$ with probability~$1$.

The time complexity of an algorithm that assumes the fair distributed daemon is given as a number of \motnouveau{rounds}. The concept of round was introduced by Dolev \emph{et al.}~\cite{doismo97}, and reworded by Cournier \emph{et al.}~\cite{codevi2006} to take into account  activable nodes. We quote the two following definitions from  Cournier \emph{et al.}~\cite{codevi2006}: ``

\begin{definition}
\label{pif_disabling_action_df}
We consider that a node $u$ executes a \motnouveau{disabling action} in the transition $\ga \to \gb$ if $u$ (i) is activable in $\ga$, (ii) does not execute any rule in $\ga \to \gb$ and (iii) is not activable in $\gb$.
\end{definition}

The disabling action represents the situation where at least one neighbour of $u$ changes its local state in $\ga \to \gb$, and this change effectively made the guard of all rules of $u$ false in $\gb$. The time complexity is then computed capturing the speed of the slowest node in any execution through the round definition~\cite{doismo97}.

\begin{definition}
\label{pif_round_df}
Given an  execution $\E$, the first \emph{round} of $\E$ (let us call it $\R_1$) is the minimal prefix of $\E$ containing the execution of one action (the execution of a rule or a \emph{disabling action}) of every activable nodes from the initial configuration.
Let $\E'$ be the suffix of $\E$ such that $\E = \R_1 \E'$. The second round of $\E$ is the first round of $\E'$, and so on.''
\end{definition}
 
Observe that Definition~\ref{pif_round_df} is equivalent to Definition~\ref{my_round_df}, which is simpler in the sense that it does not refer back to the set of activable nodes from the initial configuration of the round.

\begin{definition}
\label{my_round_df}
Let $\E$ be an execution. A \emph{round} is a sequence of consecutive transitions in $\E$.
The first round begins at the beginning of $\E$; successive rounds begin immediately after the previous round has ended.
The current round ends once every node $u\in V$ satisfies at least one of the following two properties:
(i) $u$ has been activated in at least one transition during the current round or (ii) $u$ has been non-activable in at least one configuration during the current round.
\end{definition}

Our first algorithm is to be executed in the presence of \motnouveau{Byzantine} nodes; that is, there is a subset $B \subseteq V$ of adversarial nodes that are not bound by the algorithm. Byzantine nodes are always activable.  An activated Byzantine node is free to update or not its local variables. Finally, observe that in the presence of Byzantine nodes all maximal executions are infinite.
We denote by $d(u,B)$  the minimal (graph) distance between node $u$ and a Byzantine node, and we define for $i\in \mathbb{N}$:
$V_i = \quickset{u \in V | d(u,B)>i }$. 
Note that $V_0$ is exactly the set of non-Byzantines nodes, and that $V_{i+1}$ is exactly the set of nodes of $V_i$ whose neighbours are all in $V_i$.

When $i,j$ are integers, we use the 
standard mathematical notation for the integer segments:
 $\segment{i}{j}$ is the set of integers that are greater than or equal to $i$ and smaller than or equal to $j$ (\emph{i.e.} $\segment{i}{j}=[i,j]\cap \mathbb{Z} = \lbrace i, \dotsm, j \rbrace$).
$Rand(x)$ with $x \in [0,1]$ represents the random function that outputs $1$ with probability $x$, and $0$ otherwise.

\section{With Byzantines Nodes under the Fair Daemon}
\label{sec:Byzantine}

\subsection{The algorithm}

The algorithm builds a maximal independent set  represented by  a local variable~$s$. The  approach of the state of the art is the following: when two nodes are candidates to be in the independent set, then a local election  decides who will remain in the independent set. To perform a  local election,  the standard technique is to compare the identifiers of nodes. Unfortunately,  this mechanism is not robust to the presence of Byzantine nodes.

Keeping with the approach outlined above, when a node $u$ observes  that its neighbours are not in (or trying to be in) the independent set 
, the non-Byzantine node decides to join it with a certain probability.  The randomization helps to reduce the impact of Byzantine nodes. The choice of probability should reduce the impact of Byzantine nodes while maintaining the efficiency of the algorithm.

\begin{algo}
Any node $u$ has two local variables $s_u \in \quickset{\bot,\top}$ and $x_u \in \mathbb{N}$ and may make a move according to one of the following rules: \\
\textbf{(Refresh)} $x_u \not =  |N(u)| \rightarrow x_u := |N(u)| \quad (=deg(u))$ \\
\textbf{(Candidacy?)} $(x_u =  |N(u)|) \wedge (s_u=\bot)\wedge(\forall v \in N(u), s_v=\bot) \rightarrow$ \\ if $Rand(\frac{1}{1+\max(\quickset{x_v | v \in N[u]})})=1$ then $s_u := \top$\\
\textbf{(Withdrawal)} $(x_u = |N(u)|) \wedge (s_u=\top)\wedge(\exists v \in N(u), s_v =\top) \rightarrow s_u := \bot$
\end{algo}

Observe since we assume an anonymous setting, the only way to break symmetry is randomisation.  
The value of the probabilities for changing local variable must  carefully be chosen in order to reduce the impact of the  Byzantine node. 

A node joins the MIS with a probability $\frac{1}{1+\max(\quickset{x_v | v \in N[u]})}$. 
The idea to ask the neighbours about their own number of neighbours (through the use of the $x$ variable) to choose the probability of a candidacy comes from the mathematical property $\forall k \in \mathbb{N}, (1-\frac{1}{k+1})^{k} > e^{-1}$, which will allow to have a good lower bound for the probability of the event ``some node made a successful candidacy, but none of its neighbours did''.

\subsection{Specification}


Since Byzantine nodes are not bound to follow the rules, we cannot hope for a correct solution in the entire graph. What we wish to do is to find a solution that works when we are far enough from the Byzantine nodes. 
One could think about a fixed containment radius around Byzantine nodes, but as we can see later this is not as simple, and it does not work with our approach. 

Let us define on any configuration $\gamma$ the following set of nodes, that represents the already built independent set: 
$$I_{\gamma} = \quickset{u \in V_1 | (s_{u}^{\gamma}=\top) \wedge \forall v \in N(u), s_{v}^{\gamma}=\bot}$$
~\\[-1,4em]
We say that a node is \emph{locally alone} if it is candidate to be in the independent set (\emph{i.e.} its $s$-value is $\top$) while none of its neighbours are.
In configuration $\gamma$, $I_{\gamma}$ is the set of all locally alone nodes of $V_1$.   

\begin{definition}
A configuration is said \emph{legitimate} when $I_\gamma$ is a maximal independent set of $V_2 \cup I_\gamma$.
\end{definition}

\subsubsection{An example:}


\newcommand{\topo}[4]{
 \begin{tikzpicture}[scale=0.55]

\tikzstyle{vertex}=[fill=white, draw=black, shape=circle]
\tikzstyle{edge}=[-]

                 \node[style=vertex,shape =rectangle]   (b) at (-12, 0) {$b$};       
                 \node [style=vertex] (c) at (-10, 0) {$v_{1}$}; 
 	         \node [style=vertex] (d) at (-8, 0) {$v_{2}$};   		
	         \node [style=vertex] (e) at (-6, 0) {$v_{3}$};  
 
   		\draw [style=edge] (b) to (c);
 		\draw [style=edge] (c) to (d);
		 \draw [style=edge] (e) to (d);

		\draw (b.north) node[above]{#1} ;
 	         \draw (c.north) node[above]{#2} ;
 	         \draw (d.north) node[above]{#3} ;
 		\draw (e.north) node[above]{#4} ;
\end{tikzpicture}

}

Figure \ref{fig:byzantin} gives an example of an execution of the algorithm. 
Figure \ref{fig:byzantin}a depicts a network in a given configuration.  The symbol drawn above the node represents the local variable $s$. Each local variable $x$ contains the degree of its associated node. 
Byzantine node is shown with a square.  

In the initial configuration, nodes  $v_{1}$ and $v_{2}$ are  in the independent set, and then are activable for \textbf{Withdrawal}.  In the first step, the daemon activates $v_{1}$ (\textbf{Withdrawal}) and  $v_{2}$  (\textbf{Withdrawal}) leading to configuration $\gamma_1$ (Fig. \ref{fig:byzantin}b).
 In the second step, the daemon activates $v_{1}$ (\textbf{Candidacy?}). Node $v_{1}$ randomly decides whether to set $s_{v_{1}}:=\top $  leading to configuration $\gamma_2$ (Fig. \ref{fig:byzantin}c), or $s_{v_{1}}:=\bot $ leading to configuration $\gamma_1$ (Fig. \ref{fig:byzantin}b).
Assume that $v_{1}$  chooses $s_{v_{1}}:=\top $. At this moment, node $v_{1}$ is ``locally alone'' in the independent set.  In the third step, the daemon activates $b$ and  $b$ makes a Byzantine move setting $s_{b}:=\top $, leading to configuration $\gamma_3$ (Fig. \ref{fig:byzantin}d).

In the fourth step, the daemon activates $v_{1}$   (\textbf{Withdrawal}) and $b$ that sets $s_{b}:=\bot $.
  The configuration is now the same as the first configuration (Fig. \ref{fig:byzantin}b). The daemon is assumed to be fair, 
 so nodes $v_{2}$ and $v_{3}$ need to be activated before the execution can be called an infinite loop. These activations will prevent the node $v_1$ to alternate forever between in and out of the independent set, while the rest of the system remains out of it. 
  In the fifth step, the daemon activates $v_{1}$ (\textbf{Candidacy?}), $v_{2}$ (\textbf{Candidacy?}) and  $v_{3}$ (\textbf{Candidacy?}). They randomly decide to change their local variable $s$.
  Assume that $v_{1}$, $v_{2}$ and $v_{3}$ choose $s_{v_{1}}:=\top $, $s_{v_{2}}:=\bot $, and $s_{v_{3}}:=\top $, leading to configuration $\gamma_5$ (Fig. \ref{fig:byzantin}e). At this moment, node $v_{3}$ is ``locally alone'' in the independent set.  $v_3$ is far enough from the Byzantine node then it will remain in the independent set whatever $b$ does. 

\begin{figure}[h]
\centering
\begin{tabular}{p{4.1cm}p{4.1cm}p{4.1cm}}
 \topo{$\bot$}{$\top$}{$\top$}{$\bot$}  &  \topo{$\bot$}{$\bot$}{$\bot$}{$\bot$}  &  \topo{$\bot$}{$\top$}{$\bot$}{$\bot$}  \\[-1em]
~\begin{minipage}[h]{3.5cm}
(a) Initial configuration $\gamma_0$. $v_{1}$ and  $v_{2}$  execute rule \textbf{Withdrawal}. 
\end{minipage}
  & 
~\begin{minipage}[h]{3.5cm}
(b) Configuration $\gamma_1$: $v_{1}$  executes \textbf{Candidacy?}.  \\
\end{minipage}
  &
~\begin{minipage}[h]{3.5cm}
   (c) Configuration  $\gamma_2$: $b$ sets  $s_{b}:=\top $.\\
\end{minipage}
   \\[1,3em]
 \topo{$\top$}{$\top$}{$\bot$}{$\bot$}   &
\topo{$\bot$}{$\bot$}{$\bot$}{$\bot$}    &   \topo{$\bot$}{$\top$}{$\bot$}{$\top$}  
   \\[-1em]
~\begin{minipage}[h]{3.5cm}
(d) Configuration  $\gamma_3$: $v_{1}$ and $b$ change their $s$-value.\\
\end{minipage}
  & 
~\begin{minipage}[h]{3.5cm}
(e) Configuration  $\gamma_5$: Nodes $v_2$ and $v_3$ execute rule \textbf{Candidacy}
\end{minipage}
 &
 ~\begin{minipage}[h]{3.5cm}
(f) Configuration  $\gamma_6$: A $V_1$-stable configuration.
\end{minipage}
\end{tabular}
\caption{Execution}
\label{fig:byzantin}

\end{figure}

%
%
%
%



\subsubsection{About the specification:}

Our goal is to design an algorithm that builds a maximal independent set of the subgraph induced by a set of nodes where node ``too close'' to Byzantine nodes have been removed. 
The question now is to define what does ``too close'' mean. One could think about a fixed containment radius only excluding nodes at distance at most 1 from Byzantine nodes. This set of nodes has been previously defined as  $V_1$. Indeed, in Figure \ref{ex:spec}.(a),  $v_1$ and $v_2$ belongs to $V_1$ and their local view of the system is correct, then they have no reason to change their states. Moreover, Byzantine nodes are too far away to change that: whatever the value of the state of $v_0$, the view of $v_1$ remains correct. 
Thus a containment radius of 1 could seem correct. However, in Figure \ref{ex:spec}.(b), if the Byzantine node does not make any move, then $v_0$ remains in the MIS while $v_1$ remains out of it. Thus, in this example, if we only consider nodes in $V_1$, the $\top$-valued nodes of $V_1$ are not a MIS of $V_1$. If $V_1$ is not always a good choice, neither is $V_2$. See Figure \ref{ex:spec}.(c), as one can see that all nodes in $V_1$ will never change their local state. The same can be said for $V_k$ for any $k$, see example Figure \ref{ex:spec}.(d) for $V_3$. The solution is then to consider a set of nodes defined from a fixed containment radius to which we add locally alone neighboring nodes. The smallest containment radius that works with this approach is 3 (which corresponds to set $V_2$).  Note that it depends on the current configuration and not only on the underlying graph.

\subsubsection{About the choice of probability to join the MIS:} 

We could have gone with the same probability for every node, but that comes with the cost of making the algorithm very sensitive to the connectivity of the underlying graph. As we rely for convergence on the event where a node is candidate alone (\emph{i.e.} switch to $\top$ without other node doing the same in its neighborhood), the probability of progress in a given number of rounds would then be exponentially decreasing with the degrees of the graph.

We could have gone with something depending only on the degree of the node where the rule is applied. While it could have been an overall improvement over the uniform version above, the minoration of the probability of progress that can be made with a local scope is no better. We cannot exclude that a finer analysis would lead to a better overall improvement, but it would require to deal with far more complex math. On a smaller scale, we can also note that this choice would introduce a bias toward small degree nodes, while we might not want that (depending on the application).

Then, we have chosen the version where nodes takes into account their degree and what they know of the degree of their neighbors. On one hand, the first concern you could rise here would be the potential sabotage by Byzantine nodes. Here is the intuition of why this cannot be a problem here. If a node $u$ is at distance at least 2 from any Byzantine node and if $u$ is ``locally alone'' in the independent set, then whatever the Byzantine nodes do, $u$ will forever remain in the independent set. To maximize the harm done, the Byzantine nodes have to prevent indirectly such a node to join the independent set. To do so it has to maximize the probability of its neighbours to be candidate to the independent set. But Byzantine nodes cannot lie efficiently in that direction, as the probability is upper-bounded by the degree values of both the node and its non-Byzantine neighbours.
On the other hand, this choice allows us to adress the problem that we had with the previous solution. Here, we can indeed frame the probability to be candidate alone between two constant bounds with a simple local analysis. Thus, we can ensure that the convergence speed does not depend on the connectivity of the underlying graph. Again, on a smaller scale, we also greatly reduce the bias toward small degree nodes compared to the previous option.

%
%
%



\newcommand{\topoDeux}[3]{
 \begin{tikzpicture}[scale=0.55]

\tikzstyle{vertex}=[fill=white, draw=black, shape=circle]
\tikzstyle{edge}=[-]

                 \node[style=vertex,shape =rectangle]   (b) at (-12, 0) {$b$};       
                 \node [style=vertex] (c) at (-10, 0) {$v_{0}$}; 
 	         \node [style=vertex] (d) at (-8, 0) {$v_{1}$};   		
 
   		\draw [style=edge] (b) to (c);
 		\draw [style=edge] (c) to (d);

		\draw (b.north) node[above]{#1} ;
 	         \draw (c.north) node[above]{#2} ;
 	         \draw (d.north) node[above]{#3} ;
 \end{tikzpicture}
}

\newcommand{\topoTrois}[4]{
 \begin{tikzpicture}[scale=0.55]

\tikzstyle{vertex}=[fill=white, draw=black, shape=circle]
\tikzstyle{edge}=[-]

                 \node[style=vertex,shape =rectangle]   (b) at (-12, 0) {$b$};       
                 \node [style=vertex] (c) at (-10, 0) {$v_{0}$}; 
 	         \node [style=vertex] (d) at (-8, 0) {$v_{1}$};   		
	         \node [style=vertex] (e) at (-6, 0) {$v_{2}$};  

   		\draw [style=edge] (b) to (c);
 		\draw [style=edge] (c) to (d);
		 \draw [style=edge] (e) to (d);

		\draw (b.north) node[above]{#1} ;
 	         \draw (c.north) node[above]{#2} ;
 	         \draw (d.north) node[above]{#3} ;
	         \draw (e.north) node[above]{#4} ;

 \end{tikzpicture}
}

\newcommand{\topoCinq}[6]{
 \begin{tikzpicture}[scale=0.55]

\tikzstyle{vertex}=[fill=white, draw=black, shape=circle]
\tikzstyle{edge}=[-]

                 \node[style=vertex,shape =rectangle]   (b) at (-12, 0) {$b$};       
                 \node [style=vertex] (c) at (-10, 0) {$v_{0}$}; 
 	         \node [style=vertex] (d) at (-8, 0) {$v_{1}$};   		
	         \node [style=vertex] (e) at (-6, 0) {$v_{2}$};  
 	         \node [style=vertex] (f) at (-4, 0) {$v_{3}$};  
 	         \node [style=vertex] (g) at (-2, 0) {$v_{4}$};  

   		\draw [style=edge] (b) to (c);
 		\draw [style=edge] (c) to (d);
		 \draw [style=edge] (e) to (d);
		 \draw [style=edge] (e) to (f);
		 \draw [style=edge] (f) to (g);

		\draw (b.north) node[above]{#1} ;
 	         \draw (c.north) node[above]{#2} ;
 	         \draw (d.north) node[above]{#3} ;
	         \draw (e.north) node[above]{#4} ;
 		 \draw (f.north) node[above]{#5} ;
 		 \draw (g.north) node[above]{#6} ;

 \end{tikzpicture}
}
\newcommand{\topoSix}[7]{
 \begin{tikzpicture}[scale=0.55]

\tikzstyle{vertex}=[fill=white, draw=black, shape=circle]
\tikzstyle{edge}=[-]

                 \node[style=vertex,shape =rectangle]   (b) at (-12, 0) {$b$};       
                 \node [style=vertex] (c) at (-10, 0) {$v_{0}$}; 
 	         \node [style=vertex] (d) at (-8, 0) {$v_{1}$};   		
	         \node [style=vertex] (e) at (-6, 0) {$v_{2}$};  
 	         \node [style=vertex] (f) at (-4, 0) {$v_{3}$};  
 	         \node [style=vertex] (g) at (-2, 0) {$v_{4}$};  
  	         \node [style=vertex] (h) at (0, 0) {$v_{5}$};  

   		\draw [style=edge] (b) to (c);
 		\draw [style=edge] (c) to (d);
		 \draw [style=edge] (e) to (d);
		 \draw [style=edge] (e) to (f);
		 \draw [style=edge] (f) to (g);
 		 \draw [style=edge] (g) to (h);

		\draw (b.north) node[above]{#1} ;
 	         \draw (c.north) node[above]{#2} ;
 	         \draw (d.north) node[above]{#3} ;
	         \draw (e.north) node[above]{#4} ;
 		 \draw (f.north) node[above]{#5} ;
 		 \draw (g.north) node[above]{#6} ;
 		 \draw (h.north) node[above]{#7} ;

 \end{tikzpicture}
}

\begin{figure}[h]
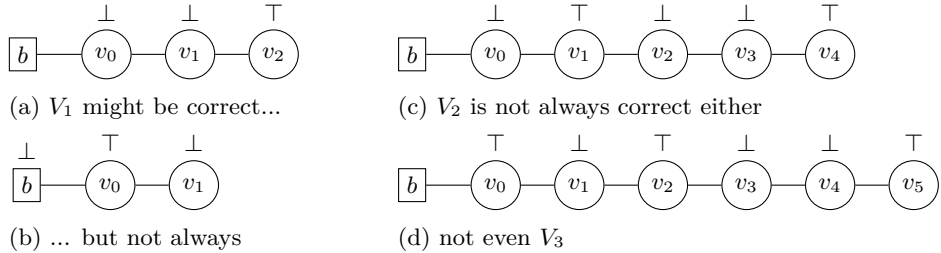

\label{ex:spec}
\begin{tabular}{p{4.7cm}p{6.1cm}}
\topoTrois{}{$\bot$}{$\bot$}{$\top$} & 
\topoCinq{}{$\bot$}{$\top$}{$\bot$}{$\bot$}{$\top$}\\
(a) $V_1$ might be correct...  & (c) $V_2$ is not always correct either
\end{tabular}
~\\
\begin{tabular}{p{4.7cm}p{6.1cm}}

\topoDeux{$\bot$}{$\top$}{$\bot$} & 
\topoSix{}{$\top$}{$\bot$}{$\top$}{$\bot$}{$\bot$}{$\top$}\\
(b) ... but not always & (d) not even $V_3$
\end{tabular}
\caption{What is the good containment radius?}

\end{figure}

\subsection{The proof}

%
%
%
%

Every omitted proof can be seen in the appendix.

We say that in a configuration $\gamma$, a node $u$ is \motnouveau{degree-stabilized} if rule \textbf{Refresh} is not enabled on it. The configuration  $\gamma$ is then said to be \motnouveau{degree-stabilized} if every non-Byzantine node is degree-stabilized.
Observe the two following facts about degree-stabilization (proofs in appendix):
\begin{lemma}
Any reachable configuration from a degree-stabilized configuration is degree-stabilized.
\end{lemma}


\begin{lemma} \label{bz-lem-stab-1round}
From any configuration $\gamma$, the configuration $\gamma'$ after one round is degree-stabilized.
\end{lemma}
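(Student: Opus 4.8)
The plan is to follow, along the first round, the fate of the variable $x_u$ of an arbitrary non-Byzantine node $u$, and to show that $x_u = deg(u)$ holds in the configuration $\gamma'$ reached at the end of that round; this is exactly the statement that rule \textbf{Refresh} is disabled on $u$ in $\gamma'$, i.e. that $u$ is degree-stabilized there, and since $u$ is arbitrary this gives the lemma.

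The first ingredient is a per-node sharpening of the preceding lemma: if $x_u = deg(u)$ holds in some configuration, then it holds in every configuration reachable from it. Indeed the communication graph is fixed, so $|N(u)| = deg(u)$ is a constant; the only rule whose command writes to $x_u$ is \textbf{Refresh}, and that command sets $x_u := deg(u)$; the commands of \textbf{Candidacy?} and \textbf{Withdrawal} write only to $s_u$; and no node, Byzantine or not, can modify the variables of a node other than itself. Hence, once $x_u = deg(u)$, rule \textbf{Refresh} is disabled on $u$ and no possible move can move $x_u$ away from $deg(u)$; a straightforward induction on transitions yields the claim.

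The second ingredient is Definition~\ref{my_round_df}: in $\gamma'$, every node of $V$ has, at some configuration of the round, either (i) been activated in a transition of the round, or (ii) been non-activable in a configuration of the round. Fix a non-Byzantine node $u$ and consider the first configuration $\delta$ of the round at which one of these holds for $u$. If (ii) holds at $\delta$, then in particular \textbf{Refresh} is not enabled on $u$ in $\delta$, hence $x_u^{\delta} = deg(u)$. If (i) holds, then $u$ executes some rule $r$ in the transition leaving $\delta$: if $r$ is \textbf{Refresh}, then $x_u = deg(u)$ in the configuration immediately after that transition; and if $r$ is \textbf{Candidacy?} or \textbf{Withdrawal}, then, since the guard of each of these rules contains the conjunct $x_u = |N(u)|$, already $x_u^{\delta} = deg(u)$. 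In every case $x_u = deg(u)$ holds in some configuration of the round occurring no later than $\gamma'$, so by the first ingredient $x_u^{\gamma'} = deg(u)$. As $u$ ranges over all non-Byzantine nodes, $\gamma'$ is degree-stabilized.

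There is no serious obstacle here; the only points deserving a line of care are that several nodes may move simultaneously in a transition, which is harmless because $deg(u)$ depends only on the fixed graph and only $u$ itself may write $x_u$, and that Byzantine nodes, which must be excluded from the final conclusion, never interfere with $x_u$ for a non-Byzantine $u$, so the case analysis above is unaffected by their presence.
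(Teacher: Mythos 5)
Your proof is correct and follows essentially the same route as the paper's: establish that $x_u=\deg(u)$ persists once reached, and use the round definition to show every non-Byzantine $u$ either was non-activable at some point (so \textbf{Refresh} was disabled, i.e.\ $x_u=\deg(u)$) or was activated, with every rule's guard or effect forcing $x_u=\deg(u)$. The only difference is cosmetic: the paper splits on whether $x_u^{\gamma}=\deg(u)$ at the start of the round, while you split on the first activation/disabling event of $u$, using the $x_u=|N(u)|$ conjunct in the guards of \textbf{Candidacy?} and \textbf{Withdrawal} to cover the case where $u$'s first move is not \textbf{Refresh} — an equivalent argument.
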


%
%


%
%


All  locally alone nodes in $V_{1}$ (\emph{i.e.} nodes of $I_\gamma$) remains  locally alone during the whole execution. 

\begin{lemma} \label{bz-lem-I-inc}

If $\gamma \rightarrow \gamma'$, $I_{\gamma} \subseteq I_{\gamma'}$.
\end{lemma}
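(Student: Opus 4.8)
Fix a transition $\gamma \xrightarrow{t} \gamma'$ and let $u \in I_{\gamma}$; I must show $u \in I_{\gamma'}$. First note that membership in $V_1$ depends only on the underlying graph $G$ and on $B$, not on the configuration, so $u \in V_1$ remains true in $\gamma'$; it therefore suffices to prove that $s_u^{\gamma'}=\top$ and that $s_v^{\gamma'}=\bot$ for every $v \in N(u)$. The starting observation is that $u$ and all its neighbours are non-Byzantine: since $d(u,B)\ge 2$ we have $u \notin B$, and for any $v \in N(u)$, $d(v,B)\ge d(u,B)-1\ge 1$, so $v \notin B$. Hence $u$ and every $v \in N(u)$ are bound by the three rules of the algorithm, and the guard of whichever rule they execute (if any) in $\gamma \xrightarrow{t}\gamma'$ is evaluated in $\gamma$.

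Next I argue that $u$ does not leave. In $\gamma$ we have $s_u^\gamma=\top$ and $s_v^\gamma=\bot$ for all $v\in N(u)$. Among the three rules, only \textbf{Withdrawal} can change $s_u$ away from $\top$, but its guard requires $\exists v\in N(u), s_v=\top$, which is false in $\gamma$; \textbf{Candidacy?} requires $s_u=\bot$, so it is not enabled either; and \textbf{Refresh} modifies only $x_u$. Therefore, in $\gamma \xrightarrow{t}\gamma'$, node $u$ either performs no move or performs only \textbf{Refresh}, and in both cases $s_u^{\gamma'}=\top$.

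Finally I argue that no neighbour of $u$ joins. Let $v\in N(u)$; in $\gamma$, $s_v^\gamma=\bot$. The only rule that can set $s_v:=\top$ is \textbf{Candidacy?}, whose guard requires $\forall w\in N(v), s_w=\bot$; but $u\in N(v)$ and $s_u^\gamma=\top$, so this guard is false in $\gamma$ and \textbf{Candidacy?} is not enabled on $v$. The rules \textbf{Withdrawal} (enabled only when $s_v=\top$) and \textbf{Refresh} (affecting only $x_v$) cannot set $s_v$ to $\top$ either. Hence $s_v^{\gamma'}=\bot$. Combining the two paragraphs, $s_u^{\gamma'}=\top$ and $s_v^{\gamma'}=\bot$ for all $v\in N(u)$, i.e. $u\in I_{\gamma'}$, which proves $I_\gamma\subseteq I_{\gamma'}$.

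The only point requiring care — and the ``main obstacle'', though it is mild — is the atomicity/simultaneity of a transition: several nodes act at once, so one must be explicit that each guard is evaluated in the source configuration $\gamma$, which is exactly what makes the falsity of the \textbf{Withdrawal} guard on $u$ and of the \textbf{Candidacy?} guard on each neighbour $v$ in $\gamma$ sufficient to conclude, regardless of what other nodes do in the same step.
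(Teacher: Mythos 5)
Your proof is correct and follows essentially the same route as the paper's: you check that in $\gamma$ neither \textbf{Withdrawal} (on $u$) nor \textbf{Candidacy?} (on any neighbour $v$) can be enabled, so only \textbf{Refresh} moves are possible for $u$ and its neighbourhood and the relevant $s$-values are unchanged. You are in fact slightly more explicit than the paper in noting that $u\in V_1$ forces $u$ and all its neighbours to be non-Byzantine (hence rule-bound) and that guards are evaluated in the source configuration, which are exactly the points the paper leaves implicit.
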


%
%

We now focus on the progression properties of enabled rules after one round.  We start with the
\textbf{Withdrawal} rule.  When the  \textbf{Withdrawal} rule is enabled on a node $u \in V_{1}$,  the conflict is solved by either making $u$ locally alone  or setting $s_{u} = \bot $.

\begin{lemma} \label{bz-lem-withdrawal}
If $\gamma$ is a degree-stabilized configuration and if \textbf{Withdrawal} is enabled on $u\in V_1$ 
then after one round either \textbf{Withdrawal} has been executed on $u$ or in the resulting configuration $\gamma'$ we have $u \in I_{\gamma'}$.
\end{lemma}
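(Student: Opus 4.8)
The plan is to track the node $u$ on which \textbf{Withdrawal} is enabled in $\gamma$ and to show that within one round it is either activated (the first alternative) or ends up in $I_{\gamma'}$ (the second alternative). First I would observe that since $\gamma$ is degree-stabilized, by the first lemma every configuration reachable during this round is still degree-stabilized, so rule \textbf{Refresh} never interferes: in particular, on $u$ and its neighbours the clauses $x_w = |N(w)|$ stay true throughout the round, which means the guards of \textbf{Candidacy?} and \textbf{Withdrawal} depend only on the $s$-values. I would also note that $u\in V_1$, so all neighbours of $u$ are non-Byzantine and hence bound by the algorithm.

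Next I would do a case analysis on what happens to $u$ during the round. The key point is that \textbf{Withdrawal} is enabled on $u$ as long as $s_u=\top$ and some neighbour has $s$-value $\top$. Since $\gamma$ is degree-stabilized, $u$ is activable in $\gamma$, so by the definition of a round (Definition~\ref{my_round_df}), before the round ends either $u$ is activated in some transition, or $u$ becomes non-activable in some intermediate configuration $\gamma''$. In the first subcase, the only rule that can be executed on $u$ while $s_u=\top$ is \textbf{Withdrawal} (since \textbf{Candidacy?} requires $s_u=\bot$ and \textbf{Refresh} is disabled), so we land in the first alternative of the conclusion — unless $u$ has meanwhile been driven to $s_u=\bot$, which can only happen if $u$ itself executed \textbf{Withdrawal} earlier, again the first alternative. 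So assume $u$ is never activated during the round; then it must become non-activable in some $\gamma''$. Because \textbf{Refresh} stays disabled and $u$ was never activated, we still have $s_u=\top$ in $\gamma''$; the only way \textbf{Withdrawal} becomes disabled on a node with $s_u=\top$ is that all its neighbours have $s$-value $\bot$ in $\gamma''$, i.e. $u\in I_{\gamma''}$. Finally, by Lemma~\ref{bz-lem-I-inc} applied along the remaining transitions of the round, $I_{\gamma''}\subseteq I_{\gamma'}$, so $u\in I_{\gamma'}$, the second alternative.

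The main obstacle I anticipate is the bookkeeping around the round definition: I must be careful that "$u$ becomes non-activable in an intermediate configuration" genuinely forces $u\in I_{\gamma''}$ and not some spurious situation — this is where I need degree-stabilization to rule out the \textbf{Refresh} guard being the reason $u$ is (non-)activable, and where I need the observation that $u$ cannot silently switch from $\top$ to $\bot$ without itself moving. A secondary subtlety is that between $\gamma$ and the configuration where $u$ is activated, neighbours may oscillate, so I should phrase the argument purely in terms of "the first time $u$ is activated or becomes disabled", which is well-defined within the round, rather than trying to describe the full trajectory. Once those two points are nailed down, the rest is a direct combination of Lemma~\ref{bz-lem-I-inc} and the round definition.
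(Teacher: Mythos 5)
Your proof is correct and follows essentially the same route as the paper's: a case split, enabled by degree-stabilization (which rules out \textbf{Refresh}), on whether $u$ is activated during the round (in which case its first move while $s_u=\top$ can only be \textbf{Withdrawal}) or becomes disabled without being activated (in which case all its neighbours must be $\bot$ while $s_u=\top$). The only cosmetic difference is that you conclude $u\in I_{\gamma''}$ at the intermediate configuration where $u$ is disabled and transport it to the end of the round via Lemma~\ref{bz-lem-I-inc}, whereas the paper argues directly that every neighbour of $u$ is $\bot$ in the end-of-round configuration; both are valid.
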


\begin{proof}
Since $\gamma$ is degree-stabilized, no \textbf{Refresh} move can be executed in any future transition. Then, since $s_u^{\gamma}=\top$, only \textbf{Withdrawal} can be executed on $u$ or any of its non-Byzantine neighbours until $u$ has been activated. Since $u \in V_1$, it is in fact true for every neighbour of $u$.


\noindent Since $u$ is activable in $\gamma$, we have two cases:
(i) If $u$ has performed a \textbf{Withdrawal} move in the next round there is nothing left to prove.
(ii) If it is not the case, $u$ must have been unactivated by fairness hypothesis, that means that each neighbour $v\in N(u)$ that had $s$-value $\top$ in $\gamma$ have been activated. By the above, they must have performed a \textbf{Withdrawal} move that changed their $s$-value to $\bot$. Also $u$ is supposed not to have performed any rule, so it keeps $s$-value $\top$ in the whole round: its neighbours -that are non-Byzantine since $u\in V_1$- cannot perform any \textbf{Candidacy} move. As such, in the configuration $\gamma'$ at the end of the round, every neighbour of $u$ has $s$-value $\bot$, and $u$ has $s$-value $\top$. Since $u\in V_1$ that means that $u \in I_{\gamma'}$. 
\end{proof}

When    \textbf{Candidacy?} rule is enabled on a node $u \in V_{1}$, then \textbf{Candidacy?} is executed on $v \in N[u]$ within one round.

\begin{lemma} \label{bz-lem-candidacyworks}
If in $\gamma$ degree-stabilized we have the \textbf{Candidacy?} rule enabled on $u \in V_1$ 
 (\emph{i.e.}, $s_{u}^{\gamma} = \bot$ and $\forall v \in N(u), s_{v}^{\gamma} = \bot$) then after one round \textbf{Candidacy?} have either been executed on $u$, or on at least one neighbour of $u$.
\end{lemma}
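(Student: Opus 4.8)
The plan is to argue by the same fairness mechanism used in Lemma~\ref{bz-lem-withdrawal}, tracking what can happen to the closed neighbourhood $N[u]$ during one round. Since $\gamma$ is degree-stabilized, no \textbf{Refresh} move occurs in any future transition (Lemma on reachability of degree-stabilized configurations), so the only moves available anywhere are \textbf{Candidacy?} and \textbf{Withdrawal}. First I would dispose of the trivial case: if at some transition of the round a \textbf{Candidacy?} move is executed on $u$ itself or on one of its neighbours, we are done. So assume for contradiction that throughout the whole round no node of $N[u]$ ever performs a \textbf{Candidacy?} move.

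Next I would show that under this assumption the guard of \textbf{Candidacy?} stays enabled on $u$ for the entire round, which by fairness forces $u$ to be activated (it cannot be disabled, so it must move, and its only possible move is \textbf{Candidacy?}, contradiction). To see the guard stays enabled: $u\in V_1$, so all neighbours of $u$ are non-Byzantine and thus bound by the algorithm. Initially $s_u^\gamma=\bot$ and $s_v^\gamma=\bot$ for all $v\in N(u)$. The $s$-value of $u$ can only change via a rule executed on $u$; the only rule that could set $s_u:=\top$ is \textbf{Candidacy?}, which we assumed is never executed on $u$, and \textbf{Withdrawal} is not enabled on $u$ while $s_u=\bot$, so $s_u$ remains $\bot$ throughout. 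Similarly, for any neighbour $v\in N(u)$, the only way $s_v$ becomes $\top$ is a \textbf{Candidacy?} move on $v$, which we assumed never happens; hence $s_v$ stays $\bot$. Therefore the guard $(x_u=|N(u)|)\wedge(s_u=\bot)\wedge(\forall v\in N(u),s_v=\bot)$ of \textbf{Candidacy?} holds on $u$ at every configuration of the round ($x_u=|N(u)|$ because $\gamma$ is degree-stabilized).

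Consequently $u$ is activable at every configuration of the round and never executes a disabling action, so by the round definition (Definition~\ref{my_round_df}) and the fairness of the daemon, $u$ must be activated during the round; the only enabled rule on $u$ being \textbf{Candidacy?}, this contradicts our assumption. Hence a \textbf{Candidacy?} move is executed on $u$ or on some neighbour of $u$ within the round. The main subtlety to get right is simply the bookkeeping that no stray \textbf{Withdrawal} on a neighbour can interfere — but since all of $N[u]$ starts with $s$-value $\bot$ and \textbf{Withdrawal} requires $s$-value $\top$, no \textbf{Withdrawal} move is ever enabled inside $N[u]$ during the round under our standing assumption, so this case does not arise.
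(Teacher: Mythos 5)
Your proof is correct and follows essentially the same route as the paper's: degree-stabilization rules out \textbf{Refresh}, the all-$\bot$ neighbourhood of $u\in V_1$ (whose neighbours are non-Byzantine) means only \textbf{Candidacy?} moves can occur in $N[u]$, and fairness forces either an activation of $u$ or a disabling caused by a neighbour's \textbf{Candidacy?}. The only difference is presentational: the paper argues by direct case analysis (activated vs.\ disabled), whereas you phrase the same dichotomy as a proof by contradiction.
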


\begin{proof}
Since $\gamma$ is degree-stabilized, no \textbf{Refresh} move can be executed in any future transition. Then, until $u$ or one of its neighbours have been activated, only \textbf{Candidacy?} can be executed on them since it's the only rule that can be activated on a node with $s$-value $\bot$.

\noindent Since $u$ is activable in $\gamma$, by fairness, we have two cases:

(i) If $u$ is activated before the end of the round, the only rule that it could have performed for its first activation is the \textbf{Candidacy?} rule since its $s$-value in $\gamma$ is $\bot$ and the configuration is supposed degree-stabilized.

(ii) If not, it has been unactivated, which means that at least one neighbour $v\in N(u)$ has been activated. As $u\in V_1$, $v$ cannot be Byzantine. The only rule that it could have performed for its first activation is the \textbf{Candidacy?} rule since its $s$-value in $\gamma$ is $\bot$ and the configuration is supposed degree-stabilized. 
\end{proof}

\noindent If  node $u\in V_{1}$ executes   \textbf{Candidacy?} rule, then $u$ becomes a locally alone node with  a certain probability in the next configuration.  So it implies that set $I$~grows.

\begin{lemma}\label{bz-lem-main}
If $\gamma$ is a degree-stabilized configuration, and in the next transition rule \textbf{Candidacy?} is executed on a node $u\in V_1$ 
, there is probability at least $\frac{1}{e(\Delta+1)}$ that in the next configuration $\gamma'$, $s_u^{\gamma'}=\top$ and $\forall v\in N(u), s_v^{\gamma'}=\bot$. 
\end{lemma}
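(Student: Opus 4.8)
The plan is to fix the transition $\gamma \xrightarrow{t} \gamma'$ in which $(u,\textbf{Candidacy?})$ is executed — for an arbitrary valid set of moves $t$ containing $(u,\textbf{Candidacy?})$, so that no assumption on the daemon is required — then describe exactly which random draws are performed, and bound from below the probability of the event that leaves $u$ locally alone. First I would note that, since $u$ executes \textbf{Candidacy?} in $\gamma$, its guard gives $s_w^{\gamma}=\bot$ for every $w\in N[u]$; and since $u\in V_1$, every such $w$ is non-Byzantine, hence degree-stabilized, so $x_w^{\gamma}=deg(w)$ and \textbf{Refresh} is disabled on it. Consequently the only rule any activated neighbour of $u$ can run in this transition is \textbf{Candidacy?}. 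Let $W=V(t)\cap N(u)$; each $v\in W$ runs \textbf{Candidacy?}, drawing once and independently from $Rand(\tfrac{1}{1+M_v})$, where $M_w:=\max\{x_z^{\gamma}\mid z\in N[w]\}$, and $u$ itself draws once from $Rand(\tfrac{1}{1+M_u})$.

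Next I would reduce the statement to a clean event. A neighbour $v\in N(u)\setminus W$ is not activated, so $s_v^{\gamma'}=s_v^{\gamma}=\bot$; a neighbour $v\in W$ has $s_v^{\gamma'}=\bot$ exactly when its draw returns $0$; and $s_u^{\gamma'}=\top$ exactly when $u$'s draw returns $1$. Hence it suffices to lower-bound the probability of the event $E$ that $u$'s draw is $1$ and every draw made by a node of $W$ is $0$, which by independence equals $\tfrac{1}{1+M_u}\cdot\prod_{v\in W}\bigl(1-\tfrac{1}{1+M_v}\bigr)$.

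Then I would establish three bounds. (i) Every $z\in N[u]$ is non-Byzantine and degree-stabilized, so $x_z^{\gamma}=deg(z)\le\Delta$, whence $M_u\le\Delta$ and $\tfrac{1}{1+M_u}\ge\tfrac{1}{1+\Delta}$. (ii) For $v\in W\subseteq N(u)$ we have $u\in N(v)$ and $x_u^{\gamma}=deg(u)$, so $M_v\ge deg(u)$ and $1-\tfrac{1}{1+M_v}\ge\tfrac{deg(u)}{1+deg(u)}$. (iii) $|W|\le|N(u)|=deg(u)$. Combining these,
\[
\proba{E}\;\ge\;\frac{1}{1+\Delta}\left(\frac{deg(u)}{1+deg(u)}\right)^{deg(u)}\;=\;\frac{1}{1+\Delta}\left(1-\frac{1}{deg(u)+1}\right)^{deg(u)}\;>\;\frac{1}{e(\Delta+1)},
\]
the last step being the elementary inequality $(1-\tfrac{1}{k+1})^{k}>e^{-1}$ with $k=deg(u)$ (and $\proba{E}=1$ trivially if $deg(u)=0$).

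The only delicate point is bound (ii): one might fear that a Byzantine node adjacent to some $v\in W$ could inflate the candidacy probability of $v$ and thereby sabotage $E$. It cannot: a Byzantine neighbour of $v$ affects $M_v$ only through the outer $\max$, which can only increase $M_v$, i.e. only decrease $v$'s success probability — which only helps $E$; and the bound $M_v\ge deg(u)$ that is actually used is supplied by $u$ alone, a genuinely degree-stabilized vertex, so it is immune to Byzantine tampering. Everything else is bookkeeping already licensed by degree-stabilization and by $u\in V_1$, the latter being precisely what prevents a Byzantine node from inflating $M_u$ and ruining bound (i).
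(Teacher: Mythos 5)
Your proposal is correct and follows essentially the same route as the paper's proof: bound $u$'s candidacy probability below by $\tfrac{1}{1+\Delta}$ using that all of $N[u]$ is non-Byzantine and degree-stabilized, bound each activated neighbour's candidacy probability above by $\tfrac{1}{1+deg(u)}$ via $x_u^{\gamma}=deg(u)\le M_v$, multiply by independence, and finish with $(1-\tfrac{1}{k+1})^{k}>e^{-1}$. Your explicit handling of non-activated neighbours, of the fact that \textbf{Candidacy?} is the only rule an activated neighbour can run, and of the $deg(u)=0$ edge case is just a slightly more careful write-up of the same argument.
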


\begin{proof}
For any node $y$, we write $\varphi(y) =  \frac{1}{1+max(\quickset{deg(v) | v \in N[y]})}$.

Since \textbf{Candidacy?} can be executed on $u$ in $\gamma$, we know that $\forall v \in N[u], s_{v}^{\gamma} = \bot$. Since $\gamma$ is degree-stabilized and no neighbour of $u$ can be Byzantine by definition of $V_1$, we have $\forall v \in N[u], x_{v}^{\gamma} = deg(v)$. The probability of $s_u^{\gamma'}=\top$ knowing the rule has been executed is then $\varphi(u)$.

Then, for a given $v \in N(u)$, node $v$ is not Byzantine since $u \in V_1$, and the probability that $s_v^{\gamma'} = \top$ is either $0$ (if \textbf{Candidacy?} has not been executed on $v$ in the transition) or $\frac{1}{1+\max(\quickset{x_w^{\gamma} | w \in N[v]})}$. Since $deg(u)= x_u \leq \max(\quickset{x_w^{\gamma} | w \in N[v]})$, that probability is then at most $\frac{1}{1+deg(u)}$.

Thus (since those events are independents), the probability for $u$ to be candidate in $\gamma'$ without candidate neighbour is at least 
\\[-1em]
$$p = \varphi(u)\prod_{v \in N(u)} \left(1-\frac{1}{1+deg(u)} \right) \geq \frac{1}{\Delta+1} \left(1-\frac{1}{1+deg(u)} \right)^{deg(u)}$$
\\[-,5em]
\noindent Then, as $\forall k \in \mathbb{N}, (1-\frac{1}{k+1})^{k} > e^{-1}$ (proof in appendix),\\ 
$$\hspace*{8cm} p > \frac{1}{\Delta+1} \times \frac{1}{e}$$ 
 and the lemma holds. 

\end{proof}


\begin{lemma} \label{bz-lem-3way}
If $\gamma$ is a degree-stabilized configuration such that $I_{\gamma}$ is not a maximal independent set of $V_2 \cup I_{\gamma}$, then after at most one round one of the following events happens:\\
\begin{enumerate}
\item Rule \textbf{Candidacy?} is executed on a node of $V_1$
\item A configuration $\gamma'$ such that $I_{\gamma} \subsetneq I_{\gamma'}$ is reached.
\item A configuration $\gamma'$ such that rule \textbf{Candidacy?} is enabled on a node of $V_2$ in $\gamma'$ is reached.
\end{enumerate}
\end{lemma}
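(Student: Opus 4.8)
The plan is to prove the lemma by contradiction: assume that, along the round starting at $\gamma$ — understood to include the initial configuration $\gamma$, every intermediate configuration, and the configuration $\gamma'$ reached when the round ends — none of the three events $(1)$, $(2)$, $(3)$ occurs, and derive a contradiction. First note that $I_\gamma$ is always an independent set (two locally alone nodes cannot be adjacent), so the hypothesis that $I_\gamma$ is not a \emph{maximal} independent set of $V_2\cup I_\gamma$ just says it is not inclusion-maximal there: there is a witness node $w\in(V_2\cup I_\gamma)\setminus I_\gamma=V_2\setminus I_\gamma$ that has no neighbour in $I_\gamma$. Since $w\in V_2$ — the set of nodes of $V_1$ whose neighbours all lie in $V_1$ — we have $N[w]\subseteq V_1$, so every node of $N[w]$ is non-Byzantine and, $\gamma$ being degree-stabilized, stays degree-stabilized for the whole round, meaning \textbf{Refresh} is never enabled on a node of $N[w]$.

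Next I would read off what the three negated events give. From $\neg(1)$, no node of $V_1$ — hence no node of $N[w]$ — ever executes \textbf{Candidacy?} during the round; as \textbf{Candidacy?} is the only rule writing $\top$ and the nodes of $N[w]$ are non-Byzantine, the conflict set $T_\delta:=\{u\in N[w]\mid s_u^{\delta}=\top\}$ can only shrink as $\delta$ runs over the configurations of the round (a node of $N[w]$ can only move its $s$-value from $\top$ to $\bot$, and only via \textbf{Withdrawal}). From $\neg(3)$ and $w\in V_2$ being degree-stabilized, \textbf{Candidacy?} is never enabled on $w$; since \textbf{Candidacy?} is enabled on $w$ precisely when $s_w=\bot$ and every node of $N(w)$ has $s$-value $\bot$, this is exactly the statement that $T_\delta\neq\emptyset$ at \emph{every} configuration $\delta$ of the round, in particular at $\gamma'$. (Also, since $I$ never decreases by Lemma~\ref{bz-lem-I-inc}, $\neg(2)$ means $I$ stays equal to $I_\gamma$, so $w$ stays a valid witness throughout; I will not strictly need this, but it clarifies why $\neg(2)$ is the right hypothesis.)

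Then I would drain $T_\gamma$. Fix $u\in T_\gamma$. Since $u\in N[w]\subseteq V_1$ and $s_u^{\gamma}=\top$, $u$ is not locally alone in $\gamma$: otherwise $u\in I_\gamma$, which is impossible because $u\in\{w\}\cup N(w)$ while $w\notin I_\gamma$ and $w$ has no neighbour in $I_\gamma$. Hence $u$ has a $\top$-neighbour in $\gamma$ and, $\gamma$ being degree-stabilized, \textbf{Withdrawal} is enabled on $u\in V_1$. By Lemma~\ref{bz-lem-withdrawal}, after one round either \textbf{Withdrawal} has been executed on $u$, or $u\in I_{\gamma'}$; the second option gives $u\in I_{\gamma'}\setminus I_\gamma$, i.e. event $(2)$, excluded. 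So \textbf{Withdrawal} is executed on $u$ during the round, and by $\neg(1)$ its $s$-value cannot go back to $\top$, so $u\notin T_{\gamma'}$. Letting $u$ range over all of $T_\gamma$ and using that $T$ only shrinks along the round, we get $T_{\gamma'}=\emptyset$, contradicting the previous paragraph. This contradiction proves that one of $(1)$, $(2)$, $(3)$ must occur.

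The step I expect to be the main obstacle is getting the three negated hypotheses to interlock correctly: $\neg(1)$ is exactly what stops $N[w]$ from being replenished with fresh $\top$-values (so $T$ can only shrink), $\neg(2)$ is exactly what stops a conflicting node of $N[w]$ from ``escaping'' by becoming locally alone (the bad case of Lemma~\ref{bz-lem-withdrawal}, which would grow $I$), and only together do these two let fairness — which forces every initially conflicting node of $N[w]$ to be activated and hence to withdraw — drain $N[w]$ to the all-$\bot$ state, which is precisely the configuration that $\neg(3)$ claims never arises. A secondary point to handle carefully is that the end-of-round configuration $\gamma'$ genuinely counts as a configuration ``reached'' within the round, so that $T_{\gamma'}=\emptyset$ really does contradict $\neg(3)$; this holds because executions here are infinite (there are Byzantine nodes) and the round completes.
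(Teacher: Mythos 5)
Your proof is correct, but it takes a genuinely different route from the paper's. The paper argues directly: it first splits on whether \textbf{Candidacy?} is already enabled on some node of $V_2$ in $\gamma$ (giving Condition~3), and otherwise extracts a $\top$-valued node $u\in V_2$ having a $\top$-valued neighbour in $V_1$, then follows that single node $u$ through the round via Lemma~\ref{bz-lem-withdrawal}, reading off Condition~2 (if $u$ or a neighbour ends up locally alone), Condition~3 (if $N[u]$ ends all-$\bot$), or Condition~1 (if some neighbour performs \textbf{Candidacy?}). You instead argue by contradiction around a non-dominated witness $w\in V_2\setminus I_\gamma$ — which may itself be $\bot$-valued — and drain the set $T$ of $\top$-valued nodes of $N[w]$: under $\neg(1)$ the set $T$ cannot be replenished, under $\neg(2)$ no member of $T$ can escape into $I$, so by Lemma~\ref{bz-lem-withdrawal} and fairness every member withdraws within the round, leaving $N[w]$ all-$\bot$ and \textbf{Candidacy?} enabled on $w$, contradicting $\neg(3)$. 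What each buys: the paper's argument is shorter and constructive, identifying which condition occurs; your contradiction argument is more uniform and, notably, does not need the existence of a $\top$-valued node in $V_2$ at all. That existence claim in the paper's case~(b) is actually delicate: when the witness $w$ is $\bot$-valued and blocked only by a $\top$-valued neighbour in $V_1\setminus V_2$ (itself blocked by a $\top$ node outside $V_1$), no such pair $u,v$ with $u \in V_2$ need exist even though $I_\gamma$ is not maximal, so your neighbourhood-draining argument covers a configuration that the paper's case split glosses over — a small but real advantage in completeness of your approach.
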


\begin{proof}
Suppose $\gamma$ is a degree-stabilized configuration such that $I_{\gamma}\cup V_2$ is not a maximal independent set of $V_2$. As $\gamma$ is supposed degree-stabilized, we will only consider the possibility of moves that are not \textbf{Refresh} moves.
(a) If \textbf{Candidacy?} is enabled on a node of $V_2$ in $\gamma$, Condition~$3$ holds.
(b) If it is not the case, then there exists $u \in V_2$ that has at least one neighbour $v \in V_1$ such that $s_{u}^{\gamma} = s_{v}^{\gamma} = \top$  (otherwise $I_{\gamma}$ would be a maximal independent set of $V_2 \cup I_{\gamma}$).

In the first case there is nothing left to prove. In the second case, \textbf{Withdrawal} is enabled on $u \in V_2$ in $\gamma$ and from Lemma~\ref{bz-lem-withdrawal}, we have two possible cases: (i) if $\gamma'$ is the configuration after one round, $u \in I_{\gamma'}$ and Condition~$2$ holds ; (ii) $u$ perform a \textbf{Withdrawal} move before the end of the round.

In the first case there is nothing left to prove. Suppose now that we are in the second case and that $u$ is activated only once before the end of the round, without loss of generality since Condition~$1$ would hold otherwise as it second activation would be a \textbf{Candidacy?} move and $u \in V_1$. 
Then: \\
\begin{itemize} 
\item If within a round a configuration $\gamma'$ is reached where a node $w \in N[u]$ is such that $s_w^{\gamma'} = \top$ and $w$ is not activable we have: $w \not\in I_{\gamma}$ (as $u$ is $\top$-valued in $\gamma$) which gives $I_{\gamma} \subsetneq I_{\gamma} \cup \quickset{w} \subseteq I_{\gamma'}$ by Lemma~\ref{bz-lem-I-inc}, thus Condition~$2$ holds.

\item If we suppose then that no such event happens until the end of the round in configuration $\gamma'$, we are in either of those cases:
(i)  Every neighbour of $u$ have value $\bot$ in $\gamma'$ and and $s_u^{\gamma'} = \bot$ (as we would be in the previous case if it was $\top$), thus \textbf{Candidacy?} is enabled on $u$ in $\gamma'$ and Condition~$3$ holds.
(ii)  A \textbf{Candidacy?} has been performed within the round on a neighbour of $u$ and Condition~$1$ holds.
\end{itemize}
~\\
Thus, in every possible case, one of the three conditions holds. 
\end{proof}
~\\
\noindent  Every 2 rounds, the set of locally alone\,nodes strictly\,grows with\,some probability.

\begin{lemma} \label{bz-lem-advance}
If $\gamma$ is degree-stabilized, with $I_{\gamma}$ not being a maximal independent set of $V_2 \cup I_{\gamma}$, then after two rounds the probability for the new configuration $\gamma'$ to be such that $I_{\gamma} \subsetneq I_{\gamma'}$ is at least $\frac{1}{(\Delta+1)e}$.
\end{lemma}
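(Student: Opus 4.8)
The plan is to combine the structural trichotomy of Lemma~\ref{bz-lem-3way} with the probabilistic progress estimate of Lemma~\ref{bz-lem-main}, chaining them across the two rounds. First I would apply Lemma~\ref{bz-lem-3way} to $\gamma$: since $\gamma$ is degree-stabilized and $I_\gamma$ is not a maximal independent set of $V_2\cup I_\gamma$, after at most one round we reach a configuration $\gamma_1$ in which one of the three events has occurred. Note that by Lemma~\ref{bz-lem-stab-1round} (or the fact that degree-stabilization is preserved, the first lemma of the subsection) $\gamma_1$ is still degree-stabilized, and by Lemma~\ref{bz-lem-I-inc} we always have $I_\gamma \subseteq I_{\gamma_1}$. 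I would then do a case analysis on which of the three events of Lemma~\ref{bz-lem-3way} happened.

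In Case~2 we already have $I_\gamma \subsetneq I_{\gamma_1}\subseteq I_{\gamma'}$ deterministically (using Lemma~\ref{bz-lem-I-inc} to propagate the strict inclusion through the remainder of the two rounds), so the conclusion holds with probability $1$. In Case~1, rule \textbf{Candidacy?} was actually executed on some node $u\in V_1$ during the first round; I would apply Lemma~\ref{bz-lem-main} at the transition where this execution takes place, which gives probability at least $\frac{1}{e(\Delta+1)}$ that immediately afterwards $u$ is locally alone, i.e. $s_u=\top$ and all neighbours are $\bot$; since $u\in V_1$ this means $u\in I$ of that configuration, and again by Lemma~\ref{bz-lem-I-inc} this $u$ stays in $I$ through the end of the two rounds, while $u\notin I_\gamma$ (because \textbf{Candidacy?} was enabled on $u$ in a configuration reachable while $I$ only grows, so $u$ was $\bot$-valued at some point after $\gamma$, hence $u\notin I_\gamma$). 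Thus $I_\gamma\subsetneq I_{\gamma'}$ with probability at least $\frac{1}{e(\Delta+1)}$. In Case~3, \textbf{Candidacy?} is enabled on some node $w\in V_2\subseteq V_1$ in $\gamma_1$; now I invoke Lemma~\ref{bz-lem-candidacyworks} applied to the degree-stabilized configuration $\gamma_1$: within one more round (the second of our two rounds) \textbf{Candidacy?} is executed on $w$ or on one of its neighbours — and since $w\in V_2$, all its neighbours lie in $V_1$, so in every sub-case \textbf{Candidacy?} is executed on some node of $V_1$ during the second round. Applying Lemma~\ref{bz-lem-main} to that execution again yields probability at least $\frac{1}{e(\Delta+1)}$ that the node becomes locally alone and enters $I$, and as before it was not in $I_\gamma$, so $I_\gamma\subsetneq I_{\gamma'}$.

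In all three cases the probability of strict growth of $I$ over the two rounds is at least $\frac{1}{e(\Delta+1)}$, which is the claim. The main obstacle I anticipate is the bookkeeping around the "it was not already in $I_\gamma$" point and making precise that the configuration on which we apply Lemma~\ref{bz-lem-main} is indeed degree-stabilized and that the node in question is in $V_1$: in Case~3 one must be careful that the candidacy might fire on a neighbour of $w$ rather than $w$ itself, and it is exactly the hypothesis $w\in V_2$ (so $N(w)\subseteq V_1$) that saves us. A secondary subtlety is that Lemma~\ref{bz-lem-candidacyworks} consumes a full round, so one has to check the budget: Case~1 uses (part of) round one for the candidacy execution and nothing more is needed; Case~3 uses round one to reach $\gamma_1$ and round two for Lemma~\ref{bz-lem-candidacyworks}, fitting exactly in two rounds; Case~2 needs no randomness at all. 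Once these alignments are nailed down, the probabilistic estimate is immediate from Lemma~\ref{bz-lem-main}.
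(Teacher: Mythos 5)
Your proposal is correct and follows essentially the same route as the paper's own proof: invoke Lemma~\ref{bz-lem-3way}, then treat Case~2 via Lemma~\ref{bz-lem-I-inc}, Case~1 via Lemma~\ref{bz-lem-main} at the transition where \textbf{Candidacy?} fires (with the same ``$u\notin I_\gamma$'' observation), and Case~3 via Lemma~\ref{bz-lem-candidacyworks} followed by Lemma~\ref{bz-lem-main}, using exactly the point that $w\in V_2$ forces $N[w]\subseteq V_1$. The only cosmetic deviation is that the paper applies Lemma~\ref{bz-lem-candidacyworks} from the \emph{first} configuration within the round where \textbf{Candidacy?} is enabled on a node of $V_2$, rather than from the end-of-round configuration $\gamma_1$ as you phrase it, but this does not change the argument.
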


\begin{proof}

From Lemma~\ref{bz-lem-3way}, we have three possibilities after one rounds.\\
\begin{itemize}
\item If we are in Case~$1$, let us denote by $\gamma''$ the resulting configuration after the transition where the said \textbf{Candidacy?} move have been executed on $u\in V_1$. Then using Lemma~\ref{bz-lem-main}, we have $u \in I_{\gamma''}$ with probability at least $\frac{1}{(\Delta+1)e}$. Since we have $u\not\in I_{\gamma}$  (if $u$ was in $I_{\gamma}$, \textbf{Candidacy?} could not have been executed on $u$ after configuration $\gamma$) and by Lemma~\ref{bz-lem-I-inc}, we have then $I_{\gamma} \subsetneq I_{\gamma'}$ with probability at least $\frac{1}{(\Delta+1)e}$.
\item If we are in Case~$2$, there is nothing left to prove.
\item If we are in Case~$3$, let us denote by $\gamma''$ the first configuration where \textbf{Candidacy?} is enabled on some $u\in V_2$. Then using Lemma~\ref{bz-lem-candidacyworks} after at most one more round \textbf{Candidacy?} will be executed on $v\in N[u]$. Let us denote by $\gamma''$ the resulting configuration after the transition where the said \textbf{Candidacy?} move have been executed on $v$. Since $u\in V_2$ we have $v\in V_1$ and using Lemma~\ref{bz-lem-main} $v \in I_{\gamma'''}$ with probability at least $\frac{1}{(\Delta+1)e}$ and $v \not\in I_{\gamma}$ by the same argument as above. Thus, since $I_{\gamma'''} \subseteq I_{\gamma'}$, the probability that $I_{\gamma} \subsetneq I_{\gamma'}$ is at least $\frac{1}{(\Delta+1)e}$.
\end{itemize}
~\\
In every case, the property is true, thus the lemma holds. 
\end{proof}

Since the number of locally alone nodes cannot be greater than $n$, the expected number of rounds before stabilization can be computed.
%
%
%
We will use the notation $\alpha = \frac{1}{(\Delta+1)e}$ to simplify the formulas in the remaining of the paper.

\begin{lemma} \label{bz-lem-speed}
For any $p\in [0,1[$. From any degree-stabilized configuration $\gamma$, Algorithm  is self-stabilizing for  a configuration $\gamma'$ where $I_{\gamma'}$ is a maximal independent set of $V_2 \cup I_{\gamma'}$, with time complexity  $\max\left( -\alpha^2\ln p,\frac{\sqrt{2}}{\sqrt{2}-1}\frac{n}{\alpha} \right)$ rounds
with probability at least $1-p$.
\end{lemma}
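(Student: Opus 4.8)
The goal is a high-probability bound on the number of rounds to reach a legitimate configuration. The key tool is Lemma~\ref{bz-lem-advance}: from any degree-stabilized configuration in which $I_\gamma$ is not yet a maximal independent set of $V_2 \cup I_\gamma$, after two rounds the set of locally alone nodes grows with probability at least $\alpha = \frac{1}{(\Delta+1)e}$. Since $|I_\gamma|$ is non-decreasing (Lemma~\ref{bz-lem-I-inc}), bounded by $n$, and strictly increasing each time we ``succeed'', the execution is legitimate once we have accumulated $n$ successes across disjoint pairs of rounds. So the plan is: first, chop the execution into consecutive blocks of two rounds; observe that within each block, as long as we are not yet legitimate, a success (strict growth of $I$) occurs with probability $\geq \alpha$ \emph{independently of the past}, because Lemma~\ref{bz-lem-advance} applies to the configuration at the start of the block whatever the daemon and the previous coin flips did (this is the crucial ``adversary-oblivious'' feature). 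Then the number of blocks needed is stochastically dominated by a sum of $n$ i.i.d.\ geometric random variables with parameter $\alpha$, and I want to show that with $k$ blocks, i.e.\ $2k$ rounds, the probability of fewer than $n$ successes is at most $p$ for a suitable $k$.

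\textbf{Main estimate.} Let $X$ be the number of successes among $k$ independent (or rather, stochastically-lower-bounded-by-independent) Bernoulli$(\alpha)$ trials. We need $\Pr(X < n) \leq p$. A clean way is a Chernoff-type lower-tail bound: $\Pr(X \le (1-\delta)\alpha k) \le \exp(-\delta^2 \alpha k /2)$. Choosing $k$ so that $\alpha k$ is at least, say, $\sqrt 2 \, n$ makes $(1-\delta)\alpha k \ge n$ with $\delta = 1-\tfrac{1}{\sqrt2}$, giving a failure probability $\le \exp\!\big(-(1-\tfrac1{\sqrt2})^2 \alpha k/2\big)$; one then also needs $k$ large enough that this is $\le p$, i.e.\ $\alpha k \ge -\tfrac{2}{(1-1/\sqrt2)^2}\ln p$. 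Combining the two requirements on $\alpha k$ and converting back to rounds (multiplying by $2/\alpha$ and taking a max) should reproduce the stated quantity $\max\!\big(-\alpha^2\ln p, \tfrac{\sqrt2}{\sqrt2-1}\tfrac{n}{\alpha}\big)$ up to the bookkeeping constants; I would reverse-engineer the exact Chernoff constant the authors use (they evidently tuned $\delta$ to make the two terms clean). I would also handle the correctness part: once $I_{\gamma'}$ is a maximal independent set of $V_2\cup I_{\gamma'}$, Lemmas~\ref{bz-lem-I-inc} and the structure of the rules keep it so forever, since no node of $I_{\gamma'}$ can ever leave and no node of $V_2$ outside can become enabled for \textbf{Candidacy?}; this gives the self-stabilization (closure) clause.

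\textbf{The delicate point.} The subtle step is justifying the stochastic domination by genuinely independent trials. The successes in different two-round blocks are \emph{not} independent in the naive sense — the configuration at the start of block $i+1$ depends on everything before — but Lemma~\ref{bz-lem-advance} gives a \emph{uniform} lower bound $\alpha$ on the conditional success probability given the entire history (as long as we are not yet legitimate), and once we are legitimate we may as well declare all remaining blocks ``successes''. This is exactly the setup where one couples the process with an i.i.d.\ Bernoulli$(\alpha)$ sequence so that the real number of successes dominates the i.i.d.\ one; I would state this coupling explicitly (or invoke a standard lemma on sequences of events with conditionally-lower-bounded probabilities) rather than wave hands, since it is the heart of why the round bound is valid against an adaptive fair daemon. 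A secondary minor point is that Lemma~\ref{bz-lem-advance} requires the starting configuration of each block to be degree-stabilized; this is immediate since by Lemma~\ref{bz-lem-stab-1round} (and the hypothesis that $\gamma$ is already degree-stabilized, plus the closure lemma) every configuration from $\gamma$ onward is degree-stabilized, so the block decomposition is legitimate throughout.
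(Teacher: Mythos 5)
Your outline is sound and identifies the right driving fact, but it reaches the bound by a genuinely different route than the paper. The paper does not couple the process with i.i.d.\ trials: it defines $Y_i$ as the indicator that $|I|$ grows during round $i$, uses Lemma~\ref{bz-lem-advance} only through the conditional bound $\mathbb{E}[Y_i \mid \mathcal{F}_{i-1}] \ge \alpha$ up to the stopping time $\tau$ (the round after which $I$ no longer changes), and then applies Azuma's inequality to the compensated sum $M_i=\sum_{k\le i}\bigl(Y_k-\mathbb{E}[Y_k\mid\mathcal{F}_{k-1}]\bigr)$, combined with the deterministic facts $\sum_{k\le i}Y_k\le n$ and $\tau>i \Rightarrow \sum_{k\le i}\mathbb{E}[Y_k\mid\mathcal{F}_{k-1}]\ge i\alpha$, to obtain $\Pr(\tau>i)\le e^{-2(n-i\alpha)^2/i}$. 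So the ``delicate point'' you flag -- beating the adaptive daemon -- is handled in the paper by exactly the conditional-expectation lower bound you identify, only packaged as a bounded-difference martingale rather than as a stochastic-domination coupling; both are legitimate, and your two-round blocks are in fact more faithful to the two-round statement of Lemma~\ref{bz-lem-advance} than the paper's per-round use of it. Your closure and degree-stabilization remarks (via Lemma~\ref{bz-lem-I-inc} and Lemma~\ref{bz-lem-stab-1round}) match the paper's setup.

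The one genuine gap is the quantitative endgame, which you leave as ``reverse-engineer the constants''. Your route will not reproduce the stated expression by constant-tuning: a multiplicative Chernoff bound on blocks yields a deviation threshold of order $\ln(1/p)/\alpha$ (plus a factor $2$ in the $n$-term from the blocks), whereas the Azuma route yields an exponent $2(n-i\alpha)^2/i$ and hence a threshold of order $\ln(1/p)/\alpha^2$; note also that the term $-\alpha^2\ln p$ in the statement appears to be a typo for $-\ln(p)/\alpha^2$ (the paper's own final step, $e^{-i\alpha^2}\le p$, requires $i\ge-\ln(p)/\alpha^2$). So to prove this precise lemma you must either carry out the martingale/Azuma computation as the paper does, or explicitly accept that your argument proves a bound of the same form with different (in the $p$-term, actually better) constants and argue that this suffices; as written, the proposal asserts the stated bound without deriving it, and that derivation is the substance of the paper's proof.
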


\begin{theorem}
For any $p\in [0,1[$. From any  configuration $\gamma$, Algorithm is self-stabilizing for a configuration $\gamma'$ where $I_{\gamma'}$ is a maximal independent set of $V_{2} \cup  I_{\gamma'}$. 
%
\end{theorem}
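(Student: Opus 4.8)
The plan is to split the statement into the two parts of the definition of self-stabilization given in Section~\ref{sec:model} — \emph{correctness} and \emph{convergence} with respect to the set $\mathcal{L}$ of legitimate configurations — and to observe that essentially all the probabilistic work has already been done in Lemma~\ref{bz-lem-speed}, so what remains is a short assembly argument.

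For \emph{correctness}, I would start from a legitimate configuration $\gamma$ and show the property is preserved by an arbitrary transition $\gamma \to \gamma'$. First, $I_{\gamma'}$ is always an independent set: by definition every node of $I_{\gamma'}$ is locally alone in $V_1$, so two adjacent nodes cannot both lie in $I_{\gamma'}$. For maximality inside $V_2 \cup I_{\gamma'}$, pick $w \in V_2 \cup I_{\gamma'}$; if $w \in I_{\gamma'}$ there is nothing to check, and otherwise $w \in V_2 \subseteq V_2 \cup I_{\gamma}$, so by maximality of $I_{\gamma}$ in $\gamma$ either $w \in I_{\gamma}$ — impossible, since then $w \in I_{\gamma'}$ by Lemma~\ref{bz-lem-I-inc} — or $w$ has a neighbour $z \in I_{\gamma} \subseteq I_{\gamma'}$. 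Hence $w$ is in or adjacent to $I_{\gamma'}$, so $I_{\gamma'}$ is a maximal independent set of $V_2 \cup I_{\gamma'}$ and $\gamma'$ is legitimate. An immediate induction gives that every configuration of an execution started in $\mathcal{L}$ stays in $\mathcal{L}$.

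For \emph{convergence}, I would combine Lemma~\ref{bz-lem-stab-1round} with Lemma~\ref{bz-lem-speed}. From an arbitrary configuration $\gamma$, after one round the execution is in a degree-stabilized configuration $\gamma_1$ (Lemma~\ref{bz-lem-stab-1round}), whatever the daemon does. From $\gamma_1$, Lemma~\ref{bz-lem-speed} says that for every $p \in [0,1[$ the execution reaches a legitimate configuration within $\max\!\left(-\alpha^2\ln p,\ \tfrac{\sqrt{2}}{\sqrt{2}-1}\tfrac{n}{\alpha}\right)$ rounds with probability at least $1-p$. Consequently, for every $\varepsilon>0$ the probability of \emph{never} reaching a legitimate configuration is at most $\varepsilon$, so letting $\varepsilon \to 0$ this probability is $0$: a legitimate configuration is reached with probability $1$. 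Together with correctness, this is exactly the self-stabilization statement.

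The main difficulty is already behind us: the delicate round-by-round progress estimates and their aggregation into a high-probability time bound are carried out in Lemmas~\ref{bz-lem-advance} and~\ref{bz-lem-speed}. The only points needing care here are (i) that the maximality part of the legitimacy predicate is genuinely stable — which works precisely because the dominating witnesses all live in $I_{\gamma}$, a set that only grows along executions by Lemma~\ref{bz-lem-I-inc} — and (ii) the routine limiting step turning the family of finite-time high-probability bounds of Lemma~\ref{bz-lem-speed} into almost-sure convergence.
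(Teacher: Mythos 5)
Your proposal is correct and follows exactly the route the paper intends (the paper leaves this theorem's proof implicit): convergence is obtained by chaining Lemma~\ref{bz-lem-stab-1round} with Lemma~\ref{bz-lem-speed} and letting the failure probability tend to $0$, while closure of the legitimacy predicate rests on the monotonicity of $I$ given by Lemma~\ref{bz-lem-I-inc}, just as you argue. I see no gap.
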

%
\setcounter{section}{2}

\section{In an Anonymous System under the Adversary Daemon}
\label{sec:anonymous}

\subsection{The algorithm}

The algorithm builds a maximal independent set represented by a local variable~$s$.

Note that we could have used the previous algorithm as fairness was only needed to contain byzantine influence, but the complexity would have been something proportional to $\Delta n^2$, and as we will prove we can do better than that.

We keep the idea of having nodes making candidacy, and then withdraw if the situation to be candidate is not right. As we still do not have identifiers, we also need probabilistic tie-break. But contrary to the byzantine case, we move the probabilities to the \textbf{Withdrawal} rule: a non-candidate node with no candidate neighbour will always become candidate when activated, but a candidate node with a candidate neighbour will only withdraw with probability $\frac{1}{2}$ when activated.
\begin{algo}
Any node $u$ has a single local variable $s_u \in \quickset{\bot,\top}$ and may make a move according to one of the following rules: \\
\textbf{(Candidacy)} $(s_u=\bot)\wedge(\forall v \in N(u), s_v = \bot) \rightarrow s_u := \top$\\
\textbf{(Withdrawal?)} $(s_u=\top)\wedge(\exists v \in N(u), s_v =\top) \rightarrow $ if $Rand(\frac{1}{2})=1$ then $s_u := \bot$\\
\end{algo}

The idea behind this is that we can give a non-zero lower bound on the probability that a connected component of candidate nodes eventually collapses into at least one definitive member of the independent set. As every transition with \textbf{Candidacy} moves makes such sets appear, and \textbf{Withdrawal?} moves make those collapse into member of the independent set, it should converge toward a maximal independent set.

We will not in fact use exactly connected component, as it was more handy in the proof to consider set on nodes that became candidate as the same time, but that's where the intuition comes from.

Given a configuration $\gamma$, we define $\beta(\gamma) = \quickset{u \in V | s_u {=} \top \wedge \forall v \in N(u), s_v {=} \bot}$. Note that $\beta(\gamma)$ is always an independent set since two distinct members cannot be neighbours (as they have both $s$-value $\top$).

\subsection{An example}


\newcommand{\topologie}[4]{
\tikzstyle{vertex}=[fill=white, draw=black, shape=circle]
\tikzstyle{edge}=[-]

                 \node[style=vertex]   (a) at (-12, 8) {$a$};       
                 \node [style=vertex] (b) at (-10, 8) {$b$}; 
 	         \node [style=vertex] (c) at (-11, 6) {$c$};   		
	         \node [style=vertex] (d) at (-9, 6) {$d$};  
 
 		\draw [style=edge] (a) to (b);
 		\draw [style=edge] (a) to (c);
 		\draw [style=edge] (b) to (c);
 		\draw [style=edge] (c) to (d);
		\draw (a.north) node[above]{#1} ;
 	         \draw (b.north) node[above]{#2} ;
 	         \draw (c.north) node[above]{#3} ;
 		\draw (d.north) node[above]{#4} ;
}
The aim  of the algorithm is to build a set independent represented by $\beta(\gamma)$.  The approach of the algorithm is the following: when a node is in the set independent it remains so throughout the execution.

Below is an execution of the algorithm under the adversarial distributed daemon. Figure \ref{fig:ex}a shows the initial configuration $\gamma_{0}$ of the execution. Node identifiers are indicated inside the circles.  The symbols $\bot$ and $\top$ show the content of the local variable $s$. Consider the initial configuration (Figure \ref{fig:ex}a) in which all local variables are equal to $\bot$.
Since no node is in set independent in   configuration $\gamma_{0}$, 
the \textbf{Candidacy} rule is firable on all  nodes.  During the transition $\gamma_{0} \to  \gamma_{1}$,  the demon activates all the nodes and it means that for any node $ (u,\textbf{Candidacy})$ is a move of transition $t_{1}$. The configuration $\gamma_{1}$ is drawn in Figure \ref{fig:ex}b.  The possible moves are $(u,\textbf{Withdrawal?})$ for any node in $G$. The daemon can choose all the nodes that can 
be executed. 

In the transition $t_{2}$, moves $ (a,\textbf{Withdrawal?})$ and $ (b,\textbf{Withdrawal?})$ are executed : using our notation,
$t_{2}=\quickset{(a,\textbf{Withdrawal?}),(b,\textbf{Withdrawal?})}$.
Depending on the random drawing, none of these moves change the $s$-values.   The configurations $\gamma_1$ and $\gamma_2$ are the same. 

In the transition $t_{3}$, moves $ (a,\textbf{Withdrawal?})$, $ (b,\textbf{Withdrawal?})$ and\\ $ (c,\textbf{Withdrawal?})$ are executed. Depending the random choices,  only node $c$ changes its $s$-value (see Figure \ref{fig:ex}c). 
Observe that from this configuration $\gamma_3$,   node $d$ is an element  of independent set, and will remain so until the end of the execution: $\beta(\gamma_3)=\quickset{d}$. Then, it can not be executed no more moves.

In the transition $t_{4}$, we assume that move  $ (a,\textbf{Withdrawal?})$, \\$(b,\textbf{Withdrawal?})$ are executed. According to the random choice, nodes $a$ and $b$ change their $s$-value (see Figure \ref{fig:ex:1}d).  Observe that
$c$ has $s$-value $\bot$ and cannot be in a candidate set. $\quickset{a,b,d}$, $\quickset{a,b}$ and $\quickset{d}$ are candidate set for that configuration, but not $\quickset{a}$, $\quickset{b}$, $\quickset{a,d}$ or $\quickset{b,d}$.
 
Again, in the transition $t_{5}$  moves $ (a,\textbf{Candidacy})$, and $ (b,\textbf{Candidacy})$ are executed  (see Figure \ref{fig:ex:1}e).   The timed move $(5,(a,\textbf{Candidacy}))$ is labelled by $(5,\quickset{a,b,d})$ and $(5,\quickset{a,b})$. 
Since $A_5= \quickset{a,b,d}$, the color of  this timed move   is $(5,\quickset{a,b,d})$.

Afterward, moves $ (a,\textbf{Withdrawal?})$, $ (b,\textbf{Withdrawal?})$ are only the possible moves. All theses possibles has  $(5,\quickset{a,b,d})$ as their color.  

In the next two transitions $t{6}$ and $t{7}$, the timed moves $(a,\textbf{Withdrawal?})$ and $(b,\textbf{Withdrawal?})$ are executed without modifying their $s$-values. Their  moves is always color of $(5,\quickset{a,b,d})$.  Then, in the transition $t_{8}$ only  node $a$  executes the rule \textbf{Withdrawal?}. Finally,  it changes its $s$-value. The $\gamma_{8}$ configuration is stable since no node is elligible to execute a rule.

\begin{figure} 
\centering
\begin{tabular}{p{4cm}p{4cm}p{4cm}}
\begin{tikzpicture}
\topologie{$\bot$}{$\bot$}{$\bot$}{$\bot$} ;
\end{tikzpicture} & \begin{tikzpicture}
\topologie{$\top$}{$\top$}{$\top$}{$\top$} ;
\end{tikzpicture} &\begin{tikzpicture}
\topologie{$\top$}{$\top$}{$\bot$}{$\top$} ;
\end{tikzpicture} \\
~\begin{minipage}[h]{3.5cm}
(a) Initial configuration $\gamma_0$. All nodes execute rule \textbf{Candidacy}
\end{minipage}
& 
~\begin{minipage}[h]{3.5cm}
(b) Configuration $\gamma_1=\gamma_2$ : All nodes can execute rule  \textbf{Withdrawal?}  
\end{minipage}
& 
~\begin{minipage}[h]{3.5cm}
(c) Configuration  $\gamma_3$: Some nodes were activated, but only node $c$ had its $s$-value modified 
\end{minipage}
~\\ 
\begin{tikzpicture}
\topologie{$\bot$}{$\bot$}{$\bot$}{$\top$} ;
\end{tikzpicture} & \begin{tikzpicture}
\topologie{$\top$}{$\top$}{$\bot$}{$\top$} ;
\end{tikzpicture} &   \begin{tikzpicture}
\topologie{$\bot$}{$\top$}{$\bot$}{$\top$} ;
\end{tikzpicture}  \\
~\begin{minipage}[h]{3.5cm}
(d) Configuration  $\gamma_4$ : Nodes $a$ and $b$ had their $s$-value changed using rule \textbf{Withdrawal?} 
\end{minipage}
&
~\begin{minipage}[h]{3.5cm}
(e) Configuration  $\gamma_5=\gamma_6=\gamma_7$ : Nodes $a$ and $b$ had their $s$-value changed using rule \textbf{Candidacy} 
\end{minipage}
&
~\begin{minipage}[h]{3.5cm}
(f) Configuration  $\gamma_8$: Final configuration, as no node may perform any rule
\end{minipage}
\end{tabular}
\label{fig:ex}\label{fig:ex:1}

\end{figure}

%
%
%
%


\subsection{The proof}

\begin{lemma} \label{betamax}
In any stable configuration $\gamma$, $\beta(\gamma)$ is a maximal independent set of $G$.
\end{lemma}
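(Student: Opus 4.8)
The proof rests on a single structural observation: in a stable configuration, a node has $s$-value $\top$ \emph{if and only if} it belongs to $\beta(\gamma)$. The key subtlety to keep in mind is that the guard of \textbf{Withdrawal?} is $(s_u{=}\top)\wedge(\exists v\in N(u),s_v{=}\top)$ — the randomness lives in the \emph{command}, not the guard — so \textbf{Withdrawal?} is \emph{enabled} (hence the node is activable) as soon as a $\top$-valued node has a $\top$-valued neighbour, even though its activation might change nothing. Once the if-and-only-if is established, independence and maximality both fall out immediately from the non-enabledness of the two rules.

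\textbf{Step 1: the $\top$-valued nodes are exactly $\beta(\gamma)$.} Let $\gamma$ be stable. If $s_u^\gamma=\top$, then \textbf{Withdrawal?} is not enabled on $u$ (stability), which forces $\forall v\in N(u),\ s_v^\gamma=\bot$; hence $u\in\beta(\gamma)$ by definition of $\beta$. Conversely, if $u\in\beta(\gamma)$ then $s_u^\gamma=\top$ by definition. So $\beta(\gamma)=\quickset{u\in V\mid s_u^\gamma=\top}$.

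\textbf{Step 2: independence.} This is the remark already made after the definition of $\beta$: if $u,v\in\beta(\gamma)$ with $uv\in E$, then $v\in N(u)$ and $s_v^\gamma=\top$, contradicting $u\in\beta(\gamma)$, which requires all neighbours of $u$ to be $\bot$-valued. Hence $\beta(\gamma)$ is an independent set of $G$.

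\textbf{Step 3: maximality (domination).} Let $u\notin\beta(\gamma)$. By Step~1, $s_u^\gamma=\bot$. Since $\gamma$ is stable, \textbf{Candidacy} is not enabled on $u$, so the guard $(s_u{=}\bot)\wedge(\forall v\in N(u),s_v{=}\bot)$ fails; as $s_u^\gamma=\bot$, this means there exists $v\in N(u)$ with $s_v^\gamma=\top$, i.e.\ (Step~1) $v\in\beta(\gamma)$. Thus every node outside $\beta(\gamma)$ has a neighbour in $\beta(\gamma)$, so $\beta(\gamma)$ is a dominating set; an independent dominating set is a maximal independent set, which concludes the proof. The only thing one must be careful about — and the ``obstacle'', such as it is — is precisely the reading of the \textbf{Withdrawal?} guard noted above; everything else is a direct unfolding of the rules' guards under the hypothesis that none of them is enabled.
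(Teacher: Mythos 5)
Your proof is correct and rests on essentially the same ingredients as the paper's: stability of $\gamma$ means \textbf{Withdrawal?} is not enabled, which forces every $\top$-valued node into $\beta(\gamma)$, and \textbf{Candidacy} is not enabled, which forces every $\bot$-valued node to have a $\top$-valued (hence, by the first point, $\beta(\gamma)$) neighbour. The only difference is presentational — you argue directly through the characterization $\beta(\gamma)=\quickset{u\in V \mid s_u^{\gamma}=\top}$, whereas the paper packages the same two observations as a proof by contradiction — so nothing is missing.
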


\begin{proof}
Observe that  $\beta(\gamma)$ is an independent set of $G$ because two distinct members cannot be neighbours (as they have both $s$-value $\top$). 

Now suppose by contradiction that $\beta(\gamma)$ is not maximal, \emph{i.e.} that there exists  a node $u\not\in\beta(\gamma)$  that has no neighbour in $\beta(\gamma)$. We now search for a node $v$ such that $v \not\in\beta(\gamma)$ and $s_v^{\gamma}=\top$. To do this, we consider two cases according to the $s$-value of $u$:
\begin{itemize}
\item If $s_u^{\gamma}=\bot$ then,  \textbf{Candidacy} cannot be enabled on $u$ in $\gamma$ by stability of $\gamma$.  Node $u$ must have at least one neighbour $v$ such that $s_v^{\gamma}=\top$. Thus,  $v \not\in\beta(\gamma)$ and $s_v^{\gamma}=\top$. 
\item If $s_u^{\gamma}=\top$, then in this case, since $u\not\in\beta(\gamma)$, $v$ is  node  $u$.
\end{itemize}
Then, in both cases, there exists a node $v$   such that $v \not\in\beta(\gamma)$ and $s_v^{\gamma}=\top$.
Since $v \not\in\beta(\gamma)$ and $s_v^{\gamma}=\top$, by definition of $\beta(\gamma)$, $v$ must have a neighbour $w$ such that $s_w^{\gamma}=\top$, and then \textbf{Withdrawal?} would be enabled on $v$, contradiction with the stability of $\gamma$. \hfill \
\end{proof}

\begin{lemma} \label{betamax2}
In any configuration $\gamma$, if $\beta(\gamma)$ is a maximal independent set of $G$, then $\gamma$ is stable.
\end{lemma}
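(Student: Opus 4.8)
The plan is to prove the statement directly by its definition: assuming $\beta(\gamma)$ is a maximal independent set of $G$, I show that no node of $V$ is activable in $\gamma$, which is exactly what it means for $\gamma$ to be stable. Since the algorithm has only two rules, it suffices to check that neither \textbf{Candidacy} nor \textbf{Withdrawal?} is enabled on any node $u$, and I would do this by assuming one of them is enabled and deriving a contradiction with maximality.

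First I would rule out \textbf{Candidacy}. If it were enabled on some $u$, then $s_u^{\gamma}=\bot$, so $u\notin\beta(\gamma)$, and every neighbour $v$ of $u$ satisfies $s_v^{\gamma}=\bot$, hence no neighbour of $u$ lies in $\beta(\gamma)$ either. Then $u$ is a node outside $\beta(\gamma)$ with no neighbour in $\beta(\gamma)$, contradicting the maximality of $\beta(\gamma)$. Next I would rule out \textbf{Withdrawal?}: if it were enabled on some $u$, then $s_u^{\gamma}=\top$ and some neighbour $v$ of $u$ has $s_v^{\gamma}=\top$; in particular $u\notin\beta(\gamma)$ because it has a $\top$-valued neighbour. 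By maximality, $u$ must have a neighbour $w\in\beta(\gamma)$, but by the definition of $\beta$ every neighbour of $w$ — in particular $u$ — has $s$-value $\bot$, contradicting $s_u^{\gamma}=\top$. Having excluded both rules on every node, no node is activable and $\gamma$ is stable.

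I do not expect a real obstacle here: this lemma is essentially the converse of Lemma~\ref{betamax} and goes through by the same case analysis on the hypothetical activable node's $s$-value. The only point requiring a little care is to invoke the full strength of ``maximal'' — namely that every node outside $\beta(\gamma)$ has a neighbour inside it — rather than mere inclusion-maximality; since $\beta(\gamma)$ is assumed maximal in all of $G$, this is immediate.
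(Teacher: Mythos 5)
Your proof is correct and follows essentially the same argument as the paper: assume some rule is enabled, split on \textbf{Candidacy} versus \textbf{Withdrawal?}, and contradict the maximality of $\beta(\gamma)$ in each case. The only cosmetic difference is that in the \textbf{Withdrawal?} case the paper observes that $\beta(\gamma)\cup\{u\}$ would be a larger independent set, while you invoke the equivalent domination property of a maximal independent set; both are valid.
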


\begin{proof}
Suppose $\beta(\gamma)$ is a maximal independent set of $G$.
Suppose by contradiction that $\gamma$ is not stable, \emph{i.e.} that one of the rules is enabled on some node $u$. Then:
\begin{itemize}
\item If \textbf{Candidacy} is enabled on a node $u$ in $\gamma$, then   $s_u = \bot$ and $u\not\in\beta(\gamma)$. Then $u$ has only neighbours with $s$-value $\bot$ thus $u$ has no neighbour in $\beta(\gamma)$. Thus $\beta(\gamma) \cup \quickset{u}$ is also an independent set, bigger than $\beta(\gamma)$. This is a contradiction with the maximality of $\beta(\gamma)$.
\item If \textbf{Withdrawal?} is enabled on a node $u$ in $\gamma$, then $s_u=\top$ with a neighbour with $s$-value $\top$ thus $u\not\in\beta(\gamma)$ by definition. Then  and   no neighbour of $u$ can    be  in $\beta(\gamma)$ by definition  because $s_u= \top$.  This implies that $\beta(\gamma) \cup \quickset{u}$ is a bigger independent set than $\beta(\gamma)$: contradiction with the maximality of $\beta(\gamma)$.
\end{itemize}
\ \end{proof}

We synthetize results from Lemma~\ref{betamax} and Lemma~\ref{betamax2} as follows:

\begin{corollary} \label{adv-cor-stablecaracterization}
$\gamma$ is a stable configuration if and only if $\beta(\gamma)$ is a maximal independent set.
\end{corollary}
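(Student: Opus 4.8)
The plan is to obtain Corollary~\ref{adv-cor-stablecaracterization} as an immediate synthesis of Lemma~\ref{betamax} and Lemma~\ref{betamax2}: between them, these two lemmas already establish both directions of the claimed equivalence, so the corollary merely records their conjunction.

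Concretely, I would argue as follows. For the forward implication, assume $\gamma$ is a stable configuration; then Lemma~\ref{betamax} applies verbatim and yields that $\beta(\gamma)$ is a maximal independent set of $G$. For the reverse implication, assume $\beta(\gamma)$ is a maximal independent set of $G$; then Lemma~\ref{betamax2} applies verbatim and yields that $\gamma$ is stable. Since each of the two lemmas is quantified over an arbitrary configuration $\gamma$, no additional genericity argument is needed, and the two implications put together are exactly the stated ``if and only if''.

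I do not expect any real obstacle here: all the substance lies in Lemmas~\ref{betamax} and~\ref{betamax2}, whose proofs unfold the definition of a stable configuration (no rule enabled on any node) and run a short case analysis on the $s$-value of a hypothetical node witnessing non-stability, resp.\ non-maximality. The only reason to isolate the corollary is pragmatic: the $\beta$-characterization of stable configurations is the natural interface for the convergence analysis of the anonymous algorithm that follows, where one wants to reason about ``reaching a stable configuration'' purely in terms of the independent set $\beta(\gamma)$ becoming maximal. I would therefore keep the proof to a single sentence invoking both lemmas.
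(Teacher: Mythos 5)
Your proposal is correct and matches the paper exactly: the paper states the corollary as a direct synthesis of Lemma~\ref{betamax} (stability implies maximality of $\beta(\gamma)$) and Lemma~\ref{betamax2} (maximality of $\beta(\gamma)$ implies stability), with no additional argument. Nothing further is needed.
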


Now that we know that if $\beta(\gamma)$ is a maximal independent set the configuration is stable, we need to prove that it does grow to be maximal. To do this, we begin by proving that it cannot lose members.

Lemma~\ref{adv-lem-betagrows} means that a node in the independent set represented by $\beta(\gamma)$ remains in this set in the futur.
\begin{lemma} \label{adv-lem-betagrows}
If $\gamma \rightarrow \gamma'$, then $\beta(\gamma) \subseteq \beta(\gamma')$.
\end{lemma}

\begin{proof}
If $u$ is in $\beta(\gamma)$, we have by definition:
\begin{enumerate}
\item $s_u^{\gamma}=\top$
\item Every neighbour of $u$ have $s$-value $\bot$ in $\gamma$
\end{enumerate}
Point~1 implies that \textbf{Candidacy} is not enabled on $u$ in $\gamma$ nor on any neighbour of $u$. Moreover, due to  Point~2,  \textbf{Withdrawal?} is not enabled on $u$ in $\gamma$ nor on any neighbour of $u$.

Since no rule is enabled on $u$ or on its neighbours, the $s$-values of $u$ and its neighbours is the same in $\gamma$ and $\gamma'$ and thus $u \in \beta(\gamma')$.
\ \end{proof}

%

Now that we know that $\beta$ can only grow, it remains to prove that it does within a finite period of time. We are introducing the following concepts to this end:

$A \subseteq V$ is said to be a \motnouveau{candidate set} of a configuration $\gamma$ if $$\forall u \in A, (s_u^{\gamma} = \top) \wedge \left( \forall v\in N(u),  (s_v^{\gamma} = \bot) \vee (v \in A) \right).$$ Note that there can be multiple candidate set of a given configuration, for example $\emptyset$ and the set of all nodes with $s$-value $\top$ are always candidate sets.

\begin{figure}[h]
\begin{center}
\begin{tikzpicture}
\tikzstyle{vertex}=[fill=white, draw=black, shape=circle]
\tikzstyle{edge}=[-]
                  \node[style=vertex]   (0) at (-12, 8) {$a$};       \draw (0.north) node[above]{$\top$} ;
 	    	\node [style=vertex] (1) at (-10, 8) {$b$};  \draw (1.north) node[above]{$\top$} ;

 	         \node [style=vertex] (2) at (-11, 6) {$c$}; \draw (2.north) node[above]{$\bot$} ;
 		\node [style=vertex] (3) at (-9, 6) {$d$}; \draw (3.north) node[above]{$\top$} ;

 		\draw [style=edge] (0) to (1);
 		\draw [style=edge] (0) to (2);
 		\draw [style=edge] (1) to (2);
 		\draw [style=edge] (2) to (3);
\end{tikzpicture}
\caption{$c$ has $s$-value $\bot$ and cannot be in a candidate set. $\quickset{a,b,d}$, $\quickset{a,b}$ and $\quickset{d}$ are candidate set for that configuration, but not $\quickset{a}$, $\quickset{b}$, $\quickset{a,d}$ or $\quickset{b,d}$.}
\end{center}
\end{figure}
 
In the remaining of this part, we will almost always talk about a given execution  $\gamma_0 \xrightarrow{t_1} \gamma_1 \dotsm \gamma_{i-1} \xrightarrow{t_i} \gamma_{i}\dotsm$ that is supposed to be ``complete'' in the sense that it either stops in a stable configuration, or is infinite. The notations $\gamma_i$ and $t_i$ will be used in reference to this when not stated otherwise. In some cases though, we will not be able to do it that way. When we have to reason about probabilities of events to happen, we will need to reason on \motnouveau{partial} execution $\gamma_0 \xrightarrow{t_1} \gamma_1 \dotsm \gamma_{i-1} \xrightarrow{t_i} \gamma_{i}$, whose future has yet to be defined. It will be made explicit when we do so.

In order to be able to count the total number of moves, we aim to separate the performed moves into easily countable classes. For this, we define the \motnouveau{timed moves}, that are couples formed by the index of a transition and a move the transition of that index: $(i,\delta)$ is a timed move when $\delta \in t_i$.
Then, we want to identify the candidates set to which are connected each timed move in the following manner:

Let $A$ be a candidate set of $\gamma_i$, $u\in A$, and $j\in \Nat$ such that $i < j$.
\begin{itemize}
\item When $\delta= (u, \textbf{Withdrawal?})$ is a move in $t_j$, we say that the timed move $(j,\delta)$ is \motnouveau{labelled} $(i,A)$ if $\forall k\in \segment{i}{j-1} , s_u^{\gamma_k} = \top$. We can understand this definition by saying that $\delta$, as a move of the $j$-th transition, comes from the existence of the candidate set $A$ in the configuration $\gamma_i$. 
\item When $i>0$, if $\delta = (u, \textbf{Candidacy})$ is a move in $t_{i}$, we say that the timed move $(i,\delta)$ is \motnouveau{labelled} $(i,A)$. We can understand this by saying that $\delta$, as a move of the $i$-th transition, is a move responsible for the apparition of $A$ as a candidate set of $\gamma_i$.
\end{itemize}

There, you may notice that one given move can have multiple label as a move of a given transition. For example, for $A$ and $A'$ two candidate set of $\gamma_i$ such that $A\subseteq A'$, if $(j,\delta)$ has label $(i,A')$, it has also label $(i,A)$.
As we are interested in a partition of the effective move of an execution not to count the same move multiple times, we will define a specialized version of the labels, using only the specific candidate set, defined as follows:

For each configuration $\gamma_i$ with $i>0$, we take \\$A_i = \quickset{ u \in V | (s_u^{\gamma_{i-1}} = \bot) \wedge (s_u^{\gamma_{i}} = \top)}$. Due the guard of the \textbf{Candidacy} rule, one can verify that it is exactly the set of nodes $u$ such that $(u,\textbf{Candidacy}) \in t_i$. We also define $A_0 = \quickset{ u \in V | s_u^{\gamma_{0}} = \top}$.

\begin{lemma} \label{adv-lem-Aicandidate}
For any $i \in \Nat$, $A_i$ is a candidate set of the configuration $\gamma_i$.
\end{lemma}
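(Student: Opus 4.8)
The plan is to verify the two defining clauses of a candidate set for $A_i$ directly, distinguishing the base case $i=0$ from the case $i>0$. For $i=0$, the set $A_0$ is \emph{by definition} exactly the set of all nodes with $s$-value $\top$ in $\gamma_0$; hence for any $u\in A_0$ we immediately have $s_u^{\gamma_0}=\top$, and if a neighbour $v$ of $u$ has $s_v^{\gamma_0}=\top$ then $v\in A_0$ as well. That is precisely the second clause, so the case $i=0$ is settled without effort.

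For $i>0$, fix $u\in A_i$. The first clause, $s_u^{\gamma_i}=\top$, is part of the definition of $A_i$. For the second clause I would rely on a plain inspection of the two rules of the algorithm: the only rule whose command can set $s$ to $\top$ is \textbf{Candidacy}, and the only rule that can be enabled on a node whose $s$-value is $\bot$ is again \textbf{Candidacy}. Since $u$ satisfies $s_u^{\gamma_{i-1}}=\bot$ and $s_u^{\gamma_i}=\top$, the move $(u,\textbf{Candidacy})$ must therefore belong to $t_i$. In particular \textbf{Candidacy} was enabled on $u$ in $\gamma_{i-1}$, which by its guard forces $s_v^{\gamma_{i-1}}=\bot$ for every $v\in N(u)$.

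Now take an arbitrary $v\in N(u)$. Because $s_v^{\gamma_{i-1}}=\bot$, the only move that can be executed on $v$ in the transition $\gamma_{i-1}\xrightarrow{t_i}\gamma_i$ is $(v,\textbf{Candidacy})$. So there are two possibilities: either no move touching $v$ occurs, in which case $s_v^{\gamma_i}=s_v^{\gamma_{i-1}}=\bot$; or $(v,\textbf{Candidacy})\in t_i$, in which case $s_v^{\gamma_{i-1}}=\bot$ and $s_v^{\gamma_i}=\top$, so $v\in A_i$ by the very definition of $A_i$. In either case the second clause $\bigl(s_v^{\gamma_i}=\bot\bigr)\vee(v\in A_i)$ holds, which finishes the argument.

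As for difficulty, there is essentially no obstacle: the statement is a bookkeeping fact about how $A_i$ was defined, and the only point requiring any care is the rule-inspection step — that $s$ can rise to $\top$ only through \textbf{Candidacy}, and that a $\bot$-valued node has no other rule available — since that is exactly what guarantees that every neighbour of a node that just became candidate is either still $\bot$ or itself just became candidate, and hence lies in $A_i$.
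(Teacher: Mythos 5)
Your proof is correct and follows essentially the same route as the paper's: handle $i=0$ directly from the definition of $A_0$, and for $i>0$ use the guard of \textbf{Candidacy} to get $s_v^{\gamma_{i-1}}=\bot$ for every neighbour $v$ of $u$, then conclude $v\in A_i$ or $s_v^{\gamma_i}=\bot$ by the same case split on whether $(v,\textbf{Candidacy})\in t_i$. The only cosmetic difference is that you re-derive the identification of $A_i$ with the set of nodes executing \textbf{Candidacy} in $t_i$, which the paper states as an immediate observation right after defining $A_i$.
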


\begin{proof}
By definition of $A_0$, it contains every node with value $\top$ in $\gamma_0$, and must then be a candidate set.

If $i\not=0$, let's take $u \in A_i$, and $v \in N(u)$. By definition of $A_i$, $(u,\textbf{Candidacy}) \in t_i$. Since $t_i$ is a valid set of move, we have $s_v^{\gamma_{i-1}}=\bot$. Then we have either $(v,\textbf{Candidacy}) \in t_i$ and then $v \in A_i$, or $s_v^{\gamma_{i}} = \bot$. Thus, $A_i$ is a candidate set.

\ \end{proof}

Then, when the timed move $(j,\delta)$ is labelled $(i,A_i)$, we say that it is of \motnouveau{color}~$i$.

\begin{lemma} \label{adv-lem-colorunique}
For any integer $j>0$, and any move $\delta \in t_j$, $(j,\delta)$ has a unique color.
\end{lemma}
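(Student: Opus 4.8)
The plan is to split on which rule appears in $\delta$, because the algorithm of this section has only the two rules \textbf{Candidacy} and \textbf{Withdrawal?}, and the definition of a timed move being ``labelled'' $(i,A)$ treats these two rules by two separate clauses. Writing $\delta = (u,r)$, recall that a \emph{color} of $(j,\delta)$ is precisely a label of the special form $(i,A_i)$. So for each of the two cases I first identify which time-indices $i$ can possibly appear in a label, then exhibit exactly one admissible color, then rule out any second one. The reference to Lemma~\ref{adv-lem-Aicandidate} (each $A_i$ is a candidate set of $\gamma_i$) is what makes the candidate-set component of the purported color legitimate.

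If $r = \textbf{Candidacy}$, then only the second clause of the definition applies, so every label of $(j,\delta)$ has time-index exactly $j$; hence every color has time-index $j$, and the only possible color is $(j,A_j)$. It remains to check this is indeed a label: from $(u,\textbf{Candidacy})\in t_j$ we get $s_u^{\gamma_{j-1}}=\bot$ and $s_u^{\gamma_j}=\top$, so $u\in A_j$ by definition of $A_j$, and $A_j$ is a candidate set of $\gamma_j$ by Lemma~\ref{adv-lem-Aicandidate}. This simultaneously gives existence and uniqueness in this case.

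If $r = \textbf{Withdrawal?}$, then only the first clause applies, so labels of $(j,\delta)$ have the form $(i,A)$ with $i<j$, $A$ a candidate set of $\gamma_i$ containing $u$, and $s_u^{\gamma_k}=\top$ for all $k\in\segment{i}{j-1}$. For existence, the guard of \textbf{Withdrawal?} forces $s_u^{\gamma_{j-1}}=\top$; let $i$ be the least index such that $s_u^{\gamma_k}=\top$ for all $k\in\segment{i}{j-1}$ (well defined and $\le j-1$). If $i=0$ then $s_u^{\gamma_0}=\top$, so $u\in A_0$; if $i>0$ then minimality forces $s_u^{\gamma_{i-1}}=\bot$, and together with $s_u^{\gamma_i}=\top$ this gives $u\in A_i$. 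In either subcase $A_i$ is a candidate set of $\gamma_i$ by Lemma~\ref{adv-lem-Aicandidate}, contains $u$, and the $\top$-stretch condition holds by the choice of $i$, so $(j,\delta)$ has color $i$.

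For uniqueness in this last case, suppose $(j,\delta)$ had two colors $i<i'$. Since $i'\ge 1$, the condition $u\in A_{i'}$ forces $s_u^{\gamma_{i'-1}}=\bot$; but $i\le i'-1\le j-1$, so $i'-1\in\segment{i}{j-1}$, and the $\top$-stretch condition attached to color $i$ yields $s_u^{\gamma_{i'-1}}=\top$, a contradiction. The whole argument is essentially definition-chasing; the one place that needs care — and which I regard as the only real obstacle — is keeping the index bookkeeping straight, namely that $i'-1$ genuinely lies in the interval $\segment{i}{j-1}$ controlled by the smaller color, and that the case $i=0$ (using $A_0$ rather than an ``$s$ flips from $\bot$ to $\top$'' characterisation) is handled separately in the existence step.
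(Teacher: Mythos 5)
Your proof is correct and follows essentially the same route as the paper's: case split on the rule, use of the guard plus the definition of $A_j$ for \textbf{Candidacy}, and for \textbf{Withdrawal?} the minimal index of the $\top$-stretch for existence together with the fact that $u\in A_{i'}$ (for $i'>0$) forces $s_u^{\gamma_{i'-1}}=\bot$ for uniqueness. The only cosmetic difference is that you compare two arbitrary colors $i<i'$ directly, whereas the paper compares a purported second color with the constructed minimal one; both arguments are the same contradiction.
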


\begin{proof}
Let $\delta= (u,r)$ be a move of $t_j$.  Recall that it means that $u$ executes  rule $r$ in $\gamma_{j-1} \xrightarrow{t_{j}} \gamma_{j} $.

Recall also that $(j,\delta)$ having color $i$ means by definition having label $(i,A_i)$. We must then prove that there exists an unique $ i\leq j$ such that $(j,\delta)$ has label $(i,A_i)$. We split cases according to the nature of rule $r$:
\begin{itemize}

\item If $r=\textbf{Candidacy}$, the guard of rule \textbf{Candidacy} guarantees that $s_u^{\gamma_{j-1}} = \bot$ and $s_u^{\gamma_{j}} = \top$, and then $u\in A_j$ by definition of $A_j$.
Thus $(j,\delta)$ has label $(j,A_j)$, \emph{i.e.} color $j$.
Moreover, by definition of the label of a \textbf{Candidacy} move, $(j,\delta)$ can only have a label that have $j$ as left-hand side. 
Then, by definition of the colors, it cannot have another color than $j$.

\item If $r=\textbf{Withdrawal?}$, we have $s_u^{\gamma_{j-1}} = \top$ from the guard of the \textbf{Withdrawal?} rule.
Then we look for the first index $\ell$ such that $u$ have continuously $s$-value $\top$ until the $j$-th transition: $$\ell = \min \quickset{i | i \le j \wedge \forall k \in \segment{i}{j-1}, s_u^{\gamma_k} = \top}$$
There we split cases according to the value of $\ell$:
\begin{itemize}
\item If $\ell = 0$, we have $u \in A_0$ by definition of $A_0$.
\item Otherwise, $\ell >0$ 
and then by definition of $\ell$ we have that $s_u^{\gamma_{\ell-1}} = \bot$. Thus $u \in A_{\ell}$, since $s_u^{\gamma_{\ell}}=\top$ by definition of $\ell$.
\end{itemize}
In both cases, since the state of $u$ is the same from the configuration of index $\ell$ to the configuration of index ${j-1}$, the timed move $(j,\delta)$ is labelled $(\ell,A_{\ell})$ and thus is of color $\ell$.

Now that we've proved that every timed move has a color, it remains to prove unicity.
Suppose by contradiction that $(j,\delta)$ is also of color $\ell'\not = \ell$, \emph{i.e.} of label $(\ell',A_{\ell'})$.
Having such label means that the $s$-value of $u$ remains $\top$ in configurations of index $\ell'$ to $j-1$.
Recall that $\ell$ is defined as the minimum index having that property, thus we must have $\ell < \ell'$. 
But then we have $\ell \leq \ell'-1 \leq j-1$, thus $s_u^{\gamma_{\ell'-1}} = \top$ by definition of $\ell$.
This means that $u \not\in A_{\ell'}$ by definition of $A_{\ell'}$, contradiction with $(j,\delta)$ being labelled $(\ell',A_{\ell'})$.
\end{itemize}
\ \end{proof}

We also extend the notion of label and color to \motnouveau{possible moves}. If $\gamma_i$ has $\delta$ as  possible move, then $(i+1,\delta)$ is of a certain label (resp. color) when it would have been of that label (resp. color) if $\delta$ was in $t_{i+1}$, every previous configurations and sets of moves being unchanged. Note that the above lemma is still true for this extended notion of color, as the proof only uses previous configurations.

As the definition of the label of a \textbf{Withdrawal?} move only depend on ``past'' moves, we can naturally extend that notion to \textbf{Withdrawal?} \motnouveau{possible moves}: if $\delta$ is a possible move of $\gamma_i$, then $(i+1,\delta)$ is of a certain label (resp. color) when it would have been of that label (resp. color) if it was in $t_{i+1}$, every previous configurations and sets of moves being unchanged.

The lemma that we will prove next tells us that when a label dies out, it remains dead forever: when there is no possible move that would have that label in a given configuration, no further configuration will have possible moves that would have that label. That will help us later count the number of timed move with a given label.


\begin{lemma} \label{adv-lem-colorend}
Let $i \leq j$, $A$ be a candidate set of configuration $\gamma_i$, and $u\in A$. If no possible move $\delta$ 
on $u$ in $\gamma_j$ 
is such that $(j+1,\delta)$ has label $(i,A)$, then any configuration $\gamma_k$ with $k > j$ has no possible move $\delta'$ for $u$ in $\gamma_k$ 
such that $(k+1,\delta')$ has label $(i,A)$.
\end{lemma}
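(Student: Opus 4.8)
The plan is to prove Lemma~\ref{adv-lem-colorend} by first understanding exactly when a possible move $\delta$ on $u$ in some configuration $\gamma_m$ (with $m \geq i$) has label $(i,A)$. By the definitions given above, a move $\delta = (u,r)$ with label $(i,A)$ must be a \textbf{Withdrawal?} move (a \textbf{Candidacy} move can only have a label with left-hand side equal to the index of its own transition, and since $i \leq j < m$ here, that is impossible), and it requires that $s_u^{\gamma_k} = \top$ for all $k \in \segment{i}{m-1}$. Moreover, for \textbf{Withdrawal?} to be a possible move of $\gamma_m$, we additionally need some neighbour $v \in N(u)$ with $s_v^{\gamma_m} = \top$. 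So the key observation is: the existence of a possible move on $u$ in $\gamma_m$ with label $(i,A)$ is equivalent to the conjunction of (1) $s_u$ has been continuously $\top$ from index $i$ through index $m-1$ inclusive, and (2) $u$ has a neighbour with $s$-value $\top$ in $\gamma_m$. Also note $s_u^{\gamma_i}=\top$ holds automatically since $u\in A$ and $A$ is a candidate set of $\gamma_i$.

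The hypothesis says this fails at $j$: either $s_u$ was not continuously $\top$ from $i$ to $j-1$, or $u$ has no $\top$-valued neighbour in $\gamma_j$. First I would dispose of the case where condition (1) already fails at $j$, i.e. there is some $k_0 \in \segment{i}{j-1}$ with $s_u^{\gamma_{k_0}} = \bot$: then for any $k > j$ we still have that same $k_0 \in \segment{i}{k-1}$ witnessing $s_u^{\gamma_{k_0}} = \bot$, so condition (1) fails for $\gamma_k$ too and no possible move on $u$ in $\gamma_k$ can carry label $(i,A)$. In the remaining case, condition (1) holds at $j$ (so $s_u^{\gamma_k}=\top$ for all $k\in\segment{i}{j-1}$), and the failure must come from condition (2): $u$ has no neighbour with $s$-value $\top$ in $\gamma_j$. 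I would then argue that $s_u^{\gamma_j} = \top$ as well: since $s_u^{\gamma_{j-1}} = \top$ (using $j-1 \in \segment{i}{j-1}$, valid because $j \geq i$; the boundary sub-case $j=i$ uses $s_u^{\gamma_i}=\top$ directly), the only rule that could change $s_u$ in $\gamma_{j-1}\xrightarrow{t_j}\gamma_j$ is \textbf{Withdrawal?}, whose guard requires a $\top$-valued neighbour of $u$ in $\gamma_{j-1}$; I need to handle this carefully, since the claim is about $\gamma_j$ not $\gamma_{j-1}$. The cleaner route: show that in $\gamma_j$, node $u$ has $s$-value $\top$ and all neighbours have $s$-value $\bot$, hence $u \in \beta(\gamma_j)$, and then invoke Lemma~\ref{adv-lem-betagrows} to conclude $u \in \beta(\gamma_k)$ for all $k \geq j$, which means $u$ has $s$-value $\top$ and no $\top$-valued neighbour in every such $\gamma_k$ — so condition (2) fails at every $k > j$, and again no possible move on $u$ there carries label $(i,A)$.

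The one gap to close is justifying $s_u^{\gamma_j} = \top$ given only that $s_u^{\gamma_{j-1}} = \top$ and that $u$ has no $\top$-neighbour in $\gamma_j$. Suppose for contradiction $s_u^{\gamma_j} = \bot$; then $u$ executed \textbf{Withdrawal?} in the $j$-th transition, so $(j,(u,\textbf{Withdrawal?})) \in t_j$ is a genuine timed move, and since $s_u^{\gamma_k}=\top$ for all $k\in\segment{i}{j-1}$, this timed move has label $(i,A_i)$ for the appropriate minimal index — but more to the point, reconsider: actually the hypothesis is phrased about \emph{possible moves of $\gamma_j$}, and $\textbf{Withdrawal?}$ being executed means it \emph{was} a possible move of $\gamma_{j-1}$ yielding a timed move in $t_j$; this does not directly contradict the hypothesis about $\gamma_j$. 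So instead I argue directly: if $u$ has no $\top$-neighbour in $\gamma_j$ and $s_u^{\gamma_j}=\bot$, I do not get a contradiction from $\gamma_j$ alone — but then condition (1) \emph{still holds up to $j-1$} and I should restart the case analysis at the \emph{first} index $\geq i$ where $s_u$ becomes $\bot$; since that index is $\leq j$, the "condition (1) fails" branch applies with $k_0 = j$ (note $j \in \segment{i}{k-1}$ for every $k>j$), and we are done. Thus the genuinely delicate point is not a computation but getting the case split exhaustive and the index ranges right, particularly the boundary $j=i$ and the interplay between "possible move of $\gamma_j$" versus "executed move in transition $t_j$"; I expect that bookkeeping to be the main obstacle, while the substantive content is just Lemma~\ref{adv-lem-betagrows} plus monotonicity of the "$s_u$ was $\bot$ at some past index" predicate.
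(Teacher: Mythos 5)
Your argument is correct and follows essentially the same route as the paper's proof: split on whether $s_u$ stayed continuously $\top$ from $\gamma_i$ onward, kill all future labelled moves by a single $\bot$-witness index in the first case, and in the second case deduce that \textbf{Withdrawal?} is disabled in $\gamma_j$, hence $u \in \beta(\gamma_j)$, and conclude with Lemma~\ref{adv-lem-betagrows}. Your initial ``equivalence'' is slightly imprecise (it omits $s_u^{\gamma_m}=\top$, i.e.\ the continuity range should run through $m$), but you notice this and repair it by folding the $s_u^{\gamma_j}=\bot$ sub-case into the witness branch, so the final case analysis is exhaustive and sound.
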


\begin{proof}
Under the assumption of the lemma,  node  $u$ is in a candidate set $A$ of $\gamma_i$. Thus we have $s_u^{\gamma_{i}}=\top$. We are then in either of those cases:
\begin{itemize}
\item $\exists \ell \in \segment{i+1}{j}$, $s_u^{\gamma_{\ell}}=\bot$ ($u$ has changed state between the $i$-th and the $j$-th transition), and there can't be any move $\delta$ on $u$ such that $(k+1,\delta)$ would be labelled $(i,A)$ by definition of the labels.
\item $\forall \ell \in \segment{i+1}{j}$, $s_u^{\gamma_{\ell}}=\top$ ($u$ has not changed state between the $i$-th and the $j$-th transition), which menans that $s_u^{\gamma_j} = \top$. Since by hypothesis no possible move $\delta$ on $u$ in $\gamma_j$ is such that $(j+1,\delta)$ has label $(i,A)$, \textbf{Withdrawal?} is not enabled on $u$, which means that $u \in \beta(\gamma_j)$. Thus, using Lemma~\ref{adv-lem-betagrows}, $u \in \beta(\gamma_k)$, and no rule is enabled on $u$ in $\gamma_k$.
\end{itemize}
\
\end{proof}


Now, we focus on the  set $C_i$ of all the colors of possible \textbf{Withdrawal?} moves in configuration $\gamma_i$.
Let $Q_j^{i}$ be the set of nodes that can perform a \textbf{Withdrawal?} move of color $j$ in $\gamma_i$.

\begin{lemma} \label{adv-lem-114}
Given a configuration $\gamma_{i-1}$ and a transition $t_i$ we have:
\begin{enumerate}
\item If $t_i$ contains no \textbf{Candidacy} move, then $\sum_{j\in C_i} |Q_j^{i}| \leq \sum_{j\in C_{i-1}} |Q_j^{i-1}|$ and there is a probability at least $\frac{1}{2}$ that either:

\begin{itemize}
\item $\exists j \in C_{i-1}$ such that  $j \not\in C_{i}$.
\item $\sum_{j\in C_i} |Q_j^{i}| < \sum_{j\in C_{i-1}} |Q_j^{i-1}|$
\end{itemize}

\item If $t_i$ contains a \textbf{Candidacy} move, then either of those is true:
\begin{itemize}
\item $|\beta(\gamma_i)| > |\beta(\gamma_{i-1})|$ 
\item $\exists j \in C_{i-1}$ such that  $j \not\in C_{i}$.
\item $C_{i-1} \subseteq C_{i}$ and $|C_i| > |C_{i-1}|$.

\end{itemize}
\end{enumerate}
\end{lemma}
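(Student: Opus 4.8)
The plan is to do a careful bookkeeping argument on the colors of \textbf{Withdrawal?} possible moves, using the structure established in Lemmas~\ref{adv-lem-betagrows}, \ref{adv-lem-Aicandidate}, \ref{adv-lem-colorunique} and \ref{adv-lem-colorend}. The two items are handled separately according to whether $t_i$ contains a \textbf{Candidacy} move, but both rest on the same basic observation: a \textbf{Withdrawal?} possible move of color $j$ on a node $u$ in $\gamma_i$ corresponds to $u$ having had $s$-value $\top$ continuously since the transition of index $j$ (where $u\in A_j$), and having a $\top$-valued neighbour in $\gamma_i$; no \textbf{Candidacy} move can create such a move with a color $j<i$ since the node would have to already be $\top$-valued.

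\emph{Item 1 (no \textbf{Candidacy} move in $t_i$).} First I would argue $C_i\subseteq C_{i-1}$: since $t_i$ has no \textbf{Candidacy} move, any node that is $\top$-valued and $\top$-adjacent in $\gamma_i$ was already $\top$-valued in $\gamma_{i-1}$ with the same color $j$, and that color was already ``alive'' in $\gamma_{i-1}$ (formally, if color $j$ had no possible move in $\gamma_{i-1}$ then by Lemma~\ref{adv-lem-colorend} it has none in $\gamma_i$ either). Next, for a fixed color $j\in C_i$, I would show $Q_j^i \subseteq Q_j^{i-1} \cup (\text{neighbours of nodes activated in } t_i)$ — actually more simply, that a node newly in $Q_j^i$ must have gained a $\top$-neighbour, which forces some other node to have been $\top$-valued and non-activated, and one can match the gain against a loss. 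The cleanest route is: the only moves in $t_i$ are \textbf{Withdrawal?} moves; each such move either leaves $s$-values unchanged (random draw $0$) or flips a node from $\top$ to $\bot$. A node leaves $Q_j^i$ when it stops being $\top$-valued or loses all $\top$-neighbours; a node enters $Q_\bullet^i$ only by gaining a $\top$-neighbour, which cannot happen since no node becomes $\top$ in $t_i$. Hence $Q_j^i\subseteq Q_{j}^{i-1}$ for each $j$, giving the monotonicity $\sum_{j\in C_i}|Q_j^i|\le \sum_{j\in C_{i-1}}|Q_j^{i-1}|$. For the probabilistic half: pick any color $j\in C_{i-1}$ and any node $u\in Q_j^{i-1}$; if $u$ is not activated in $t_i$ we need to also look at its neighbour, so instead I would pick a node $u$ that \emph{is} activated in $t_i$ (if none of the relevant nodes is activated, one needs to be careful — but the hypothesis that we are proving a statement ``there is probability at least $\tfrac12$ that\dots'' lets us condition on the daemon's choice $t_i$, which is fixed). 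Among the nodes activated in $t_i$, consider one of smallest color, say of color $j_0$, with the property that all its $\top$-neighbours have color $\ge j_0$; with probability $\tfrac12$ its random draw is $1$ and it flips to $\bot$, which strictly decreases $|Q_{j_0}^i|$ (removing $u$) without increasing any $|Q_j^i|$, so either $\sum |Q_j^i|$ strictly drops or $j_0$ disappears from $C_i$. The main subtlety is ruling out that flipping $u$ \emph{adds} a node to some $Q_j^i$: it cannot, because turning $u$ from $\top$ to $\bot$ only removes $\top$-neighbours from others.

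\emph{Item 2 (a \textbf{Candidacy} move in $t_i$).} Here the new color $i$ appears. I would argue: $A_i\neq\emptyset$ is a candidate set of $\gamma_i$ by Lemma~\ref{adv-lem-Aicandidate}. Either some node of $A_i$ has all neighbours $\bot$-valued in $\gamma_i$, in which case it is in $\beta(\gamma_i)$, and since $\beta$ only grows (Lemma~\ref{adv-lem-betagrows}) and this node was not in $\beta(\gamma_{i-1})$ (it was $\bot$-valued there), $|\beta(\gamma_i)|>|\beta(\gamma_{i-1})|$. Otherwise every node of $A_i$ has a $\top$-neighbour in $\gamma_i$, so \textbf{Withdrawal?} is a possible move of color $i$ on it, giving $i\in C_i$; then either some old color $j\in C_{i-1}$ died ($j\notin C_i$), or all old colors survive, in which case $C_{i-1}\subseteq C_i$ and, since $i\notin C_{i-1}$ (color $i$ did not exist before transition $i$), $|C_i|\ge |C_{i-1}|+1$. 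The argument that $C_{i-1}\subseteq C_i$ unless some color dies is essentially the definition of ``dies'', combined with Lemma~\ref{adv-lem-colorend} to ensure a color absent from $C_i$ stays absent.

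\textbf{Main obstacle.} The delicate point throughout is the exact accounting of which node to ``charge'' the decrease to in Item~1: one must choose, among nodes with a \textbf{Withdrawal?} possible move in $\gamma_{i-1}$, one whose successful withdrawal is guaranteed not to resurrect or enlarge any $Q_j$, and show such a node is always activated when $t_i$ consists only of \textbf{Withdrawal?} moves and $C_{i-1}\neq\emptyset$ (if $C_{i-1}=\emptyset$ the probabilistic claim is vacuous because there is no $j\in C_{i-1}$, so the disjunction's first bullet is false and we'd need the sum to strictly decrease — but then $t_i$ being a valid non-empty set of moves with no \textbf{Candidacy} and no color means\dots this edge case needs the observation that $t_i$ non-empty forces a \textbf{Withdrawal?} move, which has a color, contradicting $C_{i-1}=\emptyset$). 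Handling these corner cases cleanly, and making precise the claim that a successful \textbf{Withdrawal?} never increases $\sum_j|Q_j|$, is where the real work lies; the rest is unwinding definitions.
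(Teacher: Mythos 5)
Your proposal is correct and follows essentially the same route as the paper's proof: establish $C_i\subseteq C_{i-1}$ and $Q_j^i\subseteq Q_j^{i-1}$ (the paper gets this via the contrapositive of Lemma~\ref{adv-lem-colorend}, you argue it directly from the absence of new $\top$-values, which is equivalent), then use the guaranteed \textbf{Withdrawal?} move flipping with probability $\frac{1}{2}$ to get the dichotomy in Item~1, and for Item~2 split on whether a newly candidate node has a $\top$-neighbour, exactly as the paper does. The extra ``smallest color'' selection in your Item~1 is unnecessary (any activated \textbf{Withdrawal?} node works, given the containment already proved) but harmless.
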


\begin{proof}
%
%
%
\begin{enumerate}
\item In order to prove point~1, let's suppose that $t_i$ does not contain any \textbf{Candidacy} move.

Let $j$ be a color in $C_{i-1}$.  If $u \in Q_j^i$, then $u$ can execute   \textbf{Withdrawal?}  rule in configuration $\gamma_{i}$, and the timed move $(i,(u,\textbf{Withdrawal?}))$ has color $j$.  Since $j<  i$,  by contraposition, Lemma~\ref{adv-lem-colorend} implies that 
$u$ can execute   \textbf{Withdrawal?}  rule in configuration $\gamma_{i-1}$, and the timed move $(i-1,(u,\textbf{Withdrawal?}))$ has color $j$.  
Thus,  $Q_j^i \subseteq Q_j^{i-1}$. 
Since only \textbf{Candidacy} move can create new potential color for moves, we have $C_i \subseteq C_{i-1}$ and thus $\sum_{j\in C_i} |Q_j^{i}| \leq \sum_{j\in C_{i-1}} |Q_j^{i-1}|$.
Moreover, since there is no \textbf{Candidacy} move, there must be at least a \textbf{Withdrawal?} move in $t_i$.
Then with probability $\frac{1}{2}$, $s_u^{\gamma_i}=\bot$. There are then two possibilities:
\begin{itemize}
\item If $C_{i-1} \subseteq C_{i}$, $s_u^{\gamma_i}=\bot$ implies that $|Q_k^i| < |Q_k^{i-1}|$ (since $u \notin  Q_k^i$). Thus, $\sum_{j\in C_i} |Q_j^{i}| < \sum_{j\in C_{i-1}} |Q_j^{i-1}|$ thus the inequality is then strict.
\item If not, by definition of relation $\subseteq$, we have $\exists j \in C_{i-1}$ such that  $j \not\in C_{i}$.
\end{itemize}

\item In order to prove point~2, let's suppose that $t_i$ contains a \textbf{Candidacy} move on some node $u$.
\begin{itemize}
\item If there is no possible move for $u$ in $\gamma_i$, then $u \in \beta(\gamma_i)$. Since $s_u^{\gamma_{i-1}} = \bot$ we have $u \not\in \beta(\gamma_i)$. Thus, $\beta(\gamma_{i-1} \uplus \quickset{u}) \subset \beta(\gamma{i})$ and thus $|\beta(\gamma_{i-1})|<|\beta(\gamma{i})|$
\item If not, there is a possible move for $u$ in $\gamma_i$, which is $(u,\textbf{Withdrawal?})$ of color $i$, thus $i \in C_i$
\begin{itemize}
\item If $C_{i-1} \subseteq C_{i}$, since $i \leq i-1$ we have $i\not\in C_{i-1}$ and thus $C_{i-1}\uplus \quickset{i} \subseteq C_{i}$, which gives $|C_{i-1}| < |C_{i}|$.
\item If not, by definition of relation $\subseteq$, we have $\exists j \in C_{i-1}$ such that  $j \not\in C_{i}$.
\end{itemize}
\end{itemize}
\end{enumerate}
\ \end{proof}

\begin{lemma} \label{adv-lem-115}
From any configuration $\gamma_i$ such that $C_i \not= \emptyset$, the expected number of transitions to have a color disappearing is finite. 
\end{lemma}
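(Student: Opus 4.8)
The plan is to exhibit a bounded potential and show that the adversary cannot prevent it from collapsing a color for long. The foundational observation — which I expect to be the real crux — is that for every index $k$ the sets $Q_j^k$, $j \in C_k$, are pairwise disjoint subsets of $V$. Indeed, a node $u$ able to perform a \textbf{Withdrawal?} move in $\gamma_k$ determines the single possible move $(u,\textbf{Withdrawal?})$, and by the extension of Lemma~\ref{adv-lem-colorunique} to possible moves that move has a \emph{unique} color; so $u$ belongs to exactly one $Q_j^k$. Consequently $\sum_{j \in C_k}|Q_j^k| \le n$, and since each $j \in C_k$ contributes at least one node, also $|C_k| \le n$. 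This is the point that turns the otherwise unbounded bookkeeping into a finite problem.

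Next I would bound the number of transitions containing a \textbf{Candidacy} move (call them \emph{C-transitions}) that can occur before the first index $T$ at which a color disappears. Before $T$ no color has disappeared, so by Lemma~\ref{adv-lem-114}(2) every C-transition either strictly increases $|\beta(\cdot)|$ or satisfies $C_{k-1} \subsetneq C_k$. Since $|\beta|$ is non-decreasing (Lemma~\ref{adv-lem-betagrows}) and at most $n$, and $C$ is non-decreasing before $T$ with $|C_k| \le n$ by the previous paragraph, there are at most $2n$ C-transitions before $T$.

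Then set $\Phi_k = \sum_{j \in C_k}|Q_j^k| \in \{0,\dots,n\}$, and note $\Phi_k \ge 1$ as long as no color has disappeared (because $C_i \neq \emptyset$). By Lemma~\ref{adv-lem-114}(1) a transition without a \textbf{Candidacy} move never increases $\Phi$, and, conditioned on the past, with probability at least $\tfrac12$ it either makes $\Phi$ strictly decrease or makes a color disappear. A C-transition can raise $\Phi$, but by at most $n$ each time and at most $2n$ times before $T$, so the total upward movement of $\Phi$ before $T$ is at most $2n^2$; hence $\Phi$ can strictly decrease at most $n + 2n^2$ times before $T$. Therefore, among the non-C transitions before $T$, once the probability-$\tfrac12$ event has occurred $n + 2n^2 + 1$ times, at least one occurrence must be a color-disappearance, i.e. $T$ has been reached. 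A standard Wald / optional-stopping argument — valid against the adaptive adversarial daemon since the conditional success probability is at least $\tfrac12$ at each such transition — bounds the expected number of non-C transitions before $T$ by $2(n + 2n^2 + 1)$; adding the at most $2n$ C-transitions gives $\mathbb{E}[T-i] = O(n^2) < \infty$. (Only finiteness is claimed, so the precise constant is immaterial.)

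The main obstacle is the combination of the first step with the C-transition bookkeeping: one must notice that color uniqueness makes the $Q_j^k$ a partition, which simultaneously caps the potential $\Phi$ \emph{and} the number of live colors, and then check carefully, using Lemma~\ref{adv-lem-114}, that the only mechanism by which $\Phi$ can grow is through the $O(n)$ C-transitions and never by more than the ceiling $n$. Once this is in place, the remaining drift-plus-stopping-time estimate is routine.
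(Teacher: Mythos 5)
Your proof is correct, and it rests on the same two pillars as the paper's: the uniqueness of colors for possible moves (Lemma~\ref{adv-lem-colorunique} extended to possible moves), which makes the $Q_j^k$ disjoint and bounds $|C_k|$ and $\sum_{j\in C_k}|Q_j^k|$ by $n$, and the case analysis of Lemma~\ref{adv-lem-114} providing the probability-$\tfrac12$ drift. The bookkeeping, however, is genuinely different. The paper runs a single lexicographic potential $\bigl(|C_k|, -\sum_{j\in C_k}|Q_j^k|\bigr)$ over \emph{all} transitions: with probability at least $\tfrac12$ per transition either this pair strictly increases, a color disappears, or $\beta$ grows; since the pair takes at most $(n+1)^2$ values, the expected time until ``color disappears or $\beta$ grows'' is $O(n^2)$, and it then multiplies by $n$ (the maximum number of $\beta$-growth events) to conclude, yielding an $O(n^3)$ expected bound. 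You instead stop the clock at the first color disappearance $T$: before $T$ the set of colors is nondecreasing by the very definition of $T$, so Lemma~\ref{adv-lem-114}(2) together with the monotonicity of $\beta$ (Lemma~\ref{adv-lem-betagrows}) caps the number of \textbf{Candidacy}-containing transitions before $T$ deterministically at $2n$, and the scalar potential $\Phi_k=\sum_{j\in C_k}|Q_j^k|$ then suffices for the drift argument on the remaining transitions. This avoids the final factor-$n$ multiplication and gives an $O(n^2)$ expected bound, strictly better than the paper's constant for this lemma (which would also tighten the downstream estimate in Lemma~\ref{adv-lem-116}); the paper's version, in exchange, needs no conditioning on the stopping time and treats all transitions uniformly. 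Both arguments handle the adaptive daemon in the same informal way, via the uniform conditional probability bound of $\tfrac12$.
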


\begin{proof}
From Lemma~\ref{adv-lem-colorunique}, we have that for any $k$, $0 \leq |C_k| \leq n$ and $0 \leq 2|C_k| \leq \sum_{j\in C_k} |Q_j^{k}| \leq n$. Thus, there is at most $(n+1)^2$ possible values for $(|C_k|, -\sum_{j\in C_k} |Q_j^{k}|)$. We know from Lemma~\ref{adv-lem-114} that either a color disappear, $\beta$ grows (which cannot happen more than $n$ times), or this value strictly increases (for the lexicographical order) with probability at least $\frac{1}{2}$, at every transition. Thus, after an expected number of transition of at most $ 2(n+1)^2$ either a color disappeared, $\beta$ grew, or we ended in a configuration $\gamma_{k_0}$ where $(|C_{k_0}|, -\sum_{j\in C_{k_0}} |Q_j^{k_0}|) = (n,-2n)$. From $\gamma_{k_0}$, using Lemma~\ref{adv-lem-114}, the expected number of transitions for a color to disappear or for $\beta$ to grow is at most $2$ since $(|C_{k_0}|, -\sum_{j\in C_{k_0}} |Q_j^{k_0}|)$ cannot increase anymore.

Thus, in any case, the expected number of transitions for either a color to disappear or $\beta$ to grow is at most $2(n^2+2n+2)$. Since the size of $\beta$ cannot be greater than $n$, the expected number of transitions for a color to disappear is at most $n \times 2(n^2+2n+2)$.
\ \end{proof}

We need the next lemma in order to use proof by induction, as nodes of a given candidate set $A$ in a given configuration may take value $\bot$ in a further configuration, $A$ would not then be a candidate set of that configuration. We prove a kind of hereditary property in order to make our induction work.

\begin{lemma} \label{adv-lem-inductionworks}
Let $i \leq j$. Let $A$ be a candidate set of $\gamma_i$, and \\$A' = \quickset{u \in A | \forall k \in \segment{i}{j}, s_u^{\gamma_k} = \top }$. Then:
\begin{enumerate}

\item $A'$ is a candidate set of $\gamma_j$;
\item Every move labelled $(j,A')$ is also labelled $(i,A)$.
\end{enumerate}
\end{lemma}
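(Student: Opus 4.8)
The plan is to prove both parts by carefully tracking $s$-values along the execution between configuration indices $i$ and $j$. Recall that $A' = \{u \in A \mid \forall k \in \segment{i}{j}, s_u^{\gamma_k} = \top\}$, i.e., $A'$ consists of the nodes of $A$ that have kept value $\top$ continuously from $\gamma_i$ to $\gamma_j$. The key observation driving both points is that a node in $A'$ cannot have acquired a neighbour with $s$-value $\top$ that lies outside $A'$, because such a neighbour would have had to make a \textbf{Candidacy} move, which is impossible while the node of $A'$ remains $\top$-valued.

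First I would prove Point~1, that $A'$ is a candidate set of $\gamma_j$. Take $u \in A'$ and $v \in N(u)$; I must show $s_v^{\gamma_j} = \bot$ or $v \in A'$. Suppose $s_v^{\gamma_j} = \top$. I would look at the behaviour of $v$ between indices $i$ and $j$. Since $u$ has $s$-value $\top$ throughout $\segment{i}{j}$ and $v$ is a neighbour of $u$, the guard of \textbf{Candidacy} is never enabled on $v$ during that span, so $v$ can never execute a \textbf{Candidacy} move from index $i{+}1$ through $j$. As $A$ is a candidate set of $\gamma_i$, we have $s_v^{\gamma_i} = \bot$ or $v \in A$; the first case is impossible since $v$ would then need a \textbf{Candidacy} move to reach $s_v^{\gamma_j} = \top$, so $v \in A$. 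It remains to check $v$ stayed $\top$ the whole time: if $v$ ever dropped to $\bot$ at some index in $\segment{i}{j}$, it would need a later \textbf{Candidacy} move to be back at $\top$ in $\gamma_j$, again contradicting that $u$ blocks \textbf{Candidacy} on $v$. Hence $s_v^{\gamma_k} = \top$ for all $k \in \segment{i}{j}$, so $v \in A'$, as desired.

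Next I would prove Point~2. Consider a move $\delta$ labelled $(j, A')$; I must show it is also labelled $(i, A)$. By the definitions of labels there are two kinds: a \textbf{Candidacy} move $\delta = (u, \textbf{Candidacy}) \in t_j$ with $u \in A'$, or a \textbf{Withdrawal?} move $\delta = (u, \textbf{Withdrawal?}) \in t_m$ for some $m > j$ with $u \in A'$ and $s_u^{\gamma_k} = \top$ for all $k \in \segment{j}{m-1}$. In the first case, $u \in A'$ forces $s_u^{\gamma_j} = \top$, which contradicts $(u,\textbf{Candidacy}) \in t_j$ requiring $s_u^{\gamma_{j-1}} = \bot, s_u^{\gamma_j} = \top$ — unless $j$ here is being used in the label-apparition sense; I would handle this boundary carefully, noting it cannot actually arise for $u \in A'$ when $i < j$, or reduce to the trivial equality case $i = j$. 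In the main case, $u \in A' \subseteq A$, so $u \in A$, and we have $s_u^{\gamma_k} = \top$ for all $k \in \segment{i}{j}$ (since $u \in A'$) and for all $k \in \segment{j}{m-1}$ (from the label $(j,A')$); concatenating, $s_u^{\gamma_k} = \top$ for all $k \in \segment{i}{m-1}$. That is exactly the condition for $(m, \delta)$ to be labelled $(i, A)$, using that $A$ is a candidate set of $\gamma_i$ containing $u$.

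The main obstacle I anticipate is the bookkeeping at the index boundaries — precisely matching the two cases in the definition of ``labelled'' (the \textbf{Candidacy}-apparition case versus the \textbf{Withdrawal?}-past case) and making sure the interval $\segment{i}{j}$ versus $\segment{i}{j-1}$ conventions line up so that the continuity-of-$\top$ intervals concatenate cleanly. The conceptual content is light: everything reduces to the single fact that a $\top$-valued node forbids \textbf{Candidacy} on all its neighbours, so once the right intervals are pinned down the argument is essentially forced.
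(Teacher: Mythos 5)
Your proof is correct and follows essentially the same route as the paper's: Point~1 rests on the observation that a node of $A'$, being continuously $\top$-valued on $\segment{i}{j}$, blocks any \textbf{Candidacy} move on its neighbours in that window (the paper packages this as a contradiction via the last index where the neighbour is $\bot$, which is the same idea), and Point~2 is the same concatenation of $\top$-intervals, with the \textbf{Candidacy}-label case correctly collapsing to $i=j$ where $A'=A$. No gap to report.
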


\begin{proof}
Let's first prove Point~1 by contradiction.
Suppose that $A'$ is not a candidate set of $\gamma_j$. It implies that there exist two nodes  $u \in A'$ and $v \in N(u)$ such that $s_v^{\gamma_j} = \top$ and $v \not\in A'$. There are two possible cases:
\begin{itemize}
\item If $v \not\in A$, then $s_v^{\gamma_i}=\bot$ since $A$ is a candidate set of configuration $\gamma_i$, which gives us that $i \in \quickset{x \in \segment{i}{j} | s_v^{\gamma_x} = \bot}$. 
\item If $v \in A$, then $s_v^{\gamma_i}=\top$, and since $v \not\in A'$ by definition of $A'$ the set $\quickset{x \in \segment{i}{j} | s_v^{\gamma_x} = \bot}$ is nonempty. 
\end{itemize}

\noindent In both cases $\quickset{x \in \segment{i}{j} | s_v^{\gamma_x} {=} \bot}$ is nonempty, and thus $\ell {=} max(\quickset{x \in \segment{i}{j} | s_v^{\gamma_x} {=} \bot})$ is well-defined, and $\ell \not= j$ since $s_v^{\gamma_j} = \top$ by hypothesis.
Then, $s_v^{\gamma_{\ell}} = \bot$ and $s_v^{\gamma_{\ell+1}} = \top$, thus $(v,\textbf{Candidacy})\in t_{\ell+1}$. But we know that $s_v^{\gamma_{\ell}} = \top$ (from $u \in A$) and that $u$ and $v$ are neighbours, so the guard of \textbf{Candidacy} cannot be true on node $v$ in $\gamma_{\ell}$, contradiction with $(v,\textbf{Candidacy})\in t_{\ell+1}$.

Now let's prove Point~2. We consider a timed move $(k,(u,r))$ labelled $(j,A')$
with $r$ a rule $u$ a node, and $k \in \Nat$. By definition of the labels we must have $j \leq k$ and $u \in A'$.

\begin{itemize}
\item If $r=\textbf{Candidacy}$, $s_u^{\gamma_{j-1}}$ is  $ \bot$,  and the definition of the label of a timed move induces  that $i=j$. Then, we have $A'=  \quickset{u \in A | \forall k \in \segment{i}{i}, s_u^{\gamma_k} = \top }= A$ thus $(j,A')=(i,A)$ and the timed move $(k,(u,r))$ is then labelled $(i,A)$.
\item Else, we have $r=\textbf{Withdrawal?}$.
By definition of the label of a \textbf{Withdrawal?} timed move, we have that $\forall \ell \in \segment{j}{k-1}, s_u^{\gamma_{\ell}} = \top$. Then by definition of $A'$, $\forall \ell \in \segment{i}{j}, s_u^{\gamma_{\ell}} = \top$. Thus, $\forall \ell \in \segment{i}{k-1}, s_u^{\gamma_{\ell}} = \top$, thus $(k,(u,r))$ is a timed move labelled $(i,A)$.
\end{itemize}
\ \end{proof}


The following two lemmas allow us to make links between different configurations that share properties, in order to use induction reasoning afterward. 


%
%


\begin{lemma} \label{adv-lem-111}
Let $\gamma$ and $\gamma'$ be two configurations and $u$ a node.
If $u$ and each of its neighbours have the same $s$-value in $\gamma$ and $\gamma'$ then the possible moves on $u$ are the same in $\gamma$ and $\gamma'$.
\end{lemma}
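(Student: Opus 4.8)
The final statement to prove is Lemma~\ref{adv-lem-111}: if $u$ and all its neighbours have identical $s$-values in two configurations $\gamma$ and $\gamma'$, then the possible moves on $u$ coincide. The plan is straightforward: I would simply inspect the two rules of the anonymous algorithm and observe that each of their guards is a predicate depending only on $s_u$ and the multiset $\{s_v : v \in N(u)\}$.

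Concretely, I would write: let $\delta = (u,r)$ with $r \in \{\textbf{Candidacy}, \textbf{Withdrawal?}\}$. The guard of \textbf{Candidacy} is $(s_u = \bot) \wedge (\forall v \in N(u), s_v = \bot)$, and the guard of \textbf{Withdrawal?} is $(s_u = \top) \wedge (\exists v \in N(u), s_v = \top)$. Both are Boolean combinations of the atomic facts ``$s_u = \bot$'', ``$s_u = \top$'', and the quantified statements over the $s$-values of $N(u)$. By hypothesis $s_u^{\gamma} = s_u^{\gamma'}$ and $s_v^{\gamma} = s_v^{\gamma'}$ for every $v \in N(u)$, so each atomic fact has the same truth value in $\gamma$ and $\gamma'$; hence each guard holds in $\gamma$ iff it holds in $\gamma'$. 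Therefore $r$ is enabled on $u$ in $\gamma$ iff it is enabled on $u$ in $\gamma'$, i.e. $\delta$ is a possible move in $\gamma$ iff it is a possible move in $\gamma'$. Since this holds for both rules, the sets of possible moves on $u$ coincide.

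There is essentially no obstacle here — the lemma is a direct consequence of the fact that the state model makes guards local (functions of the closed neighbourhood's variables), and that in this algorithm each node has only the single variable $s$. The only thing to be careful about is to treat both rules explicitly and to phrase the argument in terms of ``each atomic subformula of the guard has the same truth value in $\gamma$ and $\gamma'$'', rather than hand-waving. I would keep the proof to a few lines.
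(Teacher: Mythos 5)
Your proof is correct and follows exactly the paper's argument: the guards of both rules depend only on the $s$-values of $u$ and its neighbours, which are identical in $\gamma$ and $\gamma'$ by hypothesis. The paper states this in one line; your version merely spells out the two guards explicitly, which is fine.
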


\begin{proof}
It is enough to notice that guards may only check the value of attributes of the node itself and of its neighbours .
\ \end{proof}

We denote by $\Xi_{\gamma,t}(u)$  the random variable describing the $s$-value  of $u$  after the transition $t$ from configuration $\gamma$.

\begin{lemma} \label{adv-lem-samerulesamevalue}
Let $\gamma$ and $\gamma'$ be two configurations and $u$ a node.
Let $t$ and $t'$ be two valid sets of moves of respectively $\gamma$ and $\gamma'$.

If $\delta$ is a move on $u$ such that $\delta \in t \cap t'$, we have $\Xi_{\gamma,t}(u)=\Xi_{\gamma',t'}(u)$ (\emph{i.e} $\forall x \in \quickset{\top,\bot},\Pr(\Xi_{\gamma,t}(u)= x) =\Pr( \Xi_{\gamma',t'}(u)=x)  $)
\end{lemma}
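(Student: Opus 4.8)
If $\delta$ is a move on $u$ belonging to both valid sets $t$ (for $\gamma$) and $t'$ (for $\gamma'$), then $\Xi_{\gamma,t}(u)$ and $\Xi_{\gamma',t'}(u)$ have the same distribution.

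The plan is to do a straightforward case analysis on the rule appearing in the move $\delta = (u,r)$, using the fact that the command of each rule is a function only of local information, and that the randomness used is drawn fresh and independently of the configuration. First I would observe that since $\delta \in t$ and $\delta \in t'$, the rule $r$ is enabled on $u$ both in $\gamma$ and in $\gamma'$ (this is part of $t,t'$ being valid sets of moves), so in particular the guard of $r$ holds on $u$ in both configurations, which already constrains $s_u^\gamma$ and $s_u^{\gamma'}$ and the $s$-values of its neighbours in a way compatible with $r$. Then I would split into the two rules of the anonymous algorithm: if $r = \textbf{Candidacy}$, the command deterministically sets $s_u := \top$, so $\Xi_{\gamma,t}(u) = \top = \Xi_{\gamma',t'}(u)$ with probability $1$ in both cases, hence identical distributions; if $r = \textbf{Withdrawal?}$, the command draws $\mathit{Rand}(\tfrac12)$ and sets $s_u := \bot$ on outcome $1$, leaving $s_u = \top$ (its guard-forced value) on outcome $0$. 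In either outcome the resulting $s$-value of $u$ depends only on the coin and not on $\gamma$ versus $\gamma'$, so $\Pr(\Xi_{\gamma,t}(u) = \bot) = \tfrac12 = \Pr(\Xi_{\gamma',t'}(u) = \bot)$ and likewise for $\top$; the two distributions coincide.

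A small point I would make explicit is why the other moves in $t$ (resp.\ $t'$) do not matter: because a valid set of moves never contains two moves on the same node, $u$'s variable is written only by $\delta$ in both transitions, and the command of $\delta$ reads only $u$'s own variable (its neighbours' $s$-values appear in the guard but not in the command), so the post-transition value of $s_u$ is determined by $s_u$ before the transition together with the coin flip of $\delta$ — and $s_u$ before the transition is forced by the guard of $r$ to the same value in $\gamma$ and $\gamma'$. This is really the content of the lemma: the local, guard-constrained, coin-driven nature of each command makes $\Xi$ independent of everything outside $u$.

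I do not expect a serious obstacle here; the lemma is essentially a bookkeeping statement isolating the locality of commands, analogous to Lemma~\ref{adv-lem-111} for guards. The only thing to be careful about is stating precisely what ``the same distribution'' means (the parenthetical in the statement already does this) and making sure the $\textbf{Withdrawal?}$ case handles the no-change branch correctly, i.e.\ that ``$s_u$ stays $\top$'' is the same event in both configurations because the guard of $\textbf{Withdrawal?}$ forces $s_u = \top$ on entry in both. Once that is noted the proof is a two-line case split.
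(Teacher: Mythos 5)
Your proof is correct and follows essentially the same route as the paper, which simply observes that the commands of the rules do not depend on the neighbours' states, so the resulting value of $u$ has the same distribution in both transitions; your case split on \textbf{Candidacy} versus \textbf{Withdrawal?} just makes this explicit.
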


\begin{proof}
As the effects of any rule of our algorithm do not depend on the neighbours (provided they were enabled), the random variable describing the value of $u$ is the same in both cases.
\ \end{proof}

Recall that  $V(t)$ is the set of nodes who executes a move in the set $t$.

\begin{corollary} \label{adv-lem-candidatesetindependent}
Let $\gamma$ and $\gamma'$ be two configurations.
Let $A$ be a candidate set of both configurations.

We have that:
\begin{itemize}
\item The possible moves on nodes in $A$ are the same in $\gamma$ and $\gamma'$.
\item If $t$ and $t'$ be two valid sets of moves of respectively $\gamma$ and $\gamma'$ such that $V(t)\cap A = V(t') \cap A$, then $\forall u \in A$, $\Xi_{\gamma,t}(u)=\Xi_{\gamma',t'}(u)$.
\end{itemize}
\end{corollary}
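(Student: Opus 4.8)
The plan is to derive both bullets from the two preceding lemmas, Lemma~\ref{adv-lem-111} and Lemma~\ref{adv-lem-samerulesamevalue}, after observing that membership in a candidate set of a configuration completely pins down the $s$-values of a node and of all its neighbours in that configuration.

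For the first bullet, I would fix $u \in A$ and check that $u$ together with every $v \in N(u)$ carries the same $s$-value in $\gamma$ and in $\gamma'$. Indeed, since $A$ is a candidate set of $\gamma$ and $u \in A$, we have $s_u^{\gamma} = \top$, and for each $v \in N(u)$ either $v \in A$ (hence $s_v^{\gamma} = \top$) or $s_v^{\gamma} = \bot$; the same dichotomy holds in $\gamma'$ because $A$ is also a candidate set of $\gamma'$. Hence for every $v \in N[u]$ we get $s_v^{\gamma} = s_v^{\gamma'}$, so Lemma~\ref{adv-lem-111} applied to $u$ shows that the possible moves on $u$ are the same in $\gamma$ and $\gamma'$. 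Letting $u$ range over $A$ gives the first bullet.

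For the second bullet, fix $u \in A$ and split on whether $u \in V(t)$. If $u \notin V(t)$, then since $u \in A$ and $V(t) \cap A = V(t') \cap A$ we also have $u \notin V(t')$; thus $u$ executes no rule in either transition, and $\Xi_{\gamma,t}(u) = s_u^{\gamma} = \top = s_u^{\gamma'} = \Xi_{\gamma',t'}(u)$ deterministically. If $u \in V(t)$, then $u \in V(t')$ by the same set equality, so $u$ executes a rule in both transitions. Since $u \in A$ forces $s_u^{\gamma} = s_u^{\gamma'} = \top$ and the guard of \textbf{Candidacy} requires $s$-value $\bot$, the only rule that can be executed on $u$ in either transition is \textbf{Withdrawal?}; hence the move of $t$ on $u$ and the move of $t'$ on $u$ are the very same move $\delta = (u, \textbf{Withdrawal?}) \in t \cap t'$, and Lemma~\ref{adv-lem-samerulesamevalue} yields $\Xi_{\gamma,t}(u) = \Xi_{\gamma',t'}(u)$.

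I do not expect any real obstacle: the corollary is essentially bookkeeping on top of Lemma~\ref{adv-lem-111} and Lemma~\ref{adv-lem-samerulesamevalue}. The one point worth spelling out is that a $\top$-valued node can have at most one enabled rule (namely \textbf{Withdrawal?}), which is what forces the rule selected on $u$ by the daemon to coincide in the two transitions; without this remark one might worry that different rules are executed on $u$ in $t$ and in $t'$.
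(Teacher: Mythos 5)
Your proof is correct and follows exactly the route the paper intends: the paper states this as an immediate corollary of Lemma~\ref{adv-lem-111} and Lemma~\ref{adv-lem-samerulesamevalue} (with no separate written proof), and your argument—candidate-set membership pins down the $s$-values on $N[u]$, and a $\top$-valued node can only execute \textbf{Withdrawal?}, so the relevant move lies in $t \cap t'$—is precisely that derivation.
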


%
%
%


The next lemma is a technical one. We prove that if families probabilities have a certain set of properties, then then we have a lower bound for the value of those probabilities. The goal is to apply this lemma to probabilities that emerge from our induction afterward.

\begin{lemma} \label{adv-lem-proba}
Let $(p_y)_{y\in \mathbb{N}}$ and $(p_{y,z})_{(y,z)\in \mathbb{N}^2}$ be two families of probabilities, defined for $y,z$ integers such that $0 \leq z \leq y$, which have the following properties: 
\begin{enumerate}
\item $p_0 = 0$;
\item $p_1 = 1$;
\item $\displaystyle p_{y,z} \geq \frac{1}{2^z}\sum_{\ell=0}^{z} \binom{z}{\ell} p_{y-\ell}$;
\item $p_y = min(\quickset{p_{y,z} | 1 \leq z \leq y}$ for $y\geq 2$;
\end{enumerate}

Then we have  the following property: $\forall y \geq 1, p_y \geq \frac{2}{3}$.
\end{lemma}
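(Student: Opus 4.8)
The plan is to show by strong induction on $y$ that $p_y \geq \frac{2}{3}$, but the induction needs to be set up carefully because the recurrence defining $p_y$ involves $p_{y,z}$ for all $1 \leq z \leq y$, and the bound in property~3 mixes values $p_{y-\ell}$ for $\ell$ ranging from $0$ to $z$. Since $p_{y,y}$ involves $p_0 = 0$, we cannot simply say "all smaller terms are $\geq \frac{2}{3}$ hence $p_{y,z} \geq \frac{2}{3}$"; the term $\ell = z = y$ contributes $0$. So the heart of the argument is a worst-case analysis of the right-hand side of property~3.

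First I would record the base cases: $p_1 = 1 \geq \frac{2}{3}$ directly from property~2, and for $y = 2$ I would compute the only relevant value $p_{2,1} \geq \frac{1}{2}(p_2 + p_1)$ and $p_{2,2} \geq \frac{1}{4}(p_0 + 2p_1 + p_2)$, then use $p_2 = \min(p_{2,1}, p_{2,2})$ to solve the resulting inequality (both cases force $p_2 \geq \frac{2}{3}$ after isolating $p_2$). Then for the inductive step, fix $y \geq 2$ and assume $p_k \geq \frac{2}{3}$ for all $1 \leq k < y$. For a fixed $z$ with $1 \leq z \leq y$, I split the sum in property~3 into the terms with $\ell \leq y - 1$ (where $p_{y-\ell} \geq \frac{2}{3}$ by the induction hypothesis, noting $y - \ell \geq 1$) and, when $z = y$, the single term $\ell = y$ contributing $p_0 = 0$. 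Concretely, if $z < y$ every index $y-\ell$ with $0\leq \ell\leq z$ satisfies $1\leq y-\ell$, so
\[
p_{y,z} \geq \frac{1}{2^z}\sum_{\ell=0}^{z}\binom{z}{\ell}\cdot\frac{2}{3} = \frac{2}{3},
\]
using $\sum_\ell \binom{z}{\ell} = 2^z$. The only delicate case is $z = y$, where
\[
p_{y,y} \geq \frac{1}{2^y}\sum_{\ell=0}^{y}\binom{y}{\ell}p_{y-\ell} \geq \frac{1}{2^y}\left(\sum_{\ell=0}^{y-1}\binom{y}{\ell}\cdot\frac{2}{3}\right) = \frac{2}{3}\cdot\frac{2^y - 1}{2^y} = \frac{2}{3}\left(1 - \frac{1}{2^y}\right),
\]
which is strictly less than $\frac{2}{3}$, so this naive bound is not enough on its own.

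The resolution — and the step I expect to be the main obstacle — is to feed the conclusion back into itself: the term $p_{y-\ell}$ for $\ell$ close to $y$ (but positive) is not merely $\geq \frac 23$, and more importantly we must treat $p_y$ self-referentially. Since $p_y = \min_z p_{y,z} \leq p_{y,y}$ is false in the wrong direction, instead observe $p_y \leq p_{y,1} $ gives an upper bound and we want a lower bound on $p_y$; the right move is: for the case $z=y$ we also have $\binom{y}{1} = y \geq 2$ copies... actually the cleanest fix is to note that property~3 for $z=y$ can be rewritten by peeling the top term so that it expresses $p_{y,y}$ in terms of $p_{y', z'}$ with smaller $y'$, OR — more simply — to argue that the binding constraint is never $z = y$: because whatever $p_y$ turns out to be, we have $p_y = p_{y,z^*}$ for the minimizing $z^*$, and if $z^* = y$ then combining $p_y = p_{y,y} \geq \frac{1}{2^y}\sum_{\ell=0}^{y}\binom{y}{\ell}p_{y-\ell}$ with $p_{y-\ell}\geq p_y$ for... no. The correct and honest route is: bound $p_{y,y} \geq \frac{1}{2^y}\big(\sum_{\ell=0}^{y-1}\binom{y}{\ell}p_{y-\ell}\big)$, use $p_{y-\ell}\geq \frac23$ for $1\le \ell\le y-1$ and keep the $\ell=0$ term as $p_y$ itself, yielding $p_y = p_{y,y} \geq \frac{1}{2^y}\big(p_y + (2^y - 2)\cdot\frac23\big)$, i.e. $(2^y-1)p_y \geq (2^y-2)\cdot\frac23$, so $p_y \geq \frac{2^y-2}{2^y-1}\cdot\frac23 \geq \frac23$ fails again by a hair — so in fact one must also retain, in the $\ell=0$ term, that $p_y = p_{y,z^*}$ and run the estimate for the actual minimizer, or strengthen the induction hypothesis to an exact value. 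I would therefore strengthen the target: prove by induction the sharper statement that $p_y \ge \frac 23$ AND handle $z=y$ by instead using that $p_{y,y}$'s defining inequality, once we substitute $p_y$ for the $\ell=0$ occurrence and the induction hypothesis elsewhere, rearranges to $p_y(1 - 2^{-y}) \ge \frac 23 (1 - 2\cdot 2^{-y})$, whence $p_y \ge \frac23 \cdot \frac{2^y-2}{2^y-1}$; since this is an inequality $p_y$ must satisfy and we separately have from $z^* < y$ cases that $p_y \ge \frac 23$, the minimum over all $z$ is $\ge \frac 23 \cdot \frac{2^y - 2}{2^y-1}$, and iterating/bootstrapping this (or observing the fixed point of the map $q \mapsto \frac23(1-2^{1-y}) + 2^{-y} q$ is exactly $\frac23$) gives $p_y \ge \frac 23$. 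The cleanest writeup will hide this bootstrap by directly verifying that $q = \frac23$ satisfies all the constraints with equality-or-slack, so $\frac23$ is a valid lower bound for the least fixed point; that is the argument I would ultimately present.
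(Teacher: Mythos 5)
Your overall strategy (induction on $y$, treating the $\ell=0$ term self-referentially through the minimizer $z^*$) is the same as the paper's, but there is a genuine gap at exactly the point you flag as delicate, the case $z^*=y$, and neither of your proposed rescues closes it. After substituting $p_y=p_{y,z^*}$ and using only the inductive bound $p_{y-\ell}\ge\frac23$ for $1\le\ell\le y-1$, what you obtain is $(2^y-1)p_y\ge\frac23(2^y-2)$, i.e. $p_y\ge\frac23\cdot\frac{2^y-2}{2^y-1}<\frac23$, and that is the end of the road: the inequality $p_y\ge 2^{-y}p_y+\frac23(1-2^{1-y})$ is \emph{equivalent} to $p_y\ge\frac23\cdot\frac{2^y-2}{2^y-1}$, so iterating it gives nothing more, and the fixed point of the map $q\mapsto\frac23(1-2^{1-y})+2^{-y}q$ is $\frac23\cdot\frac{2^y-2}{2^y-1}$, not $\frac23$ as you assert (plug in $q=\frac23$: you get $\frac23(1-2^{-y})\ne\frac23$). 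Checking that the constant $\frac23$ ``satisfies the constraints'' proves nothing about an arbitrary family satisfying them. In fact no argument that uses only $p_1\ge\frac23$ in the step can succeed: the assignment $p_0=0$, $p_1=\frac23$, $p_{2,1}=\frac59$, $p_2=p_{2,2}=\frac49$ satisfies every fact your inductive step invokes (properties 1, 3, 4 and the bound $p_1\ge\frac23$), yet $p_2=\frac49<\frac23$; your explicit base case at $y=2$ uses $p_1=1$ and so survives, but the same deficiency recurs at every $y\ge3$, where your bound tops out at $\frac23\cdot\frac{2^y-2}{2^y-1}$ (e.g. $\frac47$ at $y=3$).

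The missing ingredient — and the step the paper's proof actually performs — is to use the hypothesis $p_1=1$ \emph{exactly} in the case $z_0=y$: after discarding the $\ell=y$ term (since $p_0=0$), peel off the $\ell=y-1$ term $\binom{y}{y-1}p_1=y$, apply the inductive bound only to $\ell\in\{1,\dots,y-2\}$, and get $p_y\ge\frac{1}{2^y-1}\bigl(y+\frac23(2^y-y-2)\bigr)=\frac23+\frac{y-2}{3(2^y-1)}\ge\frac23$ for $y\ge2$; the surplus $\frac{y}{3}$ coming from $p_1=1$ rather than $\frac23$ covers, with slack $\frac{y-2}{3}$, the $\frac23$ lost to $p_0=0$. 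A smaller point: your first display for the case $z<y$ bounds the $\ell=0$ term by $\frac23$, which is circular since that term is $p_y$ itself; the substitution $p_y=p_{y,z_0}$ is needed there too, after which that case gives $p_y\ge\frac{1}{2^{z_0}-1}\cdot\frac23(2^{z_0}-1)=\frac23$ cleanly, exactly as in the paper.
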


\begin{proof}
We prove this by induction on $y$.

For $y=1$, $p_y = 1 \geq \frac{2}{3}$. Assume now that $y\geq 2$ and that the result holds for all values in $\segment{1}{y-1}$.

For any $z \in \segment{1}{y}$, by rewriting Point~3. of the definition, we have:
$$p_{y,z}  \geq    \frac{1}{2^z} p_{y} + \frac{1}{2^z}\sum_{\ell=1}^{z} \binom{z}{\ell} p_{y-\ell} $$

Since $p_y = min(\quickset{p_{y,z} | z \in \segment{1}{y}})$, there exists $z_0 \in \segment{1}{y}$ such that $p_{y,z_0}=p_y$, which leads to, by substitution of $p_y$ in the previous equation:

$$p_{y,z_{0}}  \geq    \frac{1}{2^{z_{0}}} p_{y,z_{0}}  + \frac{1}{2^z_{0}}\sum_{\ell=1}^{z_{0}} \binom{z_{0}}{\ell} p_{y-\ell} $$

$$\frac{2^{z_0}-1}{2^{z_0}}  p_{y,z_0} \geq \frac{1}{2^{z_0}}\sum_{\ell=1}^{{z_0}} \binom{z_0}{\ell} p_{y-\ell}$$

\begin{equation} \label{eq1} p_y=p_{y,z_0} \geq \frac{1}{2^{z_0} -1}\sum_{\ell=1}^{{z_0}} \binom{z_0}{\ell} p_{y-\ell}
\end{equation}

\begin{itemize}
\item If $z_0 = y$, then we have $p_{y-z_0} = p_{z_0-z_0} = p_0 = 0$ which gives, by rewriting in (Eq.~\refeq{eq1}):

$$p_y=p_{y,y} \geq \frac{1}{2^{y} -1}\sum_{\ell=1}^{{y}-1} \binom{y}{\ell} p_{y-\ell}$$
We can then write by getting the first element of the sum out (taking the convention that $\sum_{\ell=1}^{0} s(\ell)$ sums to $0$):
$$p_y \geq \frac{1}{2^{y} -1}\sum_{\ell=1}^{{y}-1} \binom{y}{\ell} p_{y-\ell} = \frac{1}{2^{y} -1} \left(y \cdot p_{1} + \sum_{\ell=1}^{{y}-2} \binom{y}{\ell} p_{y-\ell} \right) $$
Then, since $1 \leq y-\ell \leq y-1$ for $\ell \in \segment{1}{y-2}$, by applying the induction hypothesis (which gives $p_{y-\ell}\geq \frac{2}{3}$ for $\ell \in \segment{1}{y-2}$) we obtain:
$$p_y \geq \frac{1}{2^{y} -1} \left(y + \frac{2}{3}\sum_{\ell=1}^{{y}-2} \binom{y}{\ell} \right)$$

Using the fact that  $\displaystyle \sum_{\ell=0}^{{y}} \binom{y}{\ell} = 2^y$, we can rewrite the right hand side to get:
$$p_y \geq \frac{1}{2^{y} -1} \left(y\cdot p_{1} + \frac{2}{3}\left(2^y-y-2 \right) \right)$$

$$p_y \geq \frac{2}{3} + \frac{1}{2^{y} -1} \cdot \frac{y-2}{3} \geq \frac{2}{3}$$
\item Now, if $z_0 \not= y$, we have $z_0 < y$. We can then apply the induction hypothesis on the $p_\ell$ for $\ell \in \segment{1}{z_0}$ to obtain, by rewriting in (Eq.~\refeq{eq1}):

$$p_y \geq \frac{1}{2^{z_0} -1}\sum_{\ell=1}^{z_0} \binom{z_0}{\ell}\frac{2}{3}$$

Using the fact that  $\displaystyle \sum_{\ell=0}^{{z_0}} \binom{z_0}{\ell} = 2^{z_0}$, we get:
$$p_y \geq \frac{1}{2^{z_0} -1}\frac{2}{3}(2^{z_0}-1)= \frac{2}{3}$$
\end{itemize} 
\ \end{proof}

\begin{remark}
Using Lemma~\ref{adv-lem-colorend} one can note that for a given label $(i,A)$, if there exists $j \geq i$ an index such that there is no possible timed move labelled $(i,A)$ in $\gamma_j$, it will still be the case in the future.
\end{remark}

When a color disappears, it's time to settle its accounts: did it make the independent set grow or not during its lifetime? In order to answer probabilistically to this question, we need to answer the more general question about labels.


\newcommand{\lb}[1]{{q_{#1}}}


\newcommand{\exec}{\mathcal{E}}

Let $\exec=\gamma_0 \xrightarrow{t_1} \gamma_1 \dotsm \gamma_{i-1} \xrightarrow{t_i} \gamma_{i}$ be a partial execution, and $A$ a candidate set of the last configuration of $\exec$. Then we denote by $\lb{\exec,A}$ the minimum of the probability (whatever the daemon's strategy) that in the completed execution $\exec'$ there exists $k \in \Nat$ and $u\in A$ such that:
\begin{itemize}
\item There is no possible move labelled $(i,A)$ in $\gamma_{i+k}$, and $k$ is the smallest index greater than $i$ for which it is true.
\item All moves on $u$ in $t_{i+1}, \dotsm ,t_{i+k}$ are labelled $(i,A)$;
\item $u \in \beta(\gamma_{i+k})$.
\end{itemize}

\begin{notation} Whe $\exec$ is a finite execution, we will write $\last(\exec)$ the last configuration of $\exec$. Note that every partial execution is finite.
\end{notation}

\begin{lemma} \label{adv-lem-qia-qa}
The minimum $\lb{\exec,A}$ do not depend on the execution, but only on $A$. 
\end{lemma}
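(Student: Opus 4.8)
The goal is to show that $\lb{\exec,A}$ depends only on the candidate set $A$ and not on the particular partial execution $\exec$ that ends in a configuration having $A$ as a candidate set. My plan is to take two partial executions $\exec_1$ and $\exec_2$, both ending in configurations (say $\gamma^{(1)}$ and $\gamma^{(2)}$) that have $A$ as a candidate set, and to build a probability-preserving bijection between the ways the two executions can be completed, such that corresponding completions have matching behaviour \emph{as far as the label $(i_1,A)$ and $(i_2,A)$ are concerned}. The key structural fact I would lean on is Corollary~\ref{adv-lem-candidatesetindependent}: on nodes of $A$, the set of possible moves is the same in $\gamma^{(1)}$ and $\gamma^{(2)}$, and if two valid move sets agree on which nodes of $A$ they touch, then the (random) resulting $s$-values on $A$ have the same distribution. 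This is what lets the daemon's choices outside $A$ be irrelevant to the event in the definition of $\lb{}$.

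Concretely, I would argue by a coupling/induction on the number of transitions. Given a daemon strategy $\sigma_1$ for completing $\exec_1$, I would construct a strategy $\sigma_2$ for completing $\exec_2$ that, at each step, makes the same choice \emph{restricted to $A$} (i.e., activates the same subset of $A$-nodes, with the matching rules, which is legitimate because possible moves on $A$ agree), and makes arbitrary but fixed choices outside $A$. One then tracks the sub-execution ``seen from $A$'': the relevant data is, for each node of $A$, whether it has continuously kept $s$-value $\top$ since the start of the label (so that moves on it are still labelled $(i,A)$), and the current $s$-values within $A$ and on the boundary $N(A)$ — but Lemma~\ref{adv-lem-inductionworks} and the candidate-set property guarantee the boundary stays $\bot$ as long as the surviving subset is still a candidate set, so no information leaks in from outside. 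By induction, after $k$ steps the distribution of (which nodes of $A$ still carry the live label, their $s$-values, and whether some $u\in A$ has reached $\beta$) is identical for the two coupled executions; the stopping condition ``no possible move labelled $(i,A)$ remains'' is a function of exactly this data (it means the surviving subset of $A$ lies in $\beta$ of the current configuration, by the reasoning in Lemma~\ref{adv-lem-colorend}), hence the event in the definition of $\lb{}$ has the same probability under $\sigma_1$ and the coupled $\sigma_2$.

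Taking infima: for every daemon strategy on the $\exec_1$ side there is a coupled strategy on the $\exec_2$ side achieving the same probability, so $\lb{\exec_1,A}\ge \lb{\exec_2,A}$; by symmetry the reverse inequality holds, giving equality. Hence $\lb{\exec,A}$ is a function of $A$ alone, and we may unambiguously write $\lb{A}$ (or $q_A$) from now on.

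The main obstacle is making the coupling genuinely airtight: one must be careful that the event in the definition of $\lb{}$ really is measurable with respect to the ``$A$-restricted view'' only. The subtle point is that whether a move on a node $u\in A$ in transition $t_{i+\ell}$ is labelled $(i,A)$ depends on $u$ having had $s$-value $\top$ throughout $\gamma_i,\dots,\gamma_{i+\ell-1}$ — this is purely internal history of $A$-nodes — and that ``no possible move labelled $(i,A)$ in $\gamma_{i+k}$'' reduces (via Lemma~\ref{adv-lem-colorend}'s argument) to the surviving subset $A' = \{u\in A \mid \forall \ell\le k,\ s_u^{\gamma_{i+\ell}}=\top\}$ being contained in $\beta(\gamma_{i+k})$, which, since $A'$ is a candidate set (Lemma~\ref{adv-lem-inductionworks}), is equivalent to no node of $A'$ having a $\top$-neighbour inside $A'$ — again a function of the $A$-view alone. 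Once this measurability is pinned down, the coupling argument and the infimum comparison are routine.
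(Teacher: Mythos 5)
Your argument is correct in substance and rests on the same two pillars as the paper's proof, namely Corollary~\ref{adv-lem-candidatesetindependent} (the possible moves on $A$ and the law of the resulting $s$-values on $A$ depend only on $A$ and on which $A$-nodes are activated) and Lemma~\ref{adv-lem-inductionworks} (the surviving subset of $A$ remains a candidate set and its labels refine to $(i,A)$); your identification of when the label $(i,A)$ dies out also matches the paper's use of Lemma~\ref{adv-lem-colorend}. The mechanics differ, though: the paper slices $q_{\mathcal{E},A}$ over the first index $k$ at which no move labelled $(i,A)$ is possible and proves by induction on $k$, via an explicit recursion obtained by conditioning on the first move set $t$ and on the surviving set, that each sliced quantity is a function of $(A,k)$ alone; the daemon's adaptivity is absorbed automatically because at every level the minimum over $t$ collapses to a minimum over $V(t)\cap A$. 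You instead build a strategy coupling and compare infima, which is more conceptual in that it isolates the key measurability claim (the event depends only on the $A$-restricted view), but the one step you wave through deserves care: an adaptive strategy $\sigma_1$ completing $\mathcal{E}_1$ may choose which $A$-nodes to activate as a function of random outcomes of moves \emph{outside} $A$, so the mimicking strategy $\sigma_2$ cannot literally ``make the same choice restricted to $A$'' unless it is given auxiliary randomness to simulate the outside part of execution~1. This is repairable (the minimum of the probability over randomized daemons equals the minimum over deterministic ones, so $\sigma_2$ may randomize; or one argues transition by transition, which is in effect what the paper's recursion does), but as written the construction of $\sigma_2$ is not well defined for adaptive $\sigma_1$. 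With that point patched, your coupling yields exactly the paper's conclusion.
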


\begin{proof}
Let us first remark that Lemma~\ref{adv-lem-colorend} garantees that such $k \in \Nat$ exists with probability $1$.

Now, to prove this, we define $\lb{\exec,A,k}$ the same as $\lb{\exec,A}$ but where the said first index such that there is no possible move labelled $(i,A)$ is exactly $k$.
By slicing the sample space for every possible $k$, we have $\displaystyle \lb{\exec,A} = \sum_{k=0}^{+\infty}\lb{\exec,A,k}$ (we can sum since different $k$ means disjoint events). It is then enough to prove that $\lb{\exec,A,k}$ can be expressed as a function of $A$ and $k$ to prove our lemma.

Let's prove that by induction on $k \geq 0$ that for any partial execution $\exec$
 and $A$ a candidate set of $\last(\exec)$, $\lb{\exec,A,k}$ can be expressed as a function on $A$ and $k$.

For $k=0$, we consider two cases.
\begin{itemize}
\item If there is $u\in A$ with no neighbour in $A$, we have $u \in \beta(\last(\exec))$ thus $\lb{\exec,A,k}=1$.
\item Otherwise, for any $u\in A$ there is a $v\in A$ neighbour of $u$, and since $s_v^{\last(\exec)}=\top$, we have $u\not\in \beta(\last(\exec))$. Thus, $\lb{\exec,A,k}=0$.
\end{itemize}
In both cases, it does not depend on $\exec$.


Now, we suppose that $k\geq 1$. We slice again the sample space using all the possible valid set of moves by which the execution could continue from the last configuration of $\exec$. We define $q_{\exec,A,k|t}$ which is the same probability than $\lb{\exec,A,k}$ but knowing that the first transition that follow $\exec$ use the valid set of moves $t$.
As the daemon may choose the $t$ that achieves the minimum, we can write:
\begin{equation}\label{eq:q0}
\lb{\exec,A,k}= \min_{t \text{ valid in} \last(\exec)}q_{\exec,A,k,|t}
\end{equation}

We want to write $q_{\exec,A,k|t}$ as a sum where each term represent the case where the configuration after the transition using $t$ from the last configuration of $\exec$ is a given $\gamma$. To do this, we define 
$A'_{\gamma}=\quickset{u \in A | s_u^{\gamma} = \top }$.
As $A'_{\gamma} \subset A$ and $A$ is a candidate set of $\last(\exec)$, every member of $A'_{\gamma}$ has also value $\top$ in $\last(\exec)$. Thus, $A'_{\gamma} = \quickset{u \in A | s_u^{\last(\exec)} = s_u^{\gamma} = \top }$.
Considering the partial execution $\exec \xrightarrow{t} \gamma$, we can then use Lemma~\ref{adv-lem-inductionworks} to deduce that:
\begin{itemize}
\item $A'_{\gamma}$  is a candidate set of $\gamma$
\item Every further move labelled $(i+1,A'_{\gamma})$ will also be labelled $(i,A)$ (where $i$ is the length of $\exec$, \emph{i.e.} its number of transition)
\end{itemize}
$q_{\exec,A,k|t}$ can then be written in the following form, by decomposing according every possible resulting configuration after performing the transition using the set of moves $t$:
 \begin{equation}\label{eq:q1}
 \lb{\exec,A,k|t} = \sum_{\gamma} \Pr(\Xi_{\last(\exec), t} = \gamma)\lb{\exec\xrightarrow{t} \gamma,A'_{\gamma},k-1} \end{equation}

Recall that $\Xi_{\last(\exec),t}$ is the random variable describing the configuration after the transition $t$ from configuration $\last(\exec)$.


By induction hypothesis, $\lb{\exec\xrightarrow{t} \gamma,A'_{\gamma},k-1}$ does not depend on the execution. For sake of simplification, we will now denote this value $\lb{.,A'_{\gamma},k-1}$, omiting the execution that it does not depend upon. 
We have then, by regrouping in Eq.~\eqref{eq:q1} every $\gamma$ which give the same $A'_{\gamma}$:

$$ q_{\exec,A,k|t} = \sum_{\gamma} \Pr(\Xi_{\gamma_i, t} = \gamma)q_{.,A'_{\gamma},k-1}= \sum_{B\subset A} \left( q_{.,B,k-1} \sum_{\quickset{\gamma | A'_{\gamma}=B}} \Pr(\Xi_{\gamma_i, t} = \gamma) \right)$$

And then by noting that the internal sum is the probability that $\gamma$ has value $\top$ on nodes of $B$ and $\bot$ on other nodes of $A$ we can write:
\begin{equation}\label{eq:qq2}
q_{\exec,A,k|t} =\sum_{B\subset A}  q_{.,B,k-1} \Pr(\forall u\in A, \Xi_{\last(\exec), t}(u) = \top \text{ if } u\in B, \text{else } \bot) 
\end{equation}

From Corollary~\ref{adv-lem-candidatesetindependent} $\Pr(\forall u\in A, \Xi_{\last(\exec), t}(u) = \top \text{ if } u\in B, \text{else } \bot)$ is a function of $A$ and $t$. Then from Eq.~\eqref{eq:qq2}, we have $q_{\exec,A,k|t}$ as a function of $A$, $k$ and $t$, let us write $\alpha(A,k,t)$ that quantity.

Then we can rewrite $q_{\exec,A,k| t}$ in Eq.~\eqref{eq:q0}:
$$\lb{\exec,A,k} = \min_{t \text{ valid in } \last(\exec)} \alpha(A,k,t)$$

We can then split the minimum for each possible set of activated nodes in the transition, which gives:

$$\lb{\exec,A,k} = \min_{A' \subset A} \left( \min_{t \text{ valid in } \last(\exec), V(t)\cap A = A'}\alpha(A,k,t) \right)$$

From Lemma~\ref{adv-lem-111} the possible moves on nodes of $A$ in $\last(\exec)$ is a function of $A$, the existence for a given $A'$ of a $t$ valid in $\last(\exec)$ such that $V(t)\cap A = A'$ is then function of $A$ and $A'$, $f(A,A')$. Thus, $\displaystyle \lb{\exec,A,k} {=} \min_{A' \subset A} \left( \min_{f(A,A')}\alpha(A,k,V(t)\cap A) \right)$ is a function of $A$ and $k$.

Thus, by induction, $\lb{\exec,A,k}$ can be expressed as a function of $A$ and $k$. Then ${\displaystyle \lb{\exec,A} = \sum_{k=0}^{+\infty}\lb{\exec,A,k}}$ can be expressed as a function of $A$.
\ \end{proof}

Since $\lb{\exec,A}$ does not depend on the actual execution, we will write it $q_{A}$ from now on.

In the same way as we have defined $\lb{\exec,A}$, we want to define $q_{\exec,A|A'}$ as the same thing as $\lb{\exec,A}$ but knowing that the first set of nodes of $A$ to be activated to be is exactly $A'$. However, it would not be well-defined when it is impossible for $A'$ to be the first set of nodes to be activated in $A$. It happens when $A'$ is empty, or when a node of $A'$ is not activable in $\last(\exec)$. $q_{\exec,A|A'}$ will then be defined as:
\begin{itemize}
\item The same thing as $\lb{\exec,A}$ but knowing that the first set of nodes of $A$ to be activated to be is exactly $A'$ when every node of $A'$ is activable in $\last(\exec)$.
\item In the case where $A'$ is empty, or when there is a node in $A'$ that cannot be activated in the first transition that activate a node of $A$, we take the convention $\lb{\exec,A|A'}=1$. As we will manipulate those values with minimum functions, it is handy to have those particular cases being neutral element for that operation.
\end{itemize}

\begin{lemma} \label{adv-lem-qiaa-qaa}
$q_{\exec,A|A'}$ can be expressed as a function of $A$ and $A'$.
\end{lemma}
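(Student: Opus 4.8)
The plan is to mimic the structure of the proof of Lemma~\ref{adv-lem-qia-qa}, which already established that $\lb{\exec,A} = q_A$ depends only on $A$. The key is that $q_{\exec,A|A'}$ should be expressible in terms of the $q_B$ for various $B \subseteq A$ (which we now know are execution-independent by Lemma~\ref{adv-lem-qia-qa}) together with probabilities of transition outcomes, all of which are functions of $A$ and $A'$ only.

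First I would handle the degenerate cases by the stated convention: if $A'=\emptyset$ or some node of $A'$ is not activable in $\last(\exec)$, then $\lb{\exec,A|A'}=1$, which is trivially a function of $A$ and $A'$. For the genuine case, observe that by Lemma~\ref{adv-lem-111} (or Corollary~\ref{adv-lem-candidatesetindependent}) the set of possible moves on nodes of $A$ in $\last(\exec)$ is determined by $A$; hence whether every node of $A'$ is activable, and which moves they would perform, depends only on $A$ and $A'$. Then I would slice the sample space over the first transition $t$ that activates a node of $A$, restricted to those $t$ with $V(t)\cap A = A'$. For each such transition, decompose over the resulting configuration $\gamma$, and set $B = \quickset{u\in A \mid s_u^{\gamma}=\top}$. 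By Lemma~\ref{adv-lem-inductionworks}, $B$ is a candidate set of $\gamma$ and any further move labelled $(j,B)$ is also labelled $(i,A)$, so the contribution of that branch is $q_B$ (execution-independent by Lemma~\ref{adv-lem-qia-qa}), weighted by $\Pr(\Xi_{\last(\exec),t} = \gamma)$. Regrouping the $\gamma$'s that yield the same $B$, the weight becomes $\Pr(\forall u\in A,\ \Xi_{\last(\exec),t}(u)=\top \text{ if } u\in B,\text{ else } \bot)$, which by Corollary~\ref{adv-lem-candidatesetindependent} is a function of $A$ and $V(t)\cap A = A'$ only. So each branch's value is a function of $A$ and $A'$, and since the daemon picks the minimizing $t$ over transitions with $V(t)\cap A = A'$ — and by Lemma~\ref{adv-lem-111} the feasibility of such $t$ depends only on $A,A'$ — the overall $q_{\exec,A|A'}$ is a function of $A$ and $A'$.

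One subtlety I would be careful about: the node $u$ witnessing the event in the definition of $\lb{\exec,A}$ must have \emph{all} its moves in $t_{i+1},\dots,t_{i+k}$ labelled $(i,A)$; after the first transition with set $B$, the surviving witness must lie in $B$, so the recursion genuinely passes to $q_B$ rather than to some larger probability. This is exactly the hereditary statement Lemma~\ref{adv-lem-inductionworks} provides, so it should go through cleanly. The main obstacle, such as it is, is purely bookkeeping: making sure the conditioning ``the first set of nodes of $A$ to be activated is exactly $A'$'' is correctly translated into a minimum over transitions $t$ with $V(t)\cap A = A'$ while all earlier transitions activate no node of $A$ — but since nodes of $A\setminus(\text{activated so far})$ keep $s$-value $\top$ and their candidate-set membership is preserved, Corollary~\ref{adv-lem-candidatesetindependent} and Lemma~\ref{adv-lem-colorend} make those earlier transitions irrelevant to the branch values. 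Hence no real difficulty beyond carefully reusing the machinery of the preceding lemma.
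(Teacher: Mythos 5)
Your proposal is correct and matches the paper's intent: the paper's own proof is literally ``by the same set of arguments as in Lemma~\ref{adv-lem-qia-qa}'', and what you write out — handling the degenerate conventions, conditioning on the first transition $t$ with $V(t)\cap A = A'$, passing to $q_B$ via Lemma~\ref{adv-lem-inductionworks} and Corollary~\ref{adv-lem-candidatesetindependent}, then taking the daemon's minimum — is a faithful (and slightly more streamlined, since you reuse the execution-independence of $q_B$ rather than redoing the induction on $k$) elaboration of exactly that argument.
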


\begin{proof}
By the same set of arguments as in Lemma~\ref{adv-lem-qia-qa}.
\ \end{proof}

As $q_{\exec,A|A'}$ does not depend on the actual configuration, and we will write it $q_{A|A'}$ from now on.

\begin{remark}
We note by $\gamma_A$ the configuration where every node of $A$ has value $\top$ and every other node has value $\bot$.

Note that $q_A$ is defined for every possible $A \in \mathcal{P}(V)$ ($\mathcal{P}(V)$ being the powerset of $V$): when we take the execution reduced to a single configuration $\gamma_A$, $A$ is a candidate set of the last configuration of the said execution, and thus $q_{\gamma_A,A}=q_A$ is defined.

In the same way, $q_{A|A'}$ is defined for every possibles $A,A'$ where $A \in \mathcal{P}(V)$ and $A' \subset A$.
\end{remark}

\begin{lemma} \label{adv-lem-induction}
Given a partial execution $\exec$ 
and $A$  a non-empty candidate set of 
$\last(\exec)$, we have $\lb{\exec,A} \geq \frac{2}{3}$.
\end{lemma}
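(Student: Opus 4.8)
The plan is to reduce the statement to the purely numerical Lemma~\ref{adv-lem-proba}. Since $q_{\exec,A}$ does not depend on $\exec$ (Lemma~\ref{adv-lem-qia-qa}) it suffices to bound $q_A$, and since every subset of $V$ is a candidate set of some configuration, I would define, for $y\geq 1$, $p_y=\min\quickset{q_A : A\subseteq V,\ |A|=y}$, together with $p_0=q_\emptyset$, and for $0\leq z\leq y$, $p_{y,z}=\min\quickset{q_{A|A'} : A'\subseteq A\subseteq V,\ |A|=y,\ |A'|=z}$ (recall $q_{A|\emptyset}=1$ by convention). Then I would check that these two families satisfy the four hypotheses of Lemma~\ref{adv-lem-proba}; its conclusion $p_y\geq\frac23$ for all $y\geq 1$, together with Lemma~\ref{adv-lem-qia-qa}, is exactly the statement.

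The two base cases are immediate. We have $p_0=0$ since the event defining $q_\emptyset$ asks for a node $u\in\emptyset$. And $p_1=1$ because a singleton candidate set $\quickset{u}$ forces every neighbour of $u$ to have $s$-value $\bot$, hence $u\in\beta(\last(\exec))$ and there is no possible move labelled $(i,\quickset{u})$ in $\last(\exec)$ (the only such moves would be \textbf{Withdrawal?} on $u$, which is not enabled); so the witness $(k,u)=(0,u)$ already realizes the event with probability $1$, independently of the daemon, giving $q_{\quickset{u}}=1$.

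For the relation $p_y=\min\quickset{p_{y,z}: 1\leq z\leq y}$ when $y\geq 2$, I would argue that $q_A=\min_{A'}q_{A|A'}$ where $A'$ ranges over the non-empty subsets of $A$ that can be the first subset of $A$ the daemon activates: the minimum defining $q_A$ splits over the daemon's first move touching $A$, the only alternative being that $A$ lies entirely in $\beta(\last(\exec))$, in which case $q_A=1$ and does not lower the minimum. Grouping the feasible $A'$ by their size yields the identity (and the case $z=0$ of Hypothesis~3 is trivial since $p_{y,0}=1\geq p_y$). The core is Hypothesis~3 itself. Fix $A$ and a feasible $A'$ with $|A'|=z$. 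When the daemon first activates $A'$, every node of $A'$ is enabled only for \textbf{Withdrawal?} — it carries value $\top$ and has a $\top$-neighbour, which must lie in $A$ by the candidate-set condition and cannot turn $\top$ in that same transition — so each node of $A'$ independently retains value $\top$ with probability $\frac12$. Writing $S\subseteq A'$ for the random set of survivors, $B=(A\setminus A')\cup S$ is, by Lemma~\ref{adv-lem-inductionworks}, a candidate set of the resulting configuration, and $|B|=y-\ell$ with $\ell=|A'\setminus S|\sim\mathrm{Bin}(z,\tfrac12)$. I would then observe that, from that configuration onward, a possible move is labelled $(i,A)$ if and only if it is labelled $(i+1,B)$: one implication is Lemma~\ref{adv-lem-inductionworks}.2, and the converse holds because any node carrying a move labelled $(i,A)$ after the first transition has kept value $\top$ since $\gamma_i$, hence belongs to $B$. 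Consequently the event defining $q_A$, restricted to the continuation after this first transition, coincides with the event defining $q_B$; using Lemma~\ref{adv-lem-qia-qa} to identify $q_{\exec\xrightarrow{t}\gamma,B}=q_B$ and summing over the independent coin flips gives $q_{A|A'}=\frac{1}{2^z}\sum_{S\subseteq A'}q_{(A\setminus A')\cup S}\geq\frac{1}{2^z}\sum_{\ell=0}^{z}\binom{z}{\ell}p_{y-\ell}$, and taking the minimum over $A,A'$ of sizes $y,z$ yields $p_{y,z}\geq\frac{1}{2^z}\sum_{\ell=0}^{z}\binom{z}{\ell}p_{y-\ell}$.

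The main obstacle is precisely this reduction: rigorously showing that, conditioned on the outcome $B$ of the first batch of coin flips, the remaining randomness relevant to the $q_A$-event is governed exactly by $q_B$. This requires the label coincidence ($(i,A)$ versus $(i+1,B)$) for all later possible moves, the independence of the $z$ coin flips in the first transition, and the careful treatment of the degenerate cases (nodes of $A$ or $A'$ that are not activable, and candidate sets contained in $\beta(\last(\exec))$). Once the four hypotheses are verified, Lemma~\ref{adv-lem-proba} gives $p_{|A|}\geq\frac23$, hence $q_A\geq\frac23$, and Lemma~\ref{adv-lem-qia-qa} yields $q_{\exec,A}\geq\frac23$.
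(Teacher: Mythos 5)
Your proposal is correct and follows essentially the same route as the paper's own proof: it likewise defines the size-indexed minima (there called $r_y$ and $r_{y,z}$), verifies the four hypotheses of Lemma~\ref{adv-lem-proba} through the same decomposition over the first activated subset $A'\subseteq A$ and the independent probability-$\frac{1}{2}$ withdrawals, and invokes Lemmas~\ref{adv-lem-qia-qa}, \ref{adv-lem-qiaa-qaa}, \ref{adv-lem-inductionworks} and Corollary~\ref{adv-lem-candidatesetindependent} exactly where you do. The only cosmetic differences are that the paper first normalizes to the canonical configuration $\gamma_A$ before activating $A'$, and it treats the degenerate case as ``some node of $A$ is non-activable, hence lies in $\beta$, so $q_A=1$'' rather than requiring all of $A$ to lie in $\beta(\last(\exec))$.
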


\begin{proof}
We want to denote by $r_y$ the minimum of the $q_A$ when $A$ has the constraint to be of size $y$. In the same way, we want to define $r_{y,z}$ as the minimum of the $q_{A,A'}$ when $A$ is of size $y$ and $A'$ of size $z$.
We write for $(y,z) \in \Nat^2$: 
\begin{itemize}
\item $r_y = \min (\quickset{q_{A}|A \in \mathcal{P}(V) \wedge |A|=y})$;
\item $r_{y,z} = \min(\quickset{q_{A|A'} | A \in \mathcal{P}(V)  \wedge |A|=y \wedge A' \subseteq A \wedge |A'| = z}$
\end{itemize}

We will prove that $(r_y)_{y\in \Nat}$ and $(r_{y,z})_{(y,z)\in \Nat^2}$ verify the conditions of Lemma~\ref{adv-lem-proba}.

\begin{itemize}
\item If $A \subset V$ is of size $0$, we have $A=\emptyset $, and $q_{\gamma_{\emptyset},\emptyset} = q_{\emptyset} = 0$, thus $r_0 = 0$. Also, by definition, $q_{\gamma_{\emptyset},\emptyset|\emptyset} = 0$ thus $r_{0,0}=0 \geq r_0$.

\item If $A \subset V$ is of size $1$, let's consider the execution reduced to a single configuration $\gamma_A$. Since $A$ is of size $1$, there is no possible \textbf{Withdrawal} move thus no further possible move labelled $(0,A)$, and the only node $u$ in $A$ is in $\beta(\gamma_A)$. Thus $q_A =1$ when $|A|=1$, for any choice of $A$. Thus $r_1 = 1$.

There, we have $r_{1,0}=q_{A,\emptyset} = 1$ by definition and $r_{1,1} = q_{A,A} = 1$ by the same argument than for $q_A$.

\item Now consider $A \subset V$ is of size $2$ or greater.

If there exists a node $u$ in $A$ not activable  in $\gamma_A$, then $u\in \beta{\gamma_A}$ and then $q_{\gamma_A,A}= q_A = 1$.

Now, let us assume that every node of $A$ is activable in $\gamma_A$.

The daemon's strategy that would give the minimal probability $q_{A}$ begins with the activation of a certain subset $A'$ of nodes from $A$. As Lemma~\ref{adv-lem-115} garantees that every color disappear with probability $1$, $A'$ cannot be empty. We can then write:

\begin{equation} \label{eq:minqa} 
q_{A} = \min\quickset{q_{A|A'} | A' \subseteq A \wedge A'\not=\emptyset}
\end{equation}

Thus, we have, taking the minimum for every possible set of the same size as $A$ in Eq.~\refeq{eq:minqa}:\\[0,3em]

\hspace*{1cm}  ${\min}\quickset{q_{B} | B {\in} \mathcal{P}(V) \wedge |B| {=} |A|} =$\\[5pt]
\hspace*{2cm}  $ {\min} \quickset{{\min}\quickset{q_{B|B'} | B' {\subseteq} B \wedge B'{\not=}\emptyset}| B \in \mathcal{P}(V) \wedge |B| {=} |A|}$\\[5pt]

The left hand side is exactly $r_{|A|}$, and by rewriting the right hand side we have:

$$r_{|A|} = \min\quickset{q_{B|B'} | B' \subseteq B \wedge B'\not=\emptyset \wedge B \in \mathcal{P}(V) \wedge |B| = |A|}$$

We then rewrite the right hand side by slicing the min according the different possible size for $B'$:

$$r_{|A|} {=} \min\quickset{ min \quickset{q_{B|B'} | B \in \mathcal{P}(V) \wedge |B| {=} |A| \wedge B' \subseteq B \wedge |B'|{=} z} | 1 \leq z \leq |A|}$$

We can then note that the definition of $r_{|A|,z}$ appears in the right hand side and rewrite the equation in the following form: 
 $$r_{|A|} =  \min\quickset{r_{|A|,z}| 1 \leq z \leq |A|}$$

Thus for $y\geq 2$ we have $r_{y} =  \min\quickset{r_{y,z}| 1 \leq z \leq y}$
%
%
%

\item To apply Lemma~\ref{adv-lem-proba} it remains to prove that  we have $\displaystyle r_{y,z} \geq \frac{1}{2^{|A'|}}\sum_{\ell=0}^{z} r_{y-\ell}$ for $y\geq 2$  and $z\leq y$.

To do this, let's consider $A$ of size $y\geq 2$ and $A' \subset A$ of size $z \leq y$.

When $z=0$, it means that $A'=\emptyset$, and by definition we have $q_{A|A'}=q_{\gamma_A,A|A'} = 1$ for any value of $A$. Thus, $r_{y,0}=1 \geq r_y$.

If there exists a node $u$ in $A$ not activable in $\gamma_A$, then $u\in \beta_{\gamma_A}$, and then we have $q_{A,A'}=1$ (which is greater than any probability).

Suppose now that all nodes of $A$ are activable in $\gamma_A$ and that $z>0$. We take the partial execution reduced to a single configuration $\gamma_A$, and we consider all the executions where the first time nodes of $A$ are activated, it is exactly the nodes of $A'$. Corollary~\ref{adv-lem-candidatesetindependent} implies that regarding possible moves and and probabilities of evolution of the states of $A$ is the same in any two configurations where $A$ is candidate set and do not depend on the nodes activated outside $A$. As such, we may without loss of generality suppose that the first transition to activate nodes of $A$ activates exactly the nodes of $A'$. Also, Lemma~\ref{adv-lem-inductionworks} garantees that if no move are made on $A$, $A$ will remain a candidate set, and further moves on $A$ will still have the label $(0,A)$ \emph{i.e.} color $0$ as long are they are executed on nodes that didn't change their $s$-value. As such, we can make without loss of generality that the transition where $A'$ is activated is the first.

Every activated node -the nodes of $A'$- have then probability $\frac{1}{2}$ to change its $s$-value to $\bot$. The notation $A''$ will be used to denote the set of the nodes that succeded in changing their $s$-value, which gives the following:

$$q_{\gamma_A,A|A'} = \frac{1}{2^{|A'|}}\sum_{A''\subseteq A'} q_{\gamma_A\xrightarrow{t} \gamma_{(A\setminus A'')},(A\setminus A'')}$$

Lemma~\ref{adv-lem-qia-qa} gives $q_{\gamma_A,A|A'}{=}q_{A,A'}$, Lemma~\ref{adv-lem-qiaa-qaa} gives ${q_{\gamma_A\xrightarrow{t} \gamma_{(A\setminus A'')},(A\setminus A'')} {=} q_{(A\setminus A'')}}$, thus we can rewrite the above equation as:

$$q_{A|A'} = \frac{1}{2^{|A'|}}\sum_{A''\subseteq A'} q_{(A\setminus A'')}$$

By definition we have $q_{(A\setminus A'')}\geq r_{|A| - |A''|}$ (since $|A\setminus A''| = |A| - |A''|$), we can then rewrite in the above equation, and regroup the $r$ of same index:

$$q_{A|A'} \geq \frac{1}{2^{|A'|}}\sum_{A''\subseteq A'} r_{|A|- |A''|} = \frac{1}{2^{|A'|}}\sum_{\ell=0}^{|A'|}\binom{|A'|}{\ell} r_{|A|-\ell}$$

Thus, by taking the minimum over all potential $A$ and $A'$, the left hand side becomes $min(\quickset{q_{B|B'} | A \in \mathcal{P}(A) \wedge |B|=|A| \wedge B' \subset B \wedge |B'|=|A'|})$ which is by definition $r_{|A|,|A'|}$. By the same operation the right hand side remains the same as it only depends on the size of the said sets. Therefore we have:


$$r_{|A|,|A'|} \geq \frac{1}{2^{|A'|}}\sum_{\ell=0}^{|A'|}\binom{|A'|}{\ell} r_{|A|-\ell}$$

\emph{i.e.}

$$r_{y,z} \geq \frac{1}{2^{z}}\sum_{\ell=0}^{z}\binom{z}{\ell} r_{y-\ell}$$
\end{itemize}

Thus, $(r_y)_{y\in \Nat}$ and $(r_{y,z})_{(y,z)\in \Nat^2}$ verify the conditions of Lemma~\ref{adv-lem-proba} and thus $\forall y\geq 1, r_y \geq \frac{2}{3}$.

Since, by definition of $r$, $q_{A} \geq r_{|A|}$, we have then $q_{A} \geq \frac{2}{3}$ which concludes the proof.

\ \end{proof}

In the above Lemma, we made an assumption that implies that a certain candidate set is activated until there is no more possible move associated with that candidate set. Now, it remains to prove that it ever happens.
\begin{lemma} \label{adv-lem-116}
From any configuration $\gamma_i$, the expected number of moves to reach a stable configuration is finite.
\end{lemma}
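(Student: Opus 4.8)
The plan is to combine the two types of progress already isolated in Lemma~\ref{adv-lem-114} and Lemma~\ref{adv-lem-115} with the probabilistic lower bound of Lemma~\ref{adv-lem-induction}. The quantity $|\beta(\gamma)|$ never decreases (Lemma~\ref{adv-lem-betagrows}), and it is bounded by $n$, so it can only increase at most $n$ times along any execution. Likewise, a color, once it disappears, never comes back (the remark following Lemma~\ref{adv-lem-colorend}), and at most $n+1$ distinct colors can ever be created during an execution starting from $\gamma_i$ (one per transition that contains a \textbf{Candidacy} move that creates a genuinely new color, plus color~$0$; more crudely, a color is an index of a configuration and $|C_k|\le n$ by Lemma~\ref{adv-lem-colorunique}, and between two consecutive color-creations $\beta$ or the lexicographic potential must move). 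The strategy is therefore a potential/amortization argument: define a potential that strictly decreases in expectation by a bounded amount per "phase", and bound the number of phases.

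**Key steps.** First I would fix a configuration $\gamma_i$ and let $\exec$ be the partial execution reduced to $\gamma_i$; if $C_i=\emptyset$ then by Corollary~\ref{adv-cor-stablecaracterization} we need to check whether $\gamma_i$ is already stable or whether a \textbf{Candidacy} move is enabled — in the latter case one \textbf{Candidacy} move makes a new nonempty candidate set appear. Second, I would invoke Lemma~\ref{adv-lem-115}: from any configuration with $C_i\neq\emptyset$, the expected number of transitions until some color disappears (or $\beta$ grows) is finite, indeed at most $2n(n^2+2n+2)$ by the bound in its proof. Third — the crucial accounting step — when a color $j=(\ell,A_\ell)$ disappears at some configuration $\gamma_{i+k}$, Lemma~\ref{adv-lem-induction} (applied to the partial execution up to $\gamma_\ell$ and the candidate set $A_\ell$, which is a candidate set by Lemma~\ref{adv-lem-Aicandidate}) gives that with probability at least $\tfrac{2}{3}$ there is a node $u\in A_\ell$ that ended up in $\beta(\gamma_{i+k})$, i.e. $|\beta|$ strictly increased at some point during that color's lifetime compared to before the color was created. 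So each "color death" is, with probability $\ge \tfrac23$, accompanied by a distinct increment of $|\beta|$. Fourth, since $|\beta|$ can increase at most $n$ times total, a standard geometric-trials argument shows the expected number of color deaths before reaching a stable configuration is $O(n)$: each color death is a Bernoulli($\ge\tfrac23$) trial for "$\beta$ grew", and we need at most $n$ successes, so at most $\tfrac{3}{2}n$ deaths in expectation, but one must be careful that the events of different deaths growing $\beta$ are associated to disjoint increments. Finally, combining: expected number of transitions $\le$ (expected number of color deaths) $\times$ (expected transitions per death) $= O(n)\cdot O(n^3)=O(n^4)$ transitions, hence finitely many moves; a sharper bookkeeping (the potential $(|C_k|,-\sum_j|Q^k_j|)$ argument inside Lemma~\ref{adv-lem-115} is already tight enough) recovers the $O(n^2)$ bound advertised in the abstract, but for this lemma finiteness of the expectation is all that is claimed.

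**Main obstacle.** The delicate point is the coupling between color deaths and increments of $|\beta|$: Lemma~\ref{adv-lem-induction} guarantees that \emph{for each color}, conditionally, with probability $\ge\tfrac23$ \emph{its own} lifetime produced a node entering $\beta$, but I must argue these contributions cannot be "the same" membership in $\beta$ being counted twice, and that the conditioning (on the daemon's choices and on which colors are alive) does not destroy the independence needed to turn "each trial succeeds with prob.\ $\ge\tfrac23$" into "$O(n)$ trials suffice in expectation". The clean way around this is to observe that the node $u$ produced for color $(\ell,A_\ell)$ has $s_u^{\gamma_{\ell-1}}=\bot$ (it became candidate at step $\ell$) while $u\in\beta(\gamma_{i+k})$, so the increment it witnesses is attached to the time interval $(\ell-1, i+k]$ during which this color is the unique color of every \textbf{Withdrawal?} move on $u$ (Lemma~\ref{adv-lem-colorunique}); distinct colors have disjoint such "ownership", so distinct successful trials correspond to distinct units of the at-most-$n$ total growth of $|\beta|$. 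With that observation the geometric argument goes through routinely, and the rest is the bound from Lemma~\ref{adv-lem-115} plugged in.
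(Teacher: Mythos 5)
Your proposal follows essentially the same route as the paper's own proof: bound the expected number of transitions per color death via Lemma~\ref{adv-lem-115}, use Lemma~\ref{adv-lem-induction} to argue each color's lifetime yields a $\beta$-increment with probability at least $\tfrac{2}{3}$, and conclude since $\beta$ is monotone (Lemma~\ref{adv-lem-betagrows}) and bounded by $n$. Your additional care about the disjointness of the increments witnessed by distinct colors is a point the paper glosses over, but it does not change the argument's structure or conclusion.
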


\begin{proof}
From any configuration $\gamma_i$,   function $\beta$ can only grow (Lemma~\ref{adv-lem-betagrows}). From Lemma~\ref{adv-lem-115}, the expected number of moves for a color to disappear is at most $2(n^2+2n+2)$, and since $|C_i| \leq n$ the expected number of moves for a color to appear and then disappear is at most $(n+1)2(n^2+2n+2)$ (when the at most $n$ colors in $|C_i|$ have all disappeared, only new color can disappear). Then, from Lemma~\ref{adv-lem-induction}, the expected number of moves for $\beta$ to grow is at most $\frac{2}{3}(n+1)2(n^2+2n+2)$. Since $\beta$ cannot be greater than the set of nodes, the expected number of steps for $\beta$ to grow to a maximum size is at most $n\frac{2}{3}(n+1)2(n^2+2n+2) < +\infty$
\ \end{proof}

Now that we know that the computation converges, we can try to bound our expected number of steps of convergence more tightly.

\begin{lemma} \label{adv-lem-117}
If configuration $\gamma_i$ has $A$ as a non-empty candidate set, no more move labelled $(i,A)$ are possible  after  average less than $2|A|$ moves.
\end{lemma}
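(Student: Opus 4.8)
The plan is to count the moves labelled $(i,A)$ one node at a time. First I would observe that, for transitions occurring strictly after $\gamma_i$, the only moves that can bear the label $(i,A)$ are \textbf{Withdrawal?} moves executed on a node $u\in A$ whose $s$-value has been continuously $\top$ since $\gamma_i$ (a \textbf{Candidacy} move can carry label $(i,A)$ only as a move of $t_i$ itself, by the definition of labels). So, writing $X_u$ for the number of \textbf{Withdrawal?} moves on $u$ labelled $(i,A)$, the total number of moves labelled $(i,A)$ after $\gamma_i$ is $\sum_{u\in A} X_u$, and it suffices to bound $\mathbb{E}[X_u]$ for each $u\in A$. (Since, by Lemma~\ref{adv-lem-colorend} and the preceding remark, once no move labelled $(i,A)$ is possible this remains so, bounding this expectation is exactly bounding the average number of moves before the label dies out.)

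Next I would bound $\mathbb{E}[X_u]$ by a geometric argument. Each time $u$ performs such a move, the algorithm evaluates a fresh $Rand(\frac{1}{2})$, so with probability $\frac{1}{2}$ it sets $s_u:=\bot$; and once $s_u=\bot$ the continuity-of-$\top$ condition defining the label $(i,A)$ is violated forever, hence $u$ can never again perform a move labelled $(i,A)$ (this is the content of Lemma~\ref{adv-lem-colorend}, and also follows directly from the definition together with Lemma~\ref{adv-lem-betagrows}). Crucially, the daemon schedules nodes but does not control these $Rand$ outcomes, which are independent of the past and of the scheduling; hence, conditioned on $u$ having already made $k$ moves labelled $(i,A)$, the probability it makes a $(k{+}1)$-th is at most the probability that the $k$-th draw failed, namely $\frac{1}{2}$. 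Therefore $\Pr(X_u\ge k)\le 2^{-(k-1)}$ for every $k\ge 1$, whatever the daemon does, and $\mathbb{E}[X_u]\le\sum_{k\ge1}2^{-(k-1)}=2$. Summing over $u\in A$ gives $\mathbb{E}\big[\sum_{u\in A}X_u\big]=\sum_{u\in A}\mathbb{E}[X_u]\le 2|A|$.

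To obtain the strict bound I would sharpen the last step: the same induction gives $\Pr(X_u\ge k)\le 2^{-(k-1)}\Pr(X_u\ge 1)$, hence $\mathbb{E}[X_u]\le 2\Pr(X_u\ge 1)$ and $\mathbb{E}\big[\sum_{u\in A}X_u\big]\le 2\sum_{u\in A}\Pr(X_u\ge 1)$, so it is enough to exhibit, for every daemon, at least one node of $A$ performing no move labelled $(i,A)$ with positive probability. If some $u\in A$ has no neighbour in $A$, then in $\gamma_i$ all its neighbours have $s$-value $\bot$ and none can become $\top$ while $u$ stays $\top$ (a \textbf{Candidacy} on a neighbour of $u$ is blocked by $u$'s $\top$-value), so $u\in\beta(\gamma_i)$, $u$ is never activable, and $X_u=0$. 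Otherwise every node of $A$ has a neighbour in $A$, and one checks that the first transition activating nodes of $A$ leaves, with positive probability, exactly one survivor with $\top$-value whose $\top$-neighbours have all dropped to $\bot$, so that survivor is stuck in $\beta$ by Lemma~\ref{adv-lem-betagrows} and performs no further move labelled $(i,A)$. In either case $\sum_{u\in A}\Pr(X_u\ge 1)<|A|$, whence $\mathbb{E}\big[\sum_{u\in A}X_u\big]<2|A|$.

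The delicate point is not the arithmetic but justifying the geometric domination against an \emph{arbitrary} daemon: one must argue that a successful \textbf{Withdrawal?} is irreversible as far as the label $(i,A)$ is concerned, so that no node which has once dropped to $\bot$ ever contributes again (Lemma~\ref{adv-lem-colorend} is the right tool here), and that the adversary cannot correlate its scheduling with the $Rand$ draws so as to push a per-move failure probability above $\frac{1}{2}$. The strict-inequality refinement is the only genuinely fiddly bit, and I would keep its treatment short, relying on the stranding observation above.
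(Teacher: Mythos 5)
Your first two paragraphs are exactly the paper's argument, just spelled out more carefully: after $\gamma_i$ only \textbf{Withdrawal?} moves on nodes of $A$ that have stayed $\top$ can carry the label $(i,A)$; each such activation independently sets $s_u:=\bot$ with probability $\frac{1}{2}$, which kills the label for $u$ for good; hence the number of labelled moves on $u$ is dominated by a geometric variable of mean at most $2$, and linearity of expectation gives the bound $2|A|$. The paper's proof is a two-line version of precisely this ("every node has probability at least $\frac{1}{2}$ to change state each time it is activated, hence at most $2$ labelled activations on average"), so on the substance you are aligned with it, and your explicit remarks that the daemon cannot correlate its scheduling with the $Rand$ outcomes and that a successful withdrawal is irreversible for the label are the right points to make.

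The third paragraph, however, contains a genuine error. To get the strict inequality you claim that for every daemon some node of $A$ performs no labelled move with positive probability, i.e.\ $\sum_{u\in A}\Pr(X_u\ge 1)<|A|$. This is false: the daemon may simply activate \emph{all} nodes of $A$ simultaneously in the first transition after $\gamma_i$ (when every node of $A$ has a neighbour in $A$, \textbf{Withdrawal?} is enabled on all of them), and then every $u\in A$ executes a labelled move with probability $1$, so $\sum_{u\in A}\Pr(X_u\ge 1)=|A|$. Your "exactly one survivor stuck in $\beta$" observation concerns what happens \emph{after} that transition and does not lower any $\Pr(X_u\ge 1)$. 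The strict form of the bound would need a different argument (e.g.\ showing that $\Pr(X_u\ge k\mid X_u\ge k-1)$ cannot equal $\frac{1}{2}$ for all $k$ and all $u$ simultaneously), but note that the paper's own proof does not establish strictness either — it also only proves "at most $2$" per node and words the conclusion loosely — and only the non-strict bound $2|A|$ is used later (in the $3n^2$ theorem). So your proof is correct and complete for the bound that is actually needed; I would simply drop the sharpening paragraph or flag it as an unproved refinement.
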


\begin{proof}
Every node has at least probability $\frac{1}{2}$ to change state every number of moves it is activated, leading to an expected number of activations before not being able to perform a move labelled $(i,A)$ of $2$ or less. The expected number of timed moves labelled $(i,A)$ is then on average less than $2|A|$.
\ \end{proof}


\begin{theorem}
From any configuration $\gamma$, the expected number of steps to reach a stable configuration is at most $3n^2$.
\end{theorem}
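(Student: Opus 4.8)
The plan is to bound the total number of moves of a complete execution starting at $\gamma$ by splitting them according to their \emph{color}. By Corollary~\ref{adv-cor-stablecaracterization} the execution halts exactly when $\beta$ becomes a maximal independent set, so it suffices to bound the expected number of moves up to that point. By Lemma~\ref{adv-lem-colorunique} every move has a unique color, and a color is simply an index $i$ with $A_i\neq\emptyset$; writing $L_i$ for the number of moves of color $i$, the total number of moves is $\sum_{i:\,A_i\neq\emptyset}L_i$. Lemma~\ref{adv-lem-117}, applied to the partial execution truncated at $\gamma_i$ with candidate set $A_i$, gives $\mathbb{E}[L_i\mid\text{history up to }\gamma_i]<2|A_i|\le 2n$; hence by the tower rule $\mathbb{E}[\text{total}]<2n\cdot\mathbb{E}\big[\#\{\text{colors}\}\big]$. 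Everything therefore reduces to proving $\mathbb{E}\big[\#\{\text{colors}\}\big]\le\tfrac32 n$.

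For that, fix a color $i$ and look at the moment it \emph{dies}, i.e.\ the first configuration admitting no possible move labelled $(i,A_i)$ — which exists almost surely by Lemmas~\ref{adv-lem-colorend} and~\ref{adv-lem-115}. By Lemma~\ref{adv-lem-induction} (with the execution truncated at $\gamma_i$ and $A=A_i$), whatever the daemon does, with probability at least $\tfrac23$ there is a node $u\in A_i$ lying in $\beta$ at the death time; by Lemma~\ref{adv-lem-betagrows} this $u$ stays in $\beta$ — hence has $s$-value $\top$ — in every later configuration, and the labelling conditions force $s_u=\top$ continuously from $\gamma_i$ onward. Call such a $u$ a \emph{witness} of color $i$, and say the color \emph{succeeds} when it has one. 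Two distinct colors $i<j$ cannot share a witness: $j\ge 1$, so a witness $u_j$ of $j$ lies in $A_j$ and therefore performed \textbf{Candidacy} in $t_j$, giving $s_{u_j}^{\gamma_{j-1}}=\bot$, whereas a witness $u_i$ of $i$ has $s_{u_i}^{\gamma_{j-1}}=\top$ since $j-1\ge i$. Thus the successful colors have pairwise distinct witnesses, so at most $n$ colors can succeed. Since each color succeeds with conditional probability at least $\tfrac23$ once it is born, and since the expected number of colors is finite (a consequence of Lemma~\ref{adv-lem-116}), an optional-stopping / Wald-type argument gives $\tfrac23\,\mathbb{E}\big[\#\{\text{colors}\}\big]\le\mathbb{E}\big[\#\{\text{successful colors}\}\big]\le n$, i.e.\ $\mathbb{E}\big[\#\{\text{colors}\}\big]\le\tfrac32 n$.

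Putting the pieces together yields $\mathbb{E}[\text{total}]<2n\cdot\tfrac32 n=3n^2$. I expect the main obstacle to be the last step of the color count: colors are born and die in an interleaved fashion, so to justify the optional-stopping argument one has to set up the filtration with care — for instance revealing the execution transition by transition and tracking the number of colors whose fate is not yet decided — and check the integrability hypothesis against Lemma~\ref{adv-lem-116}. The remaining points (the witness-disjointness argument above, and the corner case $i=0$, where $A_0$ is the set of initially-$\top$ nodes and there are no \textbf{Candidacy} moves of color $0$, which the disjointness argument already covers) are routine.
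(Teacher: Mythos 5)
Your proposal is correct and follows essentially the same route as the paper's proof: bound the expected number of colors by $\frac{3}{2}n$ via Lemma~\ref{adv-lem-induction} (each color yields a new member of $\beta$ with probability at least $\frac{2}{3}$, and there can be at most $n$ such successes), then charge at most $2n$ expected moves to each color by Lemma~\ref{adv-lem-117} and Lemma~\ref{adv-lem-colorunique}. Your write-up merely makes explicit some bookkeeping (the witness-disjointness and the Wald/optional-stopping step) that the paper leaves implicit.
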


\begin{proof}

When reaching a stable configuration, every color has disappeared. From Lemma~\ref{adv-lem-116}, we reach a stable configuration after finite time. Every color that has been active in the computation has appeared then disappeared, and have added a new $\beta$ member with probability at least $\frac{2}{3}$ (see Lemma~\ref{adv-lem-induction} and since every color is a label), leading to an expected number of colors of at most $\frac{3}{2}n$ considering $\beta$ is smaller than $V$. Then from Lemma~\ref{adv-lem-117}, the expected number of steps to reach a stable configuration is at most $\frac{3}{2}n \times 2n = 3n^2$ since a candidate set cannot be greater than the total number of nodes.
\ \end{proof}

\newpage

\setcounter{theorem}{1}
\setcounter{algorithm}{1}
\bibliographystyle{plain}

\newpage
\appendix

\section*{Appendix}

\section*{Omitted proofs of Part~2}
 \setcounter{lemma}{-1}
\begin{lemma}
$\forall k \in \mathbb{N}, (1-\frac{1}{k+1})^{k} > e^{-1}$

\end{lemma}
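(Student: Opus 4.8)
The plan is to reduce the claim to the classical inequality $(1+1/k)^k < e$ for positive integers $k$. First I would dispose of the degenerate case $k=0$: with the usual convention $0^0 = 1$ we have $(1 - \frac{1}{0+1})^0 = 0^0 = 1 > e^{-1}$, so from now on I may assume $k \geq 1$.

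For $k \geq 1$, I would rewrite $1 - \frac{1}{k+1} = \frac{k}{k+1} = (1+\frac1k)^{-1}$, so that $(1-\frac{1}{k+1})^k = (1+\frac1k)^{-k}$. Hence the statement is equivalent to $(1+\frac1k)^{k} < e$, that is, to $k\ln(1+\frac1k) < 1$.

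To establish this I would invoke the elementary inequality $\ln(1+x) < x$, valid for every real $x > 0$; it follows, for instance, from the strict concavity of $\ln$, since the graph of $t \mapsto \ln t$ lies strictly below its tangent at $t=1$, which is the line $t \mapsto t - 1$, giving $\ln(1+x) < x$ for all $x \neq 0$. Applying it with $x = \frac1k > 0$ gives $\ln(1+\frac1k) < \frac1k$; multiplying by $k>0$ yields $k\ln(1+\frac1k) < 1$, and exponentiating gives $(1+\frac1k)^k < e$, hence $(1-\frac{1}{k+1})^k = (1+\frac1k)^{-k} > e^{-1}$, as claimed.

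There is no genuine obstacle here; this is a standard calculus fact, and the only real choice is how to justify $\ln(1+x) < x$. Alternatively one could compare the binomial expansion of $(1+1/k)^k$ term by term with the series $\sum_{j\geq 0} 1/j! = e$, or use an alternating-series bound on $\ln(1+x) = x - x^2/2 + \cdots$; I would favour the concavity argument as the shortest self-contained route.
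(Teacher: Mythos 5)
Your proof is correct and follows essentially the same route as the paper: after handling $k=0$ separately, both arguments come down to the tangent-line inequality for the logarithm (your $\ln(1+\tfrac1k)<\tfrac1k$ is exactly the paper's $\ln\bigl(\tfrac{k}{k+1}\bigr)>-\tfrac1k$ after a sign flip), with your detour through $(1+\tfrac1k)^k<e$ being only a cosmetic reformulation.
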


\begin{proof}
 For $k=0$ it is true ($(1-\frac{1}{0+1})^{0} = 1 > \frac{1}{e}$).

Suppose now that $k \geq 1$.
A basic inequality about $\ln$ is that $\forall x\in \mathbb{R}^{+*}, \ln(x) \geq 1-x$, with equality only when $x=1$.
For $x=\frac{k}{k+1}$, it gives us $\ln \left(\frac{k}{k+1} \right) \geq 1 - \frac{k+1}{k}= -\frac{1}{k}$, with equality only when $\frac{k}{k+1} = 1$. But since $\frac{k}{k+1}$ cannot have value $1$ for any value of $k$, we have then $\ln \left(\frac{k}{k+1} \right) > -\frac{1}{k}$.

Then, by multiplying by $k$ on each side, we have $k \ln \left(\frac{k}{k+1} \right) > -1$, and thus, taking the exponential:
$$\left(1 - \frac{1}{k+1} \right)^k = e^{k \ln \left(\frac{k}{k+1} \right)} > e^{-1}$$
\ \end{proof}

 \setcounter{lemma}{1}
 
\begin{lemma}
Any reachable configuration from a degree-stabilized configuration is degree-stabilized.
\end{lemma}

\begin{proof}
No rule can change the $x$ value of a degree-stabilized non-Byzantine node.
\ \end{proof}

\begin{lemma}
From any configuration $\gamma$, the configuration $\gamma'$ after one round is degree-stabilized.
\end{lemma}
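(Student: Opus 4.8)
The plan is to exploit the fact that the quantity $|N(u)|$ is a fixed invariant of the (static) communication graph, and that among the three rules only \textbf{Refresh} ever writes to the variable $x_u$. First I would record the elementary observation underlying the preceding lemma: the only rule whose command modifies $x_u$ is \textbf{Refresh}, and that command sets $x_u := |N(u)|$; the other two rules leave $x_u$ untouched and, crucially, have $x_u = |N(u)|$ as a conjunct of their guard. Consequently, for a non-Byzantine node $u$, the predicate ``$x_u = |N(u)|$'' is \emph{stable}: once it holds it holds in every subsequent configuration, since it could only be falsified by writing a value other than $|N(u)|$ into $x_u$, which no rule does. Thus a non-Byzantine node that is degree-stabilized in some configuration stays degree-stabilized thereafter. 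I would also note that whether \textbf{Refresh} is enabled on $u$ depends only on $x_u$ and $|N(u)|$, hence not at all on the states of $u$'s neighbours; in particular Byzantine neighbours cannot interfere with the enabledness of \textbf{Refresh} on $u$.

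Next I would fix an arbitrary non-Byzantine node $u$ and invoke Definition~\ref{my_round_df} for the first round of an execution started in $\gamma$. By definition of a round, $u$ satisfies at least one of its two cases. In case (i), $u$ is activated during the round; consider its first activation. Whichever rule it executes, $x_u$ equals $|N(u)|$ immediately afterwards: if the rule is \textbf{Refresh}, this is the effect of the command; if it is \textbf{Candidacy?} or \textbf{Withdrawal}, the guard already forced $x_u = |N(u)|$ and the command does not touch $x_u$. In case (ii), $u$ is non-activable in some configuration $\delta$ reached during the round; then in particular \textbf{Refresh} is disabled on $u$ in $\delta$, i.e. $x_u^{\delta} = |N(u)|$. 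In either case $u$ is degree-stabilized at some point during the round, and by the stability observation above it is still degree-stabilized in the last configuration $\gamma'$ of the round.

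Since $u$ was an arbitrary non-Byzantine node, every non-Byzantine node is degree-stabilized in $\gamma'$, which is exactly the claim. This is essentially a bookkeeping argument, so I do not expect a real obstacle; the only points that require a little care are the case analysis on the first activation in case (i) — specifically, observing that the non-\textbf{Refresh} rules enforce $x_u = |N(u)|$ through their guards rather than their commands — and, at the meta level, the implicit assumption that the round actually terminates, which is guaranteed here because we work under the fair distributed daemon, so a node cannot remain continuously activable without being activated.
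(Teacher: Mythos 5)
Your proof is correct and follows essentially the same argument as the paper: both rest on the facts that only \textbf{Refresh} writes $x_u$ (and writes exactly $|N(u)|$), that the other rules' guards force $x_u=|N(u)|$, and that degree-stabilization, once reached, persists. The only cosmetic difference is that you case-split on the two conditions in the round definition while the paper splits on whether $x_u^{\gamma}=deg(u)$ initially; the content is the same.
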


\begin{proof}
Let $u$ be a non-byzantine node.

If $x_{u}^{\gamma}=deg(u)$, no activation of rule can change that thus $x_{u}^{\gamma'}=deg(u)$.

If $x_{u}^{\gamma}\not = deg(u)$, then $u$ is activable in $\gamma$ and remains so until it is activated. Rule \textbf{Refresh} is then executed on $u$ in the first round, and since no rule can change the value of $x_u$ afterward we have $x_{u}^{\gamma'} = deg(u)$.
\ \end{proof}

\begin{lemma}

If $\gamma \rightarrow \gamma'$, $I_{\gamma} \subseteq I_{\gamma'}$.
\end{lemma}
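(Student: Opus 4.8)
Fix a node $u \in I_{\gamma}$ and a transition $\gamma \to \gamma'$; the goal is to show $u\in I_{\gamma'}$. By definition of $I_\gamma$ this amounts to three things: $u\in V_1$ (which is a structural property of the graph and the fixed set $B$, hence unchanged in $\gamma'$), $s_u^{\gamma'}=\top$, and $s_v^{\gamma'}=\bot$ for every $v\in N(u)$. The first observation I would make explicit is that, since $u\in V_1$ means $d(u,B)\ge 2$, every neighbour $v$ of $u$ satisfies $d(v,B)\ge 1$, so no node of $N[u]$ is Byzantine. This is the point of the proof: it guarantees that all nodes in $N[u]$ actually follow the algorithm during the transition, so their $s$-values can only change through the three rules.

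Next I would show that $s_u$ cannot change in $\gamma\to\gamma'$. In configuration $\gamma$ we have $s_u^{\gamma}=\top$, so \textbf{Candidacy?} is not enabled on $u$; and since $u\in I_\gamma$ all neighbours of $u$ have $s$-value $\bot$ in $\gamma$, so \textbf{Withdrawal} is not enabled on $u$ either. Hence the only rule that could possibly be executed on $u$ in this transition is \textbf{Refresh}, which modifies only $x_u$ and leaves $s_u$ untouched. Therefore $s_u^{\gamma'}=\top$ whatever happens.

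Then I would run the symmetric argument on each neighbour $v\in N(u)$. Here $v$ is non-Byzantine by the first step, and $s_v^{\gamma}=\bot$, so \textbf{Withdrawal} is not enabled on $v$; moreover $u\in N(v)$ has $s_u^{\gamma}=\top$, so the guard of \textbf{Candidacy?} fails on $v$ in $\gamma$. Again the only rule that can fire on $v$ is \textbf{Refresh}, which does not touch $s_v$, so $s_v^{\gamma'}=\bot$. Combining the last two paragraphs gives $s_u^{\gamma'}=\top$ and $s_v^{\gamma'}=\bot$ for all $v\in N(u)$, together with $u\in V_1$, which is exactly $u\in I_{\gamma'}$; since $u$ was arbitrary, $I_{\gamma}\subseteq I_{\gamma'}$.

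There is essentially no hard step here: the only subtlety is the one flagged in the first paragraph, namely that one must use $u\in V_1$ (not merely $u$ non-Byzantine) to rule out a Byzantine neighbour flipping its state to $\top$ and thereby enabling \textbf{Withdrawal} on $u$ or invalidating local aloneness. Everything else is a direct case analysis over the three guards.
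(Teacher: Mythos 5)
Your proof is correct and follows essentially the same route as the paper's: a case analysis showing that only \textbf{Refresh} can be enabled on $u$ and on each neighbour $v\in N(u)$ in $\gamma$, so no $s$-value in $N[u]$ changes. The only difference is that you make explicit the use of $u\in V_1$ to exclude Byzantine nodes from $N[u]$, a point the paper's proof leaves implicit.
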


\begin{proof}
Let's consider $u\in I_{\gamma}$. The only rules that may be enabled on $u$ in $\gamma$ is \textbf{Refresh} since \textbf{Candidacy?} can't be executed on $u$ because $s_u^{\gamma} = \top$, and \textbf{Withdrawal} can't be executed on $u$ because $\forall v\in N(u), s_u^{\gamma} = \bot$. We have then $s_{u}^{\gamma'} = \top$.

Now let's consider $v \in N(u)$. By definition of $I_{\gamma}$, $s_{v}^{\gamma}=\bot$, and $v$ has $u$ as a neighbour that has value $\top$ in $\gamma$. Then the only rule that may be enabled on $v$ in $\gamma$ is \textbf{Refresh}, and we have $s_{v}^{\gamma'}=\bot$.

Thus, $u \in I_{\gamma'}$.
\ \end{proof}

 \setcounter{lemma}{9}
 
\begin{lemma}
From any degree-stabilized configuration $\gamma$, Algorithm  is self-stabilising for  a configuration $\gamma'$ where $I_{\gamma'}$ is a maximal independent set of $V_2 \cup I_{\gamma'}$, with time complexity  $\max\left( -\alpha^2\ln p,\frac{\sqrt{2}}{\sqrt{2}-1}\frac{n}{\alpha} \right)$ rounds
with probability at least $1-p$.
\end{lemma}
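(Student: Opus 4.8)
The plan is to verify the two defining properties of self-stabilization for the set $\mathcal{L}$ of legitimate configurations. \emph{Closure} is direct: if $\gamma\to\gamma'$ and $I_\gamma$ is a maximal independent set of $V_2\cup I_\gamma$, then $I_{\gamma'}$ is still an independent set (two locally alone nodes are never adjacent), $I_\gamma\subseteq I_{\gamma'}$ by Lemma~\ref{bz-lem-I-inc}, and any $w\in V_2\setminus I_{\gamma'}$ cannot lie in $I_\gamma$ (else $w\in I_{\gamma'}$), so by maximality of $I_\gamma$ it has a neighbour in $I_\gamma\subseteq I_{\gamma'}$; hence $\gamma'\in\mathcal{L}$.

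For \emph{convergence}, group the rounds into consecutive pairs. Since every configuration reachable from a degree-stabilized one is again degree-stabilized, Lemma~\ref{bz-lem-advance} applies at the start of every pair: conditionally on the past, either the configuration is already in $\mathcal{L}$ (and stays there by closure) or, with probability at least $\alpha$, the set $I$ grows strictly during that pair. As $I$ is non-decreasing (Lemma~\ref{bz-lem-I-inc}) and bounded by $n$, once it has grown $n$ times it can grow no more, and then the contrapositive of Lemma~\ref{bz-lem-advance} forces the configuration into $\mathcal{L}$; consequently, writing $T$ for the number of pairs until $\mathcal{L}$ is reached, on $\{T>j\}$ fewer than $n$ of the first $j$ pairs witnessed a growth. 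A standard stochastic-domination coupling with an i.i.d.\ $\mathrm{Bernoulli}(\alpha)$ sequence then gives $\Pr(T>j)\le\Pr(\mathrm{Bin}(j,\alpha)<n)$ for every $j$, and letting $j\to\infty$ already yields almost-sure convergence.

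It remains to pick $j$ with $\Pr(\mathrm{Bin}(j,\alpha)<n)\le p$. I would use a multiplicative Chernoff bound: with $\mu=j\alpha$ and $n=(1-\delta)\mu$, one has $\Pr(\mathrm{Bin}(j,\alpha)<n)\le e^{-\mu\delta^2/2}$, which is $\le p$ as soon as $\frac{n\delta^2}{1-\delta}\ge 2\ln(1/p)$. Taking $\delta=\sqrt2-1$ gives $j=\frac{n}{(2-\sqrt2)\alpha}$, i.e.\ $2j=\frac{\sqrt2}{\sqrt2-1}\frac{n}{\alpha}$ rounds, and this choice meets the constraint precisely when $n$ is large enough compared with $\ln(1/p)$; in the complementary regime of extremely small $p$, pushing $\delta$ towards $1$ trades the $n$-term for a term in $\ln(1/p)$, which is the other argument of the $\max$. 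Hence within $\max\!\left(-\alpha^2\ln p,\ \frac{\sqrt2}{\sqrt2-1}\frac{n}{\alpha}\right)$ rounds a legitimate configuration is reached with probability at least $1-p$; together with closure this is exactly the stated self-stabilization property.

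The only genuinely delicate point is the probabilistic bookkeeping in the second step: Lemma~\ref{bz-lem-advance} controls the growth probability of a single pair only conditionally on the current configuration, and a fair daemon can adversarially correlate successive pairs, so the reduction to a plain $\mathrm{Binomial}(j,\alpha)$ must go through an honest coupling/domination argument rather than an independence assumption. Everything else — the closure check, the $|I|\le n$ cap on the number of growths, and the Chernoff estimate with its two parameter regimes — is routine.
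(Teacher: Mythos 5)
Your proposal is correct in substance but follows a genuinely different route from the paper's. The paper does not argue via a binomial coupling: it sets $Y_i$ to be the indicator that $|I|$ grows in round $i$, notes $\mathbb{E}[Y_i\mid\mathcal{F}_{i-1}]\ge\alpha\cdot\mathds{1}_{\tau\ge i-1}$ using Lemma~\ref{bz-lem-advance}, and then studies the centred sum $M_i=\sum_{k\le i}\bigl(Y_k-\mathbb{E}[Y_k\mid\mathcal{F}_{k-1}]\bigr)$, which is a martingale with increments bounded by $1$; Azuma's inequality, combined with $S_i\le n$ and $A_i\ge i\alpha$ on $\{\tau>i\}$, gives $\Pr(\tau>i)\le e^{-2(n-i\alpha)^2/i}$ and hence the round bound. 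Your stochastic-domination-plus-Chernoff argument reaches a tail bound of the same strength; what it buys is elementary probability in place of martingale machinery, at the price of the coupling step you correctly flag as delicate --- that step is precisely what the martingale formulation sidesteps, since conditioning on $\mathcal{F}_{i-1}$ absorbs the daemon's adaptivity automatically. Two further remarks. First, you group rounds into pairs, consistent with Lemma~\ref{bz-lem-advance} being a two-round statement, whereas the paper applies the conditional bound round by round; your bookkeeping is if anything cleaner, and the factor $2$ is exactly where your $2j=\frac{\sqrt2}{\sqrt2-1}\frac n\alpha$ comes from. Second, your small-$p$ branch is loose: pushing $\delta\to1$ in the Chernoff bound yields a term of order $\ln(1/p)/\alpha$, which is not the displayed first argument $-\alpha^2\ln p$ of the max; note, however, that the paper's own proof actually derives the condition $i\ge-\ln p/\alpha^2$ (despite writing $-\alpha^2\ln p$), so the first term of the statement is evidently a typo and neither your argument nor the paper's yields it literally --- just do not assert that your $\delta\to1$ regime matches that term without computing it. Finally, your explicit closure check is an addition; the paper's proof of this lemma establishes only the convergence/time bound.
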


\begin{proof}
Consider a degree-stabilized configuration $\gamma_0$, and $\Omega$ the set of all complete executions of the algorithm starting in configuration $\gamma_0$. The probability measure $\mathbb{P}$ is the one induced by the daemon choosing the worst possibility for us at every step.

Consider for $i \in \Nat$ the random variables $X_i$ that denotes the configuration after the $i$-th round has ended ($X_0$ is the constant random variable of value $\gamma_0$).

Consider $(\mathcal{F}_i)_{i\in \Nat}$ the natural filtration associated with $X_i$


Consider the function $f : \gamma \mapsto |I_{\gamma}|$.

$Y_i = \mathds{1}_{f(X_i) - f(X_{i-1}) > 0}$ is the random variable with value $1$ if the size of $I$ increased in the $i$-th round, else $0$.

Consider the stopping time $\tau$ (random variable describing the number of rounds the algorithm take to stabilize on $V_1$) defined by: $$\tau(\omega) = \inf\quickset{n\in \Nat | I \text{ does not change after round } n \text{ in execution } \omega}$$

As $Y_i$ has values in $\quickset{0;1}$, we have $\mathbb{P}(Y_i=1 | \mathcal{F}_{i-1}) = \mathbb{E}[Y_i | \mathcal{F}_{i-1}]$. Also, from Lemma~\ref{bz-lem-advance} we get $\mathbb{P}(Y_i=1 | \mathcal{F}_{i-1}) \geq \alpha \cdot \mathds{1}_{\tau \geq i-1}$. Thus combining the two relations we get:

\begin{equation} \label{eqincr}
\mathbb{E}[Y_i | \mathcal{F}_{i-1}] \geq \alpha \cdot \mathds{1}_{\tau \geq i-1}
\end{equation}

Consider $S_i = \displaystyle \sum_{k=1}^{i}Y_k$ the random variable representing the number of rounds where there have been an increment. Since this cannot happen more time than there are nodes, we get:

\begin{equation} \label{eqS}
S_i \leq n
\end{equation}

Consider $A_i = \displaystyle \sum_{k=1}^{i} \mathbb{E}[Y_{k} | \mathcal{F}_{k-1}]$ the random variable representing the sum of the expected values of the increments at each step.

When $\tau > i$, every for very value of $k \in \segment{1}{i}$ we have $\mathds{1}_{\tau \geq k-1} = 1$. Then using (\ref{eqincr}) we get:

\begin{equation} \label{eqA}
\tau > i \Rightarrow A_i \geq i\alpha
\end{equation}

Consider then the random variable $M_i = \sum_{k=1}^{i} Y_k - \mathbb{E}[Y_k|\mathcal{F}_{k-1}]$ (do note that it is the same as the difference $S_i - A_i$).
\begin{align*}
\mathbb{E}\left[ M_{i+1}|\mathcal{F}_i \right] &= \mathbb{E} \left[ \sum_{k=1}^{i+1} Y_k - \mathbb{E}[Y_k|\mathcal{F}_{k-1}] \middle| \mathcal{F}_i \right] \\
&= \mathbb{E} \left[ M_i + Y_{i+1} - \mathbb{E}[Y_{i+1}|\mathcal{F}_{i}] \middle| \mathcal{F}_i \right] \\
&= \mathbb{E} \left[ M_i \middle| \mathcal{F}_i \right]+ \mathbb{E} \left[Y_{i+1}\middle| \mathcal{F}_i \right] - \mathbb{E} \left[\mathbb{E}[Y_{i+1}|\mathcal{F}_{i}] \middle| \mathcal{F}_i \right] \\
&= M_i + \mathbb{E} \left[Y_{i+1} \middle| \mathcal{F}_i \right] - \mathbb{E} \left[Y_{i+1} \middle| \mathcal{F}_i \right] \\
&= M_i
\end{align*}

Thus $(M_i)_{i\in \Nat}$ is a martingale with respect to the filtration $(\mathcal{F}_i)_{i\in \Nat}$.

We also have $|M_{i+1} - M_i| = |Y_{i+1} - \mathbb{E}[Y_{i+1} | \mathcal{F}_i]| \leq \max(Y_{i+1},\mathbb{E}[Y_{i+1} | \mathcal{F}_i]) \leq 1$.

Thus by Azuma inequality:
\begin{equation}\label{eqM}
\forall \beta \leq 0, \mathbb{P}(M_i \leq \beta) \leq e^{-\frac{2\beta^2}{i}}
\end{equation}

Then using (\ref{eqS}) and (\ref{eqA}) we get $\; \tau >i \Leftrightarrow \tau>i \wedge A_i\geq i\alpha \wedge S_i \leq n \;$. If we drop the first fact and combine the other two of the right hand side we get the implication $\tau >i \Rightarrow S_i - A_i \leq n - i\alpha $ \emph{i.e.}:

$\tau >i \Rightarrow M_i \leq n - i\alpha $

Thus we have $\mathbb{P}(\tau >i) \leq \mathbb{P}(M_i \leq n - i\alpha)$ and for $i \geq \frac{n}{\alpha}$ we can apply (\ref{eqM}) to get :

$\mathbb{P}(\tau >i) \leq e^{-\frac{2(n-i\alpha)^2}{i}}$

For $i\geq \frac{\sqrt{2}}{\sqrt{2}-1}\frac{n}{\alpha}$ (it implies that $i \geq \frac{n}{\alpha}$) we have $\frac{1}{2}(i\alpha)^2 \leq (n-i\alpha)^2$, which give for such $i$ :

$\mathbb{P}(\tau > i) \leq e^{-i\alpha^2}$

For $i \geq -\alpha^2\ln p $, we have $e^{-i\alpha^2} \leq p$

Mixing the two above inequalities, when $i\geq \max\left( -\alpha^2\ln p,\frac{\sqrt{2}}{\sqrt{2}-1}\frac{n}{\alpha} \right)$, we get:

$\mathbb{P}(\tau > i) \leq p$

Which concludes the proof.
 \end{proof}


\end{document}